\newcommand{\str}[1]{\ensuremath{\mathrm #1}}
\newcommand{\cls}[1]{\ensuremath{\mathcal #1}}
\renewcommand{\P}{\textup{P}}
\newcommand{\NP}{\textup{NP}}
\newcommand{\FPT}{\textup{FPT}}
\newcommand{\W}[1]{\text{$\textup{W}[#1]$}}
\newcommand{\fpt}{\textup{fpt}}
\newlength{\probwidth}
\newcommand{\npprob}[5][7]{
\begin{center}
  \normalfont\fbox{

\addtolength{\probwidth}{#1cm}\parbox{\probwidth}{\textsc{#2}\\\hspace*{1.5em}
    \begin{tabular}[t]{
     rp{#1cm}}\textit{Instance:}&#3. \\
     \textit{Parameter:}&#4. \\
     \textit{Problem:}&#5
    \end{tabular}}}
\end{center}}
\newtheorem{theorem}{Theorem}[section]
\newtheorem{lemma}[theorem]{Lemma}
\newtheorem{corollary}[theorem]{Corollary}
\newtheorem{proposition}[theorem]{Proposition}
\theoremstyle{definition}
\newtheorem{definition}[theorem]{Definition}
\newtheorem{remark}[theorem]{Remark}
\newtheorem{example}[theorem]{Example}
\newcommand{\Emb}{\textsc{Emb}}
\newcommand{\pEmb}{\ensuremath{p\textsc{-Emb}}}
\newcommand{\Hom}{\textsc{Hom}}
\newcommand{\pHom}{\ensuremath{p\textsc{-Hom}}}
\newcommand{\pColEmb}{\ensuremath{p\textsc{-Col-Emb}}}
\newcommand{\tw}{\textup{tw}}
\newcommand{\core}{\textup{core}}
\newcommand{\dist}{\textup{dist}}
\newcommand{\vv}{\textbf{\emph{v}}}
\newcommand{\id}{\textup{id}}
\newcommand{\bigmid}{\;\big|\;}
\newcommand{\Bigmid}{\;\Big|\;}
\newcommand{\floor}[1]{\left\lfloor#1\right\rfloor}
\begin{document}

\title{The Hardness of Embedding Grids and Walls}

\author{Yijia Chen\\\normalsize School of Computer Science\\
\normalsize Fudan University\\
\normalsize yijiachen@fudan.edu.cn\\
\and
Martin Grohe\\\normalsize Lehrstuhl f\"ur Informatik 7 \\
\normalsize RWTH Aachen University\\
\normalsize grohe@informatik.rwth-aachen.de \\
\and
Bingkai Lin\\\normalsize JST, ERATO, Kawarabayashi Large Graph Project\\
\normalsize National Institute of Informatics\\
\normalsize lin@nii.ac.jp}

\date{}
\maketitle
           % typeset the title of the contribution

\begin{abstract}
The dichotomy conjecture for the parameterized embedding problem states
that the problem of deciding whether a given graph $G$ from some class
$\cls K$ of ``pattern graphs'' can be embedded into a given graph $H$
(that is, is isomorphic to a subgraph of $H$) is fixed-parameter tractable
if $\cls K$ is a class of graphs of bounded tree width and $\W1$-complete
otherwise.

Towards this conjecture, we prove that the embedding problem is
$\W1$-complete if $\cls K$ is the class of all grids or the class of all
walls.
\end{abstract}

\section{Introduction}
The \emph{graph embedding} a.k.a \emph{subgraph isomorphism} problem is a
fundamental algorithmic problem, which, as a fairly general pattern matching
problem, has numerous applications. It has received considerable attention
since the early days of complexity theory (see,
e.g.,\cite{epp99a,garjoh79,mat78,ull76}). Clearly, the embedding problem is
NP-complete, because the clique problem and the Hamiltonian path or cycle
problem are special cases. The embedding problem and special cases like the
clique problem or the longest path problem have also played an important
role in the development of fixed-parameter algorithms and parameterized
complexity theory (see \cite{marpil13}). The problem is complete for the
class $\W1$ when parameterized by the size of the pattern graph; in fact,
the special case of the clique problem may be regarded as the paradigmatic
$\W1$-complete problem \cite{dowfel99,dowfel95a}. On the other hand,
interesting special cases such as the longest path and longest cycle
problems are fixed-parameter tractable \cite{aloyuszwi95,mon85}. This
immediately raises the question for which pattern graphs the problem is
fixed-parameter tractable.

Let us make this precise. An \emph{embedding} from a graph $G$ to a graph
$H$ is an injective mapping $f:V(G)\to V(H)$ such that for all edges $vw\in
E(G)$ we have $f(v)f(w)\in E(H)$. For each class $\cls K$ of graphs, we
consider the following parameterized problem.
\npprob{$\pEmb(\cls K)$}{Graphs $G$ (the \emph{pattern graph}) and
  $H$ (the \emph{target graph}), where $G\in \cls
K$}{$|G|$}{Decide whether there is an embedding from $G$ to $H$.}
Plehn and Voigt \cite{plevoi90} proved that $\pEmb(\cls K)$ is
fixed-parameter tractable if $\cls K$ is a class of graphs of bounded tree
width. No tractable classes $\cls K$ of unbounded treewidth are known. The
conjecture, which may have been stated in \cite{gro07} first, is that there
are no such classes.

\medskip
\noindent \textbf{Dichotomy Conjecture.}
%\begin{dc}
  $\pEmb(\cls K)$ is fixed-parameter tractable if and only if $\cls K$
  is a class of bounded treewidth and $\W1$-complete
  otherwise.\footnote{There is a minor issue here regarding the
    computability of the class $\cls K$: if we want to include
    classes $\cls K$ that are not recursively enumerable here then we
    need the nonuniform notion of fixed-parameter tractability
    \cite{flugro06}.}
%\end{dc}

\medskip
Progress towards this conjecture has been slow. Even the innocent-looking
case where $\cls K$ is the class of complete bipartite graphs had been open
for a long time; only recently the third author of this paper proved that it
is $\W1$-complete~\cite{lin15}.

Before we present our contribution, let us discuss why we expect a dichotomy
in the first place. The main reason is that similar dichotomies hold for
closely related problems. The first author, jointly with Thurley and
Weyer~\cite{chethuwey08}, proved the version of the conjecture for the
\emph{strong embedding}, or \emph{induced subgraph isomorphism} problem.
Building on earlier work by Dalmau, Kolaitis and Vardi~\cite{dalkolvar02} as
well as joint work with Schwentick and Segoufin~\cite{groschweseg01}, the
second author~\cite{gro07} proved that the parameterized homomorphism
problem $\pHom(\cls K)$ for pattern graphs from a class $\cls K$ is
fixed-parameter tractable if and only if the cores of the graphs in $\cls K$
have bounded tree width and $\W1$-complete otherwise.

Let us remark that there is no P vs.\ NP dichotomy for the classical
(unparameterized) embedding problem; this can easily be proved along the
lines of corresponding results for the homomorphism and strong embedding
problems using techniques from~\cite{bodgro08,chethuwey08,gro07}.

\subsection*{Our contribution}
We make further progress towards a the Dichotomy Conjecture by establishing
hardness for two more natural graph classes of unbounded tree width.
\begin{theorem}\label{thm:main}
$\pEmb(\cls K)$ is \W 1-hard for the classes $\cls K$ of all grids and all
walls.
\end{theorem}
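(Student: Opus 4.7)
The plan is to prove W[1]-hardness by a parameterized reduction from a standard W[1]-complete problem. The most natural source is \emph{Partitioned (Multicolored) Clique}: given a graph $G$ partitioned into color classes $V_1,\ldots,V_k$, decide whether $G$ contains a $k$-clique picking exactly one vertex from each class. Lin's recent W[1]-hardness for bicliques~\cite{lin15} is also an attractive starting point, since its gadget technology is tailored to embedding problems with rigid bipartite patterns. Given the source instance, I would construct a pattern grid of side length $N$ polynomial in $k$, and a target graph $H$ of size polynomial in $|G|$, so that $G$ admits a multicolored $k$-clique if and only if the $N\times N$ grid embeds into $H$. For walls, the same reduction would be adapted, since walls are hexagonal analogues of grids sharing the relevant rigidity properties.

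The heart of the construction would be to align the grid's own row/column structure with two families of gadgets in $H$. One family of \emph{vertex-selection gadgets} would sit in designated columns and, when traversed by a vertical path of length $N$, would force the commitment to a single vertex $v_i\in V_i$ for each color class. A second family of \emph{edge-verification gadgets} would sit between the column gadgets and admit the horizontal paths of the grid only when the selected pairs $(v_i,v_j)$ are edges of $G$. The key structural feature that makes such gadgets realizable is the \textbf{rigidity} of the grid pattern: all interior vertices have degree $4$, the boundary vertices have degree $3$, the four corners are the unique degree-$2$ vertices, and the interior faces are $4$-cycles. These invariants should be used in $H$ to prevent edges or paths from being used in unintended ways.

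The main obstacle will be \textbf{soundness}, that is, showing that \emph{every} embedding of the grid into $H$ actually corresponds to a multicolored clique rather than to some folded, twisted or shortcut embedding. The plan here is threefold: first, make the boundary cycle of the pattern grid much longer than any competing cycle produced by the gadgets, so that the boundary of the pattern must map to a designated boundary in $H$; second, exploit the many vertex-disjoint short faces of the grid together with degree constraints to force the image of each row (resp.\ column) of the grid to stay inside the intended horizontal (resp.\ vertical) band of $H$; third, use a color-coding or indexing scheme so that once the boundary and bands are fixed, the only surviving flexibility is the choice of which vertex each vertex-selection gadget commits to. This rigidity-plus-boundary argument is the combinatorially delicate core of the proof.

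Finally, the extension to walls should be essentially mechanical. A wall is obtained from a grid by a simple subdivision-and-deletion pattern and is $3$-regular in its interior with hexagonal faces. The vertex-selection and edge-verification gadgets can either be redesigned hexagonally, or one can subdivide the grid reduction and verify that the same soundness argument, replacing $4$-cycles by $6$-cycles and degree $4$ by degree $3$, still rules out all parasitic embeddings. Thus both cases of Theorem~\ref{thm:main} would follow from a single gadget framework tuned to the rigidity of the pattern class.
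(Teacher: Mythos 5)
Your proposal outlines the approach the authors of this paper report having tried first and abandoned as a ``red herring'': a direct gadget reduction from \textsc{Multicolored Clique} (or by adapting Lin's biclique machinery), with vertex-selection columns, edge-verification cells, and a soundness argument built on degree constraints, boundary cycle length, and short faces. What you have is a plan, not a proof, and the step you yourself flag as ``the combinatorially delicate core'' --- ruling out all folded, twisted, or shortcut embeddings of a grid or wall into a bespoke target $H$ --- is precisely where the plan has no content. The grid has many nontrivial homomorphic self-images (e.g.\ wrapping onto a single edge, or onto a $4$-cycle), and an embedding into an ad hoc gadget graph can in principle reuse the same selection gadget for several pattern columns or traverse gadgets in an unintended orientation. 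Your sketch gestures at boundary-length and face-counting arguments but never turns them into a closed combinatorial claim against an arbitrary target; the paper's own account strongly suggests that, despite a serious attempt, such a direct argument did not close. Also note the paper explicitly observes that the class of complete bipartite graphs is \emph{not} rich, so Lin's technology is genuinely orthogonal rather than a foundation to build on.

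The proof the paper actually gives is structurally different and worth contrasting. Rather than reduce from clique directly, it invokes Grohe's theorem that $\pHom(\cls K)$ is $\W 1$-hard whenever the cores of $\cls K$ have unbounded treewidth, reduces this to a colored embedding problem $\pColEmb$, and then constructs a generic product graph $P(G,\mathcal S,H,\chi)$ from $G$, a target $H$, and a \emph{skeleton} $\mathcal S=(F,D)$ of $G$, where $F$ is a \emph{frame}: a set of vertices whose presence in the image of any endomorphism forces that endomorphism to be an automorphism. The crucial soundness notion is \emph{rigidity} of the skeleton --- every embedding of $G$ into the product must hit all the distinguished copies of $F$ --- and this is proven once against the structured product $P$, whose definition tightly controls the preimages of short cycles, not against an arbitrary gadget graph. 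For grids the rigidity argument does indeed use the invariants you list (distances to the four corners to pin the frame, and the disjoint $4$-cycles of the grid to locate the center vertices), and for walls it uses $5$-cycles in place of $4$-cycles; but it works because the product graph has so little room to move, which is exactly what a hand-designed $H$ would lack. That factorization through $\pHom$ and the reusable product construction is the missing idea in your proposal.
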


See Section~\ref{sec:pre} and in particular Figure~\ref{fig:gridwall} for
the definition of grids and walls. Grids and walls are interesting in this
context, because they are often viewed as the ``generic'' graphs of
unbounded tree width: by Robertson and Seymour's~\cite{gm05} Excluded Grid
Theorem, a class $\cls K$ of graphs has unbounded tree width if and only if
all grids (and also all walls) appear as minors of the graphs in $\cls K$.

Just like the hardness result of the embedding problem for the class of all
complete bipartite graphs \cite{lin15}, our theorem looks simple and
straightforward, but it is not. In fact, we started to work on this right
after the hardness for complete bipartite graphs was proved, hoping that we
could adapt the techniques to grids. This turned out to be a red herring.
The proof we eventually found is closer to the proof of the dichotomy result
for the homomorphism problem~\cite{gro07} (also see~\cite{chemul14}). The
main part of our proof is fairly generic and has nothing to do with grids or
walls. We prove a general hardness result (Theorem~\ref{thm:richW1}) for
$\pEmb(\cls K)$ under the technical condition that the graphs in $\cls K$
have ``rigid skeletons'' and unbounded tree width even after the removal of
these skeletons. We think that this theorem may have applications beyond
grids and walls.

\subsection*{Organization of the paper}
We introduce necessary notions and notations in Section~\ref{sec:pre}. For
some technical reason, we need a colored version $\pColEmb$ of $\pEmb$. In
Section~\ref{sec:colemb} the problem $\pColEmb$ is shown to be \W 1-hard on
any class of graphs of unbounded treewidth. Then in
Section~\ref{sec:frames}, we set up the general framework. In particular, we
explain the notion of skeletons, and prove the general hardness theorem. The
classes of grids and walls are shown to satisfy the assumptions of this
theorem in Section~\ref{sec:richnessgridwall}. In the final
Section~\ref{sec:con} we conclude with some open problems.

\section{Preliminaries}\label{sec:pre}

A graph $G$ consists of a finite set of vertices $V(G)$ and a set of edges
$E(G)\subseteq \binom{V}{2}$. Every edge is denoted interchangeably by $\{u,
v\}$ or $uv$.  We assume familiarity with the basic notions and terminology
from graph theory, e.g., degree, path, cycle etc, which can be found in
e.g.,~\cite{die12}. By $\dist^G(u,v)$ we denote the distance between
vertices $u$ and $v$ in a graph $G$, i.e., the length of a shortest path
between $u$ and $v$.

Let $s,t\in \mathbb N$. A \emph{$(s\times t)$-grid $\str G_{s, t}$} has
\begin{eqnarray*}
V(\str G_{s, t})= [s]\times [t] & \text{and} &
E(\str G_{s, t})= \big\{(i,j)(i',j')\bigmid |i-i'|+ |j-j'|=1\big\}.
\end{eqnarray*}
And the \emph{wall $\str W_{s, t}$} of width $s$ and height $t$ is defined
by
\begin{align*}
V(\str W_{s, t}) & = \big\{v_{i,j} \bigmid \text{$i\in [s+1]$ and $j\in [t]$}\big\}
 \cup \big\{v_{i,t+1} \bigmid \text{$i\in [s+1]$ and odd $t$}\big\} \\
 & \quad \cup \big\{u_{i,j} \bigmid \text{$i\in [s+1]$ and $2\le j\le t$]}\big\}
  \cup \big\{u_{i,t+1} \bigmid \text{$i\in [s+1]$ and even $t$}\big\}, \\
E(\str W_{s, t}) & = \big\{v_{i,1}v_{i+1,1}\bigmid i\in [s]\big\} \\
 & \quad \cup \big\{v_{i,t+1}v_{i+1,t+1}\bigmid \text{$i\in [s]$ and odd $t$}\big\} \\
 & \quad \cup \big\{u_{i,t+1}u_{i+1,t+1}\bigmid \text{$i\in [s]$ and even $t$}\big\} \\
 & \quad \cup \big\{v_{i,j}u_{i,j}, u_{i,j}v_{i+1,j} \bigmid \text{$i\in [s]$ and $2\le j\le t$}\big\} \\
 & \quad \cup \big\{v_{i,j}v_{i,j+1} \bigmid \text{$i\in [s+1]$ and odd $j\in [t]$}\big\} \\
 & \quad \cup \big\{u_{i,j}u_{i,j+1} \bigmid \text{$i\in [s+1]$ and even $j\in [t]$}\big\}.
\end{align*}
Figure~\ref{fig:gridwall} gives two examples.
\begin{figure}%[h!]
\centering
\begin{tikzpicture}[
  scale=0.60,
  vertex/.style={circle,inner sep=0pt,minimum size=1.5mm,fill=black},
  ]

  \begin{scope}
    \foreach \x in {1,2,...,7}
        \foreach \y in {1,2,3,4}
           \node[vertex] (\x\y) at (\x,\y) {};

   \path (11) node[below] {\footnotesize(1,1)}
             (71) node[below] {\footnotesize(7,1)}
             (14) node[above] {\footnotesize(1,4)}
             (74) node[above] {\footnotesize(7,4)};

% the grid
 \draw (1,1)--(7,1);
 \draw (1,2)--(7,2);
 \draw (1,3)--(7,3);
 \draw (1,4)--(7,4);

 \draw (1,1)--(1,4);
 \draw (2,1)--(2,4);
 \draw (3,1)--(3,4);
 \draw (4,1)--(4,4);
 \draw (5,1)--(5,4);
 \draw (6,1)--(6,4);
 \draw (7,1)--(7,4);

 \draw (4,0) node{{\normalsize (a)}}; % $(7\times 4)$-grid}};
\end{scope}
%%%%%%%%%%%%%%

\begin{scope}[xshift=10cm,yshift=1cm]
  \foreach \x in {0,1,...,6}
      \foreach \y in {0,1}
      {
        \node[vertex] (v\x\y) at ($(1.6*\x,1.6*\y)$) {};
        \node[vertex] (vv\x\y) at ($(1.6*\x,1.6*\y)+(0,0.8)$) {};
        \node[vertex] (u\x\y) at ($(1.6*\x,1.6*\y)+(0.8,0.8)$) {};
        \node[vertex] (uu\x\y) at ($(1.6*\x,1.6*\y)+(0.8,1.6)$) {};
        \draw (v\x\y) to (vv\x\y);
        \draw (u\x\y) to (uu\x\y);
      }
      \draw (v00) to (v60);
      \draw (vv00) to (u60);
      \draw (v01) to (uu60);
      \draw (vv01) to (u61);
      \draw (uu01) to (uu61);

      \path (v00) node[left] {\small$v_{1,1}$};
      \path (v60) node[right] {\small$v_{7,1}$};
      \path (vv00) node[left] {\small$v_{1,2}$};
      \path (u00) node[below] {\small$u_{1,2}$};
      \path (u60) node[right] {\small$u_{7,2}$};
      \path (vv01) node[left] {\small$v_{1,4}$};
      \path (u61) node[right] {\small$u_{7,4}$};
      \path (uu01) node[left] {\small$u_{1,5}$};
     \path (uu61) node[right] {\small$u_{7,5}$};

      \path (5.2,-1) node{(b)}; % $(7\times 4)$-grid}};
\end{scope}

% the wall

%  \draw (11,1)--(17,1);

%  \draw (11,1)--(11,1.5);
%  \draw (12,1)--(12,1.5);
%  \draw (13,1)--(13,1.5);
%  \draw (14,1)--(14,1.5);
%  \draw (15,1)--(15,1.5);
%  \draw (16,1)--(16,1.5);
%  \draw (17,1)--(17,1.5);
% %%

%  \draw (11,1.5)--(17.5,1.5);

%  \draw (11.5,1.5)--(11.5,2);
%  \draw (12.5,1.5)--(12.5,2);
%  \draw (13.5,1.5)--(13.5,2);
%  \draw (14.5,1.5)--(14.5,2);
%  \draw (15.5,1.5)--(15.5,2);
%  \draw (16.5,1.5)--(16.5,2);
%  \draw (17.5,1.5)--(17.5,2);
% %%

%  \draw (11,2)--(17.5,2);

%  \draw (11,2)--(11,2.5);
%  \draw (12,2)--(12,2.5);
%  \draw (13,2)--(13,2.5);
%  \draw (14,2)--(14,2.5);
%  \draw (15,2)--(15,2.5);
%  \draw (16,2)--(16,2.5);
%  \draw (17,2)--(17,2.5);
% %%

%  \draw (11,2.5)--(17.5,2.5);

%  \draw (11.5,2.5)--(11.5,3);
%  \draw (12.5,2.5)--(12.5,3);
%  \draw (13.5,2.5)--(13.5,3);
%  \draw (14.5,2.5)--(14.5,3);
%  \draw (15.5,2.5)--(15.5,3);
%  \draw (16.5,2.5)--(16.5,3);
%  \draw (17.5,2.5)--(17.5,3);
% %%

%  \draw (11,3)--(17.5,3);

%  \draw (11,3)--(11,3.5);
%  \draw (12,3)--(12,3.5);
%  \draw (13,3)--(13,3.5);
%  \draw (14,3)--(14,3.5);
%  \draw (15,3)--(15,3.5);
%  \draw (16,3)--(16,3.5);
%  \draw (17,3)--(17,3.5);
% %%

%  \draw (11,3.5)--(17.5,3.5);

%  \draw (11.5,3.5)--(11.5,4);
%  \draw (12.5,3.5)--(12.5,4);
%  \draw (13.5,3.5)--(13.5,4);
%  \draw (14.5,3.5)--(14.5,4);
%  \draw (15.5,3.5)--(15.5,4);
%  \draw (16.5,3.5)--(16.5,4);
%  \draw (17.5,3.5)--(17.5,4);
% %%

%  \draw (11.5,4)--(17.5,4);

%  \draw (14,0) node{{\normalsize (b)}}; % $(6\times 6)$-wall.}};

\end{tikzpicture}
\label{fig:gridwall}
\caption{\ (a) A $(7\times 4)$-grid. \qquad
(b) A $(6\times 4)$-wall.}
\end{figure}
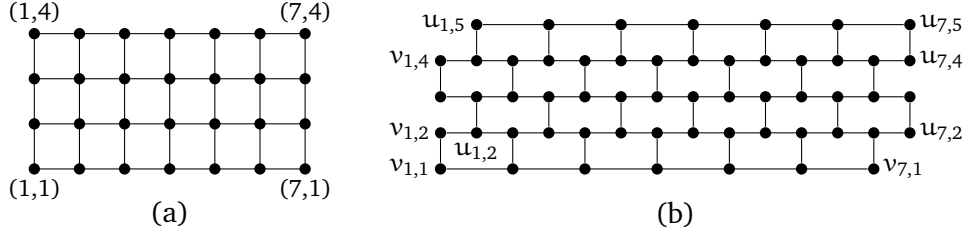

Let $G$ and $H$ be two graphs. A \emph{homomorphism} from $G$ to $H$ is a
mapping $h: V(G)\to V(H)$ such that for every edge $uv\in E(G)$ we have
$h(u)h(v)\in E(H)$. If in addition $h$ is injective, then $h$ is an
\emph{embedding} from $G$ to $H$. A homomorphism from $G$ to itself is also
called an \emph{endomorphism}, and similarly an embedding from $G$ to itself
is an \emph{automorphism}.

A \emph{subgraph} $G'$ of $G$ satisfies $V(G')\subseteq V(G)$ and
$E(G')\subseteq E(G)$. We say that $G'$ is a \emph{core} of $G$ if there is
a homomorphism from $G$ to $G'$, and if there is no homomorphism from $G$ to
any proper subgraph of $G'$. It is well known that all cores of $G$ are
isomorphic, hence we can speak of \emph{the} core of $G$, written
$\core(G)$.

Sometimes, we also consider \emph{colored} graphs in which a graph $G$ is
equipped with a coloring $\chi: V(G)\to C$ which maps every vertex to a
color in the color set $C$. We leave it to the reader to generalize the
notions of homomorphism, embedding, and core from graphs to colored graphs.
One easy but important fact is that if in the colored graph $(G, \chi)$
every vertex has a distinct color then $\core(G)= G$.

The notions of tree decomposition and treewidth are by now standard. In
particular, $\tw(G)$ denotes the treewidth of the graph $G$. For a $(s\times
t)$-grid $\str G_{s, t}$, we have $\tw(G_{s,t})= \min \{s, t\}$, and for a
$(s\times t)$-wall $\str W_{s, t}$, we have $\tw(W_{s,t})= \min \{s, t\}+1$.
The treewidth of a colored graph $(G, \chi)$ is the same as the treewidth of
the underlying uncolored graph $G$, i.e., $\tw(G, \chi)= \tw(G)$.

In a \emph{parameterized problem} $(Q, \kappa)$ every problem instance $x\in
\{0,1\}^*$ has a parameter $\kappa(x)\in \mathbb N$ which is computable in
polynomial time from $x$. $(Q, \kappa)$ is \emph{fixed-parameter tractable}
(\FPT) if we can decide for every instance $x\in \{0,1\}^*$ whether $x\in Q$
in time $f(\kappa(x))\cdot |x|^{O(1)}$, where $f: \mathbb N\to \mathbb N$ is
a computable function. Thus, \FPT\ plays the role of \P\ in parameterized
complexity. On the other hand, the so-called class \W 1 is generally
considered as a parameterized analog of \NP. The precise definition of \W 1
is not used in our proofs, so the reader is referred to the standard
textbooks, e.g.,~\cite{dowfel99,flugro06,cygfom15}. Let $(Q_1, \kappa_1)$
and $(Q_2, \kappa_2)$ be two parameterized problems. An \fpt-reduction from
$(Q_1, \kappa_1)$ and $(Q_2, \kappa_2)$ is a mapping $R: \{0,1\}^*\to
\{0,1\}^*$ such that for every $x\in \{0,1\}^*$
\begin{itemize}
\item $x\in Q_1 \iff R(x)\in Q_2$,

\item $R(x)$ can be computed in time $f(\kappa_1(x))\cdot |x|^{O(1)}$,
    where $f:\mathbb N\to \mathbb N$ is a computable function,

\item $\kappa_2(R(x))\le g(\kappa_1(x))$, where $g:\mathbb N\to \mathbb N$
    is computable.
\end{itemize}

Now we state a version of the main result of~\cite{gro07} which is most
appropriate for our purpose.
\begin{theorem}\label{thm:grohe}
Let $\cls K$ be a recursively enumerable\footnote{If $\cls K$ is not
  recursively enumerable, there is still a ``non-uniform'' hardness result. See~\cite{gro07} or a
  discussion.} class of colored graphs such that
for every $k\in \mathbb N$ there is a colored graph $(G, \chi)\in \cls K$
whose core has treewidth at least $k$. Then $\pHom(\cls K)$ is hard for \W 1
(under \fpt-reductions).
\end{theorem}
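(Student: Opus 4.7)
The plan is to reduce from the $\W 1$-complete \textsc{Partitioned Clique} problem, whose input is a $k$-partite graph $H'$ with color classes $V_1,\ldots,V_k$ and which asks for a clique picking exactly one vertex from each class. Given such an input, I would produce an instance $(G,\chi,H^*)$ of $\pHom(\cls K)$ that is a yes-instance if and only if $H'$ has a partitioned $k$-clique.

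First I would use that $\cls K$ is recursively enumerable: given $k$, I can effectively locate some $(G,\chi)\in\cls K$ whose core $C$ has treewidth at least a suitable $f(k)$ depending on the clique parameter. Since all cores of a graph are isomorphic, $C$ is well defined. The next step is to extract from $C$ a \emph{rigid scaffold} rich enough to encode clique constraints; this uses the large treewidth of $C$ to produce, via the Excluded Grid Theorem or a bramble-based refinement, a family of highly linked connected subgraphs of $C$. I would then build $H^*$ by starting from $C$ and attaching to the scaffold $k$ vertex-selection gadgets (one per $V_i$) and $\binom{k}{2}$ edge-verification gadgets built from $H'$, lifting the coloring $\chi$ to $H^*$ in the natural way. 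In the easy direction, any partitioned $k$-clique of $H'$ yields a homomorphism $G\to H^*$ by composing the retraction $G\to C$ with the copy of $C$ encoded in $H^*$ that corresponds to the clique.

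The hard direction, and the main obstacle, is to show that every homomorphism $G\to H^*$ must in fact encode a partitioned clique. Here one has to exploit that $C$ admits no homomorphism to any proper subgraph (being a core): combined with the rigidity granted by $\chi$ and the high connectivity of the scaffold, any homomorphism should be forced to reproduce $C$ inside $H^*$ in an essentially faithful way, and then the gadget design will force consistent vertex choices across the color classes together with adjacency in $H'$ of every chosen pair. The delicate point is that large treewidth produces a grid only as a minor of $C$, not as a subgraph, so the rigidity argument cannot read off clique information from a straight grid embedding. One must instead combine a game-theoretic characterization of treewidth with a careful accounting of how homomorphisms can ``collapse'' the scaffold; this quantitative rigidity analysis is the technical heart of the argument in~\cite{gro07}, and the same theme will reappear when we design skeletons for grids and walls in later sections.
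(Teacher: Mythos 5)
This theorem is not proved in the paper; it is quoted from~\cite{gro07}, so your attempt must stand as a self-contained argument. Your outline is directionally consistent with Grohe's proof: the reduction there is indeed from the partitioned clique problem, it uses a bramble (or cops-and-robber) characterization of treewidth rather than a literal grid subgraph, and the crux really is forcing any homomorphism from $G$ to a product structure to reproduce the core faithfully. But as written you have a plan, not a proof: you explicitly defer the key rigidity analysis to~\cite{gro07}, and you leave the construction of the target $H^*$, the lift of the coloring $\chi$, and the control of the parameter $|G|$ in the \fpt-reduction entirely unspecified.

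The substantive gap is the hard direction. You assert the gadgets ``will force consistent vertex choices,'' but nothing in your text explains how the core property (no homomorphism to a proper substructure) and the high-connectivity object extracted from large treewidth combine to prevent a homomorphism from collapsing the scaffold in a way that decouples the clique constraints. You correctly note that a large grid appears only as a \emph{minor} of $C$, which is precisely why this step is delicate --- but naming the obstacle is not the same as overcoming it. Without supplying the rigidity lemma (Grohe's analysis of how partial homomorphisms interact with brambles of high order, and how the coloring pins the image down), the reduction is not established, and as a proof the proposal is incomplete.
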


\section{From Homomorphism to Colored Embedding}\label{sec:colemb}
Let $\cls K$ be a class of graphs. We consider the following colored version
of the embedding problem for $\cls K$.
\npprob{$\pColEmb(\cls K)$}{Two graphs $G$ and $H$ with $G\in \cls K$, and a
function $\chi: V(H)\to V(G)$}{$|G|$}{Decide wether there is an embedding
$h$ from $G$ to $H$ such that $\chi(h(v))= v$ for every $v\in V(G)$.}
Thus, in the $\pColEmb(\cls K)$ problem we partition the vertices of $H$ and
associate one part with each vertex of $G$. Then we ask for an embedding
where each vertex $G$ is mapped to its part.

\begin{lemma}\label{lem:colembedW1}
Let $\cls K$ be a recursively enumerable class of graphs with unbounded
treewidth. Then $\pColEmb(\cls K)$ is hard for \W 1.
\end{lemma}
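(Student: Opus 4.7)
The plan is to reduce from the colored homomorphism problem supplied by Theorem~\ref{thm:grohe}. Starting from the given uncolored class $\cls K$, I would form an auxiliary class $\cls K'$ of colored graphs by attaching to each $G \in \cls K$ the coloring $\chi_G : V(G) \to V(G)$ with $\chi_G(v) = v$, so that every vertex receives its own distinct color. Recursive enumerability of $\cls K'$ follows from that of $\cls K$. By the fact recorded in Section~\ref{sec:pre}, a colored graph whose vertices carry pairwise distinct colors is its own core, so $\core(G,\chi_G) = (G,\chi_G)$. Since the colored treewidth agrees with the uncolored treewidth and $\cls K$ has unbounded treewidth, the cores of the graphs in $\cls K'$ have unbounded treewidth, and Theorem~\ref{thm:grohe} therefore yields that $\pHom(\cls K')$ is \W1-hard.

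It then remains to give an fpt-reduction from $\pHom(\cls K')$ to $\pColEmb(\cls K)$. Given an instance $((G,\chi_G),(H,\chi_H))$ of $\pHom(\cls K')$, I would output the instance $(G, H', \chi)$ of $\pColEmb(\cls K)$, where $H'$ is obtained from $H$ by deleting every vertex $w$ with $\chi_H(w) \notin V(G)$ (such vertices can never be used by a color-preserving homomorphism) and $\chi : V(H') \to V(G)$ is defined by $\chi(w) := \chi_H(w)$, which lies in $V(G)$ since $\chi_G$ is the identity on $V(G)$. The pattern graph $G$, now viewed as uncolored, lies in $\cls K$, the parameter $|G|$ is preserved, and the construction runs in linear time.

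Correctness rests on one observation: because $\chi_G$ is injective, any color-preserving homomorphism $h:(G,\chi_G) \to (H,\chi_H)$ is automatically injective, hence an embedding. The requirement $\chi_H(h(v)) = \chi_G(v) = v$ for all $v \in V(G)$ translates verbatim into the colored-embedding requirement $\chi(h(v)) = v$, and the converse direction is immediate. There is no real obstacle in the argument; essentially the entire content of the lemma is the identification of color-preserving homomorphisms with colored embeddings whenever the source graph carries pairwise distinct vertex colors, which reduces the lemma to a direct application of Theorem~\ref{thm:grohe}.
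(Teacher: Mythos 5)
Your proof is correct and follows essentially the same route as the paper: both reduce from $\pHom$ on the class of identity-colored graphs $\cls K^*$ via Theorem~\ref{thm:grohe}, exploiting that a pattern with pairwise distinct vertex colors is its own core and that color-preserving homomorphisms out of such a pattern are automatically injective. The only cosmetic difference is the target graph: you take the induced subgraph $H'$ of $H$ on color-compatible vertices, whereas the paper builds the product $P$ whose edges additionally require $\chi_H(v_1)\chi_H(v_2)\in E(G)$; this extra pruning of edges is immaterial for the existence of the desired embedding, so the two constructions are interchangeable here.
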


\begin{proof} Let $G$ be a graph. We expand it with the trivial coloring
$\chi_G: V(G)\to V(G)$ defined by $\chi_G(v):= v$ for every $v\in V(G)$. It
is easy to see that $\core(G, \chi_G)=(G, \chi_G)$ and hence
$\tw\big(\core(G, \chi_G)\big)= \tw(G, \chi_G)= \tw(G)$. Therefore, the
class
\[
\cls K^* := \big\{(G, \chi_G)\bigmid G\in \cls K\big\}
\]
satisfies the conditions in Theorem~\ref{thm:grohe}. Hence, $\pHom(\cls
K^*)$ is \W 1-hard, and it suffices to show $\pHom(\cls K^*)$ can be reduced
to $\pColEmb(\cls K)$ by an \fpt-reduction.

Let $(G, \chi_G)\in \cls K^*$ with $G\in \cls K$ and $(H, \chi_H)$ be a
colored graph. We construct a graph $P$ with
\begin{align*}
V(P) := & \big\{(u,v)\in V(G)\times V(H)\bigmid \chi_H(v)= u\big\}, \\
E(P) := & \Big\{(u_1,v_1)(u_2,v_2) \Bigmid (u_1,v_1),(u_2,v_2)\in V(P),\\
 & \hspace{4cm} \text{$u_1u_2\in E(G)$, and $v_1v_2\in E(H)$}\Big\}.
\end{align*}
Moreover let $\chi: V(P)\to V(G)$ be given by
\[
\chi(u,v):= u
\]
for every $(u,v)\in V(P)$. It is easy to verify that
%there is a homomorphism from $(G, \chi_G)$ to $(H, \chi_H)$ if and only if
%there is an embedding $h$ from $G$ to $P$ with $\chi(h(v))= v$ for every
%$v\in V(G)$.\qed
\begin{align*}
\text{there} & \ \text{is a homomorphism from $(G, \chi_G)$ to $(H, \chi_H)$} \\
 & \iff \text{there is an embedding $h$ from $G$ to $P$
  with $\chi(h(v))= v$ for every $v\in V(G)$}.
\end{align*}
\end{proof}

\section{Frames and Skeletons}\label{sec:frames}

Let $G$ be a graph and $D\subseteq V(G)$ such that the degree of every $v\in
D$ is at most 2, i.e., $\deg^{G}(v)\le 2$. For every $u,v\in V(G)\setminus
D$ we say they are \emph{close} (with respect to $D$) if there is a path in
$G$ between $u$ and $v$ whose internal vertices are all in $D$. We define
$G/D$ as the graph given by
\begin{align*}
V(G/D) & := V(G)\setminus D, \\
E(G/D) & := \big\{uv \bigmid \text{$u,v\in V(G)\setminus D$, $u\ne v$, and they are close}\big\}.
\end{align*}
Let $u\in D$. We say that $u$ is \emph{associated with a vertex $v\in
V(G/D)$} if $u$ is on a path in $G$ between $v$ and a vertex $w\in D$ with
$\deg^G(w)=1$ whose internal vertices are all in $D$. Similarly, $u$ is
\emph{associated} with some edge $e =vw\in E(G/D)$ if $u$ is on a path in
$G$ between $v$ and $w$ whose internal vertices are all in $D$. It should be
clear that $u$ can only be associated with a unique vertex or a unique edge
in $G/D$, and not both. Furthermore, some $w\in D$ might not be associated
with any vertex or edge; this happens precisely to all $w$ on a path or
cycle with all vertices in $D$.

\medskip
To simplify presentation, from now on we fix a graph $G$.
\begin{definition}\label{def:frame}
A set $F\subseteq V(G)$ is a \emph{frame} for $G$ if every endomorphism $h$
of $G$ with $F\subseteq h(V(G))$ is surjective.
\end{definition}

\begin{remark}\label{rem:frame}
%Let $F\subseteq F'\subseteq V(G)$ and $F$ be a frame for $G$. Then $F'$ is
%also a frame for $G$.
Let $F$ be a frame for $G$
\begin{enumerate}
\item[(1)] If $F=\emptyset$, then $G$ is a core.

\item[(2)] Any endomorphism $h$ with $F\subseteq h(V(G))$ is an
    automorphism of $G$, since $G$ is finite.

\item[(3)] Let $F'\subseteq V(G)$ with $F\subseteq F'$. Then $F'$ is also
    a frame for $G$.

\item[(4)] $V(G)$ is a frame for $G$.
\end{enumerate}
\end{remark}

\begin{definition}\label{def:skeleton}
Let $F, D\subseteq V(G)$ such that
\begin{enumerate}
\item[(S1)] $F$ is a frame for $G$,

\item[(S2)] $F\cap D= \emptyset$,

\item[(S3)] for every $v\in D$
    \[
    \deg^{G\setminus F}(v) =\big|\big\{u\in V \bigmid \text{$u\notin F$ and $\{u,v\}\in E$}\big\}\big| \le 2.
    \]
\end{enumerate}
Then we call $\mathcal S=(F,D)$ a \emph{skeleton} of $G$.
\end{definition}

\begin{example}\label{exam:gridskeleton}
Consider the grid $\str G_{7, 8}$. Lemma~\ref{lem:gridskeleton} in
Section~\ref{sec:richnessgridwall} implies that it has a skeleton $(F, D)$
with
\begin{align*}
F &= \big\{(i,j)\bigmid \text{$i\in \{1,2,6,7\}$ or $j\in \{1,7,8\}$} \big\}
  \cup \big\{(4, 2j) \bigmid j \in [3]\big\} \quad \text{and}\\
D &= \big\{(2i+1,2j) \bigmid \text{$i\in [2]$ and $j\in [3]$}\big\}
 \cup \big\{(2i,2j+1) \bigmid \text{$i\in \{2\}$ and $j\in [2]$}\big\},
\end{align*}
as shown in Figure~\ref{fig:gridskeleton}.
\begin{figure}%[h!]
\centering
\begin{tikzpicture}[scale=0.80]
 {\scriptsize

% the grid
 \begin{scope}
 \draw (1,1)--(7,1);
 \draw (1,2)--(7,2);
 \draw (1,3)--(7,3);
 \draw (1,4)--(7,4);
 \draw (1,5)--(7,5);
 \draw (1,6)--(7,6);
 \draw (1,7)--(7,7);
 \draw (1,8)--(7,8);

 \draw (1,1)--(1,8);
 \draw (2,1)--(2,8);
 \draw (3,1)--(3,8);
 \draw (4,1)--(4,8);
 \draw (5,1)--(5,8);
 \draw (6,1)--(6,8);
 \draw (7,1)--(7,8);

% the frame

 % borders
 \draw (1,1) node[circle,draw,fill=Black]{};
 \draw (1,2) node[circle,draw,fill=Black]{};
 \draw (1,3) node[circle,draw,fill=Black]{};
 \draw (1,4) node[circle,draw,fill=Black]{};
 \draw (1,5) node[circle,draw,fill=Black]{};
 \draw (1,6) node[circle,draw,fill=Black]{};
 \draw (1,7) node[circle,draw,fill=Black]{};
 \draw (1,8) node[circle,draw,fill=Black]{};

 \draw (2,1) node[circle,draw,fill=Black]{};
 \draw (2,2) node[circle,draw,fill=Black]{};
 \draw (2,3) node[circle,draw,fill=Black]{};
 \draw (2,4) node[circle,draw,fill=Black]{};
 \draw (2,5) node[circle,draw,fill=Black]{};
 \draw (2,6) node[circle,draw,fill=Black]{};
 \draw (2,7) node[circle,draw,fill=Black]{};
 \draw (2,8) node[circle,draw,fill=Black]{};

 \draw (6,1) node[circle,draw,fill=Black]{};
 \draw (6,2) node[circle,draw,fill=Black]{};
 \draw (6,3) node[circle,draw,fill=Black]{};
 \draw (6,4) node[circle,draw,fill=Black]{};
 \draw (6,5) node[circle,draw,fill=Black]{};
 \draw (6,6) node[circle,draw,fill=Black]{};
 \draw (6,7) node[circle,draw,fill=Black]{};
 \draw (6,8) node[circle,draw,fill=Black]{};

 \draw (7,1) node[circle,draw,fill=Black]{};
 \draw (7,2) node[circle,draw,fill=Black]{};
 \draw (7,3) node[circle,draw,fill=Black]{};
 \draw (7,4) node[circle,draw,fill=Black]{};
 \draw (7,5) node[circle,draw,fill=Black]{};
 \draw (7,6) node[circle,draw,fill=Black]{};
 \draw (7,7) node[circle,draw,fill=Black]{};
 \draw (7,8) node[circle,draw,fill=Black]{};

 \draw (3,1) node[circle,draw,fill=Black]{};
 \draw (4,1) node[circle,draw,fill=Black]{};
 \draw (5,1) node[circle,draw,fill=Black]{};

 \draw (3,7) node[circle,draw,fill=Black]{};
 \draw (4,7) node[circle,draw,fill=Black]{};
 \draw (5,7) node[circle,draw,fill=Black]{};

 \draw (3,8) node[circle,draw,fill=Black]{};
 \draw (4,8) node[circle,draw,fill=Black]{};
 \draw (5,8) node[circle,draw,fill=Black]{};

 % middle points in the frame

 \draw (4,2) node[circle,draw,fill=Black]{};
 \draw (4,4) node[circle,draw,fill=Black]{};
 \draw (4,6) node[circle,draw,fill=Black]{};

% the set $D$

 %the rows
 \draw (3,2) node[circle,draw,fill=lightgray]{};
 \draw (5,2) node[circle,draw,fill=lightgray]{};

 \draw (3,4) node[circle,draw,fill=lightgray]{};
 \draw (5,4) node[circle,draw,fill=lightgray]{};

 \draw (3,6) node[circle,draw,fill=lightgray]{};
 \draw (5,6) node[circle,draw,fill=lightgray]{};

 %the columns
 \draw (4,3) node[circle,draw,fill=lightgray]{};
 \draw (4,5) node[circle,draw,fill=lightgray]{};

% the remaining vertices

 \draw (3,3) node[circle,draw,fill=white]{$a$};
 \draw (5,3) node[circle,draw,fill=white]{$b$};

 \draw (3,5) node[circle,draw,fill=white]{$c$};
 \draw (5,5) node[circle,draw,fill=white]{$d$};

 \draw (4,0) node{{\normalsize (a)}};

 \end{scope}

%%%%%%%%%%%%%%

 \begin{scope}[xshift=9cm,yshift=0cm]
 \draw (1,3)--(3,3);
 \draw (1,5)--(3,5);

 \draw (1,3)--(1,5);
 \draw (3,3)--(3,5);

 \draw (1,3) node[circle,draw,fill=white]{$a$};
 \draw (3,3) node[circle,draw,fill=white]{$b$};

 \draw (1,5) node[circle,draw,fill=white]{$c$};
 \draw (3,5) node[circle,draw,fill=white]{$d$};

 \draw (2,0) node{{\normalsize (b)}};
 \end{scope}

 }

\end{tikzpicture}
\caption{(a)~A skeleton for $\str G_{7,8}$, where $F$ is the set of black vertices
and $D$ the set of light gray vertices.
(b)~The graph $(G\setminus F)/D$.}\label{fig:gridskeleton}
\end{figure}
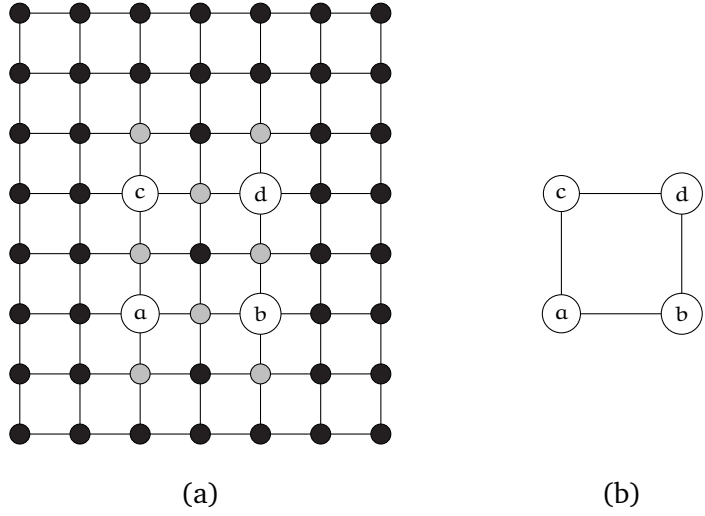
\end{example}

\begin{definition}\label{def:product}
Let $\mathcal S= (F,D)$ be a skeleton of $G$. For every graph $H$ and every
mapping
\[
\chi: V(H) \to V(G)\setminus (F\cup D),
\]
we construct a product graph $P= P(G, \mathcal S, H, \chi)$ as follows.
\begin{enumerate}
\item[(P1)] The vertex set is $V(P):= \bigcup_{i\in [4]} V_i$ with
    \begin{align*}
    V_1 & = \big\{(u,a) \bigmid \text{$u\in V(G)\setminus (F\cup D)$ and $a\in V(H)$ with $\chi(a)= u$}\big\}, \\
    V_2 & = \Big\{(u,u) \Bigmid \text{$u\in F$
     or \big($u\in D$ without being associated} \\
     & \hspace{3cm} \text{with any vertex or edge in $(G\setminus F)/D$\big)} \Big\}, \\
    V_3 & = \big\{(u, \vv_{u,a}) \bigmid \text{$u\in D$, $a\in V(H)$, and $\chi(a)= v$} \\
            & \hspace{3cm} \text{with $u$ being associated with $v$ in $(G\setminus F)/D$}\big\}, \\
    V_4 & = \big\{(u, \vv_{u,e}) \bigmid \text{$u\in D$, $e= \{a,b\}\in E(H)$, $\chi(a)= v$, and $\chi(b)=w$} \\
            & \hspace{3cm} \text{with $u$ being associated with $\{v,w\}$ in $(G\setminus F)/D$}\big\}.
    \end{align*}
    Note that in the definition of $V_3$ and $V_4$ all $\vv_{u,a}$ and
    $\vv_{u,e}$ are fresh elements.

\item[(P2)] The edge set is $E(P):= \bigcup_{1\le i\le j\le 4} E_{ij}$
    with
    \begin{align*}
    E_{11} & = \big\{(u,a)(v,b) \bigmid \text{$(u,a), (v,b)\in V_1$, $uv\in E(G)$, and $ab\in E(H)$}\big\}, \\
    E_{12} & = \big\{(u,a),(v,v) \bigmid \text{$(u,a)\in V_1$, $(v,v)\in V_2$, and $uv\in E(G)$}\big\}, \\
    E_{13} & = \big\{(u,a)(v, \vv_{v,a}) \bigmid \text{$(u,a)\in V_1$, $(v, \vv_{v,a})\in V_3$,
     and $uv\in E(G)$}\big\}, \\
    E_{14} & = \big\{(u,a)(v, \vv_{v,e}) \bigmid \text{$(u,a)\in V_1$, $(v, \vv_{v,e})\in V_4$,
     $uv\in E(G)$, and $a\in e$}\big\}, \\
    E_{22} & = \big\{(u,u)(v,v) \bigmid \text{$(u,u), (v,v)\in V_2$ and $uv\in E(G)$}\big\}, \\
    E_{23} & = \big\{(u,u)(v, \vv_{v,a}) \bigmid \text{$(u,u)\in V_2$, $(v,\vv_{v,a})\in V_3$,
      and $uv\in E(G)$}\big\}, \\
    E_{24} & = \big\{(u,u)(v, \vv_{v,e}) \bigmid \text{$(u,u)\in V_2$, $(v,\vv_{v,e})\in V_4$,
     and $uv\in E(G)$}\big\}, \\
    E_{33} & = \big\{(u,\vv_{u,a})(v, \vv_{v,a}) \bigmid \text{$(u,\vv_{u,a}), (v,\vv_{v,a})\in V_3$
      and $uv\in E(G)$}\big\}, \\
    E_{34} & = \emptyset, \\
    E_{44} & = \big\{(u,\vv_{u,e})(v, \vv_{v,e}) \bigmid \text{$(u,\vv_{u,e}), (v,\vv_{v,e})\in V_4$
      and $uv\in E(G)$}\big\}.
    \end{align*}
\end{enumerate}
\end{definition}

\begin{example}\label{exam:gridproduct}
Let $H$ be the graph in Figure~\ref{fig:gridproduct}~(a). Moreover, we color
every $a_i$ with $\chi(a_i):=a$, every $b_i$ with $\chi(b_i):=b$, and so on.
We consider the grid $\str G_{7,8}$ and the skeleton $\mathcal S= (F,D)$
defined in Example~\ref{exam:gridskeleton}. Then
Figure~\ref{fig:gridproduct}~(b) is the product $P= P(G, \mathcal S, H,
\chi)$. In particular, $V_1$ is the set of white vertices, $V_2$ is the set
of black vertices, $V_3$ is the set of gray vertices, and $V_4$ is the set
of remaining light gray vertices. To make the picture less cluttered, we
label the vertex $(a,a_i)$ by $a_i$ etc. in $P$.
\begin{figure}%[h!]
\centering
\begin{tikzpicture}[scale=1]
 {\scriptsize

% the graph H

\begin{scope}
 \draw (3,3)--(5,3);
 \draw (3,5)--(5,5);

 \draw (3,3)--(3,5);
 \draw (5,3)--(5,5);

 \draw (3,3) node[circle,draw,fill=white]{$a_1$};
 \draw (5,3) node[circle,draw,fill=white]{$b_1$};

 \draw (3,5) node[circle,draw,fill=white]{$c_1$};
 \draw (5,5) node[circle,draw,fill=white]{$d_1$};

 \draw (3.8,3.8)--(5.8,3.8);
 \draw (3.8,5.8)--(5.8,5.8);

 \draw (3.8,3.8)--(3.8,5.8);
 \draw (5.8,3.8)--(5.8,5.8);

 \draw (3.8,3.8) node[circle,draw,fill=white]{$a_2$};
 \draw (5.8,3.8) node[circle,draw,fill=white]{$b_2$};

 \draw (3.8,5.8) node[circle,draw,fill=white]{$c_2$};
 \draw (5.8,5.8) node[circle,draw,fill=white]{$d_2$};

 \draw (4,0) node{{\normalsize (a)}};

\end{scope}

% the grid product

\begin{scope}[xshift=8cm,yshift=0cm]

 \draw (1,1)--(7,1);

% \draw (1,2)--(7,2);
 \draw (1,2)--(2,2);
 \draw (2,2)--(3.2,2.2)--(4,2)--(5.2,2.2)--(6,2);
 \draw (2,2)--(2.8,1.8)--(4,2)--(4.8,1.8)--(6,2);
 \draw (6,2)--(7,2);

% \draw (1,3)--(7,3);
 \draw (1,3)--(2,3)--(3.3,3.3)--(5.3,3.3)--(6,3)--(7,3);
 \draw (1,3)--(2,3)--(2.7,2.7)--(4.7,2.7)--(6,3)--(7,3);
% \draw (6,3)--(7,3);

% \draw (1,4)--(7,4);
 \draw (1,4)--(2,4);
 \draw (2,4)--(3.3,4.3)--(4,4)--(5.3,4.3)--(6,4);
 \draw (2,4)--(2.7,3.7)--(4,4)--(4.7,3.7)--(6,4);
 \draw (6,4)--(7,4);

% \draw (1,5)--(7,5);
 \draw (1,5)--(2,5)--(3.3,5.3)--(5.3,5.3)--(6,5)--(7,5);
 \draw (1,5)--(2,5)--(2.7,4.7)--(4.7,4.7)--(6,5)--(7,5);
% \draw (6,5)--(7,5);

% \draw (1,6)--(7,6);
 \draw (1,6)--(2,6);
 \draw (2,6)--(3.2,6.2)--(4,6)--(5.2,6.2)--(6,6);
 \draw (2,6)--(2.8,5.8)--(4,6)--(4.8,5.8)--(6,6);
 \draw (6,6)--(7,6);

 \draw (1,7)--(7,7);
 \draw (1,8)--(7,8);

 \draw (1,1)--(1,8);

 \draw (2,1)--(2,8);

 \draw (3,1)--(3.2,2.2)--(3.3,3.3)--(3.3,5.3)--(3.2,6.2)--(3,7);
 \draw (3,1)--(2.8,1.8)--(2.7,2.7)--(2.7,4.7)--(2.8,5.8)--(3,7);
 \draw (3,7)--(3,8);

% \draw (13,1)--(13,8);
 \draw (4,1)--(4,2);
 \draw (4,2)--(4.3,3.3)--(4,4)--(4.3,5.3)--(4,6);
 \draw (4,2)--(3.7,2.7)--(4,4)--(3.7,4.7)--(4,6);
 \draw (4,6)--(4,8);

% \draw (5,1)--(5,8);
 \draw (5,1)--(5.2,2.2)--(5.3,3.3)--(5.3,5.3)--(5.2,6.2)--(5,7);
 \draw (5,1)--(4.8,1.8)--(4.7,2.7)--(4.7,4.7)--(4.8,5.8)--(5,7);
 \draw (5,7)--(5,8);

 \draw (6,1)--(6,8);

 \draw (7,1)--(7,8);

% the set $V_2$

 \draw (1,1) node[circle,draw,fill=Black]{};
 \draw (1,2) node[circle,draw,fill=Black]{};
 \draw (1,3) node[circle,draw,fill=Black]{};
 \draw (1,4) node[circle,draw,fill=Black]{};
 \draw (1,5) node[circle,draw,fill=Black]{};
 \draw (1,6) node[circle,draw,fill=Black]{};
 \draw (1,7) node[circle,draw,fill=Black]{};
 \draw (1,8) node[circle,draw,fill=Black]{};

 \draw (2,1) node[circle,draw,fill=Black]{};
 \draw (2,2) node[circle,draw,fill=Black]{};
 \draw (2,3) node[circle,draw,fill=Black]{};
 \draw (2,4) node[circle,draw,fill=Black]{};
 \draw (2,5) node[circle,draw,fill=Black]{};
 \draw (2,6) node[circle,draw,fill=Black]{};
 \draw (2,7) node[circle,draw,fill=Black]{};
 \draw (2,8) node[circle,draw,fill=Black]{};

 \draw (6,1) node[circle,draw,fill=Black]{};
 \draw (6,2) node[circle,draw,fill=Black]{};
 \draw (6,3) node[circle,draw,fill=Black]{};
 \draw (6,4) node[circle,draw,fill=Black]{};
 \draw (6,5) node[circle,draw,fill=Black]{};
 \draw (6,6) node[circle,draw,fill=Black]{};
 \draw (6,7) node[circle,draw,fill=Black]{};
 \draw (6,8) node[circle,draw,fill=Black]{};

 \draw (7,1) node[circle,draw,fill=Black]{};
 \draw (7,2) node[circle,draw,fill=Black]{};
 \draw (7,3) node[circle,draw,fill=Black]{};
 \draw (7,4) node[circle,draw,fill=Black]{};
 \draw (7,5) node[circle,draw,fill=Black]{};
 \draw (7,6) node[circle,draw,fill=Black]{};
 \draw (7,7) node[circle,draw,fill=Black]{};
 \draw (7,8) node[circle,draw,fill=Black]{};

 \draw (3,1) node[circle,draw,fill=Black]{};
 \draw (4,1) node[circle,draw,fill=Black]{};
 \draw (5,1) node[circle,draw,fill=Black]{};

% \draw (3,2) node[circle,draw,fill=Black]{};
 \draw (4,2) node[circle,draw,fill=Black]{};
% \draw (5,2) node[circle,draw,fill=Black]{};

% \draw (3,6) node[circle,draw,fill=Black]{};
 \draw (4,6) node[circle,draw,fill=Black]{};
% \draw (5,6) node[circle,draw,fill=Black]{};

 \draw (3,7) node[circle,draw,fill=Black]{};
 \draw (4,7) node[circle,draw,fill=Black]{};
 \draw (5,7) node[circle,draw,fill=Black]{};

 \draw (3,8) node[circle,draw,fill=Black]{};
 \draw (4,8) node[circle,draw,fill=Black]{};
 \draw (5,8) node[circle,draw,fill=Black]{};

 \draw (4,4) node[circle,draw,fill=Black]{};

% the set $V_3$

% \draw (2.2,3.2) node[circle,draw,fill=gray]{};
% \draw (1.8,2.8) node[circle,draw,fill=gray]{};
%
% \draw (2.2,5.2) node[circle,draw,fill=gray]{};
% \draw (1.8,4.8) node[circle,draw,fill=gray]{};

% \draw (6.2,3.2) node[circle,draw,fill=gray]{};
% \draw (5.8,2.8) node[circle,draw,fill=gray]{};
%
% \draw (6.2,5.2) node[circle,draw,fill=gray]{};
% \draw (5.8,4.8) node[circle,draw,fill=gray]{};

 \draw (2.8,1.8) node[circle,draw,fill=gray]{};
 \draw (3.2,2.2) node[circle,draw,fill=gray]{};

 \draw (4.8,1.8) node[circle,draw,fill=gray]{};
 \draw (5.2,2.2) node[circle,draw,fill=gray]{};

 \draw (2.8,5.8) node[circle,draw,fill=gray]{};
 \draw (3.2,6.2) node[circle,draw,fill=gray]{};

 \draw (4.8,5.8) node[circle,draw,fill=gray]{};
 \draw (5.2,6.2) node[circle,draw,fill=gray]{};

% the set $V_4$

 \draw (3.7,2.7) node[circle,draw,fill=lightgray]{};
 \draw (4.3,3.3) node[circle,draw,fill=lightgray]{};

 \draw (3.7,4.7) node[circle,draw,fill=lightgray]{};
 \draw (4.3,5.3) node[circle,draw,fill=lightgray]{};

 \draw (2.7,3.7) node[circle,draw,fill=lightgray]{};
 \draw (3.3,4.3) node[circle,draw,fill=lightgray]{};

 \draw (4.7,3.7) node[circle,draw,fill=lightgray]{};
 \draw (5.3,4.3) node[circle,draw,fill=lightgray]{};

% the set $V_1$

 \draw (3.3,3.3) node[circle,draw,fill=white]{$a_1$};
 \draw (2.7,2.7) node[circle,draw,fill=white]{$a_2$};

 \draw (5.3,3.3) node[circle,draw,fill=white]{$b_1$};
 \draw (4.7,2.7) node[circle,draw,fill=white]{$b_1$};

 \draw (3.3,5.3) node[circle,draw,fill=white]{$c_1$};
 \draw (2.7,4.7) node[circle,draw,fill=white]{$c_2$};

 \draw (5.3,5.3) node[circle,draw,fill=white]{$d_1$};
 \draw (4.7,4.7) node[circle,draw,fill=white]{$d_2$};

 \draw (4,0) node{{\normalsize (b)}};

\end{scope}

 }

\end{tikzpicture}
\caption{}\label{fig:gridproduct}
\end{figure}
\end{example}

In Definition~\ref{def:product} the reader might notice that in each pair
$(u,a)$, $(u,u)$,  $(u, \vv_{u,a})$, or $(u, \vv_{u,e})$ the first
coordinate is uniquely determined by the second coordinate. Thus:

\begin{lemma}\label{lem:injective}
Let $h: V(G)\to V(P)$ be injective, e.g., $h$ is an embedding from $G$ to
$P$. Then the mapping $\pi_2\circ h$ is injective, too. Here $\pi_2(u,z)= z$
for every $(u,z)\in V(P)$ is the projection on the second coordinate.
\end{lemma}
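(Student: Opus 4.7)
The plan is to verify the hint given just before the statement, namely that every vertex of $P$ is of the form $(u,z)$ where the first coordinate $u$ is uniquely recoverable from the second coordinate $z$. Once this is established, the lemma follows from a one-line argument: if $x_1 \ne x_2$, injectivity of $h$ gives $h(x_1) = (u_1, z_1) \ne (u_2, z_2) = h(x_2)$, and if we had $z_1 = z_2$ then $u_1 = u_2$ would follow from the recovery property, contradicting $h(x_1) \ne h(x_2)$. So $\pi_2 \circ h$ is injective.

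The only real content is therefore checking, by case analysis over $V_1, V_2, V_3, V_4$, that $z$ determines $u$ in $V(P)$. For a vertex in $V_1$ we have $z = a \in V(H)$ with $u = \chi(a)$, so $u$ is read off via $\chi$. For a vertex in $V_2$ we have $z = u \in V(G)$, so $u$ is literally the second coordinate. For vertices in $V_3$ and $V_4$ the second coordinate is one of the fresh symbols $\vv_{u,a}$ or $\vv_{u,e}$, whose first subscript is $u$ itself. Thus in every case $u$ can be reconstructed from $z$, \emph{provided} the four kinds of second coordinates do not collide.

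To rule out such collisions I would note that the symbols $\vv_{u,a}$ and $\vv_{u,e}$ are declared fresh in Definition~\ref{def:product}, so they do not appear in $V(G) \cup V(H)$; by the same freshness convention, different subscripts give different fresh elements, so within $V_3 \cup V_4$ no two second coordinates coincide unless their subscripts do. Finally, $V(G)$ and $V(H)$ may be assumed disjoint (if not, rename), so the second coordinates arising in $V_1$ (elements of $V(H)$) are disjoint from those in $V_2$ (elements of $V(G)$). Combining these observations, the map $\pi_2$ restricted to $V(P)$ is injective, and composing with the injective $h$ gives the claim.

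There is no real obstacle here; the statement is essentially a bookkeeping consequence of Definition~\ref{def:product}. The only mild subtlety is making the freshness and disjointness of the four families of second coordinates explicit, which is what the proof should spend its words on.
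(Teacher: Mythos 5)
Your proposal is correct and matches the paper's intent exactly: the paper does not give a separate proof of this lemma but derives it from the remark immediately preceding it, namely that in each of the four vertex types $(u,a)$, $(u,u)$, $(u,\vv_{u,a})$, $(u,\vv_{u,e})$ the first coordinate is determined by the second, and you have simply spelled out that observation case by case. The one minor point you add beyond the paper — renaming so that $V(G)$ and $V(H)$ are disjoint to rule out collisions between $V_1$ and $V_2$ second coordinates — is a reasonable and harmless bit of bookkeeping that the paper leaves implicit.
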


\medskip
\begin{lemma}\label{lem:pi1hom}
$\pi_1$ is homomorphism from $P$ to $G$, where $\pi_1(u,z)= u$ for every
$(u,z)\in V(P)$ is the projection on the first coordinate.
\end{lemma}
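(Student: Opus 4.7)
The plan is to verify the homomorphism condition by a direct case analysis on the nine edge classes $E_{ij}$ in the definition of $E(P)$. Concretely, I take an arbitrary edge $(u,z)(v,z')\in E(P)$ and show $uv\in E(G)$; since $\pi_1(u,z)=u$ and $\pi_1(v,z')=v$, this is exactly what is needed.

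The key observation is that the defining clause of each set $E_{ij}$ with $1\le i\le j\le 4$ in item (P2) of Definition~\ref{def:product} explicitly contains the requirement ``$uv\in E(G)$'' as one of the conjuncts. Thus for the nonempty classes $E_{11}, E_{12}, E_{13}, E_{14}, E_{22}, E_{23}, E_{24}, E_{33}, E_{44}$ there is literally nothing to check beyond unpacking the definition. The remaining class $E_{34}$ is empty, so no edges of this type contribute. Hence every edge of $P$ projects under $\pi_1$ to an edge of $G$.

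There is no real obstacle here; the statement is essentially a bookkeeping consequence of how $E(P)$ was constructed as a refinement of the edge relation of $G$. I would simply write: ``Let $e=(u,z)(v,z')\in E(P)$. By (P2), $e\in E_{ij}$ for some $1\le i\le j\le 4$ with $E_{ij}\neq \emptyset$, and in each of these cases the defining condition of $E_{ij}$ forces $uv\in E(G)$. Therefore $\pi_1(u,z)\pi_1(v,z')=uv\in E(G)$, which shows that $\pi_1$ is a homomorphism from $P$ to $G$.'' One might additionally point out, as a small sanity remark, that the same observation shows $\pi_1$ is well defined (every vertex of $P$ has a unique first coordinate), complementing Lemma~\ref{lem:injective}, which addresses the second coordinate.
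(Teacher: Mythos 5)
Your proof is correct and is essentially the paper's own argument: the paper's one-line proof simply observes that by (P2) every edge $(u,w)(v,z)\in E(P)$ satisfies $uv\in E(G)$, which is exactly the case-by-case observation you spell out.
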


\begin{proof} Observe that by (P2) for every edge $(u,w)(v,z)\in E(P)$ we have
$uv\in E(G)$.
\end{proof}

\begin{lemma}\label{lem:productframe}
Let $h$ be a homomorphism from $G$ to $P$. Then the mapping $\pi_1\circ h$
is an endomorphism of $G$. Moreover, if
\[
\big\{(u,u) \bigmid u\in F\big\}\subseteq h(V(G)),
\]
then $\pi_1\circ h$ is an automorphism of $G$.
\end{lemma}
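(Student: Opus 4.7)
The plan is a short, two-step argument that combines Lemma~\ref{lem:pi1hom} with the defining property of a frame.

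For the first statement, I would simply note that $\pi_1$ is a homomorphism from $P$ to $G$ by Lemma~\ref{lem:pi1hom}, and $h$ is a homomorphism from $G$ to $P$ by hypothesis. Since the composition of two homomorphisms is again a homomorphism, $\pi_1\circ h$ is a homomorphism from $G$ to $G$, i.e., an endomorphism of $G$. No further computation is needed at this stage.

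For the second statement, under the additional assumption $\{(u,u) \mid u\in F\} \subseteq h(V(G))$, the goal is to check that $F\subseteq (\pi_1\circ h)(V(G))$ and then invoke the frame property. Fix any $u\in F$. By the assumption there exists $v\in V(G)$ with $h(v)=(u,u)$, and then $(\pi_1\circ h)(v)=\pi_1(u,u)=u$, so indeed $u\in (\pi_1\circ h)(V(G))$. Therefore $F\subseteq (\pi_1\circ h)(V(G))$. Since $F$ is a frame for $G$ (Definition~\ref{def:frame}) and $\pi_1\circ h$ is an endomorphism of $G$ by the first part, Remark~\ref{rem:frame}(2) yields that $\pi_1\circ h$ is in fact an automorphism of $G$.

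I do not anticipate a real obstacle: the lemma is a glueing observation whose only content is that the $V_2$-vertices of $P$ are the ``rigid copies'' of the frame vertices of $G$, so hitting all of them under $h$ forces $\pi_1\circ h$ to cover $F$ and hence, via the frame definition, all of $V(G)$. The proof is essentially one diagram chase plus one appeal to Definition~\ref{def:frame}.
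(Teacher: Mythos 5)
Your proof is correct and follows essentially the same route as the paper: use Lemma~\ref{lem:pi1hom} and composition of homomorphisms for the endomorphism claim, then observe that hitting $\{(u,u)\mid u\in F\}$ forces $F\subseteq(\pi_1\circ h)(V(G))$ and invoke the frame property (via Remark~\ref{rem:frame}(2)) to get surjectivity and hence an automorphism.
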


\begin{proof}
By Lemma~\ref{lem:pi1hom} $\pi_1\circ h$ is an endomorphism of $G$. If
$\big\{(u,u) \bigmid u\in F\big\}\subseteq h(V(G))$, then $F\subseteq
\pi_1\circ h(V(G))$. Since $F$ is a frame, $\pi_1\circ h$ has to be
surjective.
\end{proof}

\begin{lemma}\label{lem:embedproj}
Let $h$ be a homomorphism from $G$ to $P$ such that $\pi_1\circ h$ is an
automorphism of\/ $G$. Then there is a homomorphism $\bar h$ from
$\big(G\setminus F)/D$ to $H$ such that $\chi(\bar h(v))= v$ for every $v\in
V(G)\setminus (F\cup D)$. Note this implies that $\bar h$ is an embedding.
\end{lemma}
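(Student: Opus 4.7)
The plan is to first normalise so that $\pi_{1}\circ h=\id_{V(G)}$. Setting $\sigma := \pi_{1}\circ h$, the hypothesis says $\sigma$ is an automorphism of $G$, so $h' := h\circ \sigma^{-1}$ is still a homomorphism from $G$ to $P$ and now satisfies $\pi_{1}\circ h' = \id$. Since we only need the existence of \emph{some} $\bar h$, I assume without loss of generality that $h$ itself has this property. By inspection of (P1), the only vertices of $P$ whose first coordinate lies in $V(G)\setminus(F\cup D)$ are those in $V_{1}$; hence for each $v\in V(G)\setminus(F\cup D)$ we have $h(v) = (v,a_{v})$ with $\chi(a_{v}) = v$, and I define $\bar h\colon V\bigl((G\setminus F)/D\bigr)\to V(H)$ by $\bar h(v) := a_{v}$. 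The colour identity $\chi(\bar h(v))=v$ is then built into the definition, and injectivity of $\bar h$ follows immediately from it, so the embedding claim is free once $\bar h$ is shown to be a homomorphism.

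To check that $\bar h$ is a homomorphism, fix an edge $vw\in E\bigl((G\setminus F)/D\bigr)$ and pick a witnessing path $v=u_{0},u_{1},\dots,u_{k}=w$ in $G\setminus F$ whose internal vertices lie in $D$. If $k=1$, then $vw\in E(G)$ and both $h(v),h(w)$ sit in $V_{1}$, so the edge $h(v)h(w)$ comes from $E_{11}$, which by (P2) asserts exactly that $\bar h(v)\bar h(w)\in E(H)$. If $k\geq 2$, each internal $u_{i}$ is in $D$ and is associated with the edge $\{v,w\}$ of $(G\setminus F)/D$ via the very same subpath, hence $h(u_{i})\in V_{4}$ with $h(u_{i}) = (u_{i},\vv_{u_{i},e_{i}})$ for some $e_{i}\in E(H)$. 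Walking along the path, the consecutive edges $h(u_{i})h(u_{i+1})$ with $1\leq i\leq k-2$ lie in $E_{44}$, and the defining condition of $E_{44}$ forces $e_{1}=e_{2}=\cdots=e_{k-1}=: e=\{a,b\}$ with $\chi(a)=v$ and $\chi(b)=w$. The boundary edges $h(u_{0})h(u_{1})$ and $h(u_{k-1})h(u_{k})$ lie in $E_{14}$, so $\bar h(v),\bar h(w)\in e$; the colour constraint $\chi(\bar h(v))=v\neq w=\chi(b)$ then pins down $\bar h(v)=a$, and symmetrically $\bar h(w)=b$, yielding $\bar h(v)\bar h(w)=e\in E(H)$.

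The step I expect to need the most care is the chain of equalities $e_{1}=\cdots=e_{k-1}$: it rests on the fact, hardwired into $E_{44}$, that an edge between two $V_{4}$-vertices carries the \emph{same} label $e\in E(H)$, which is precisely how the construction $P$ encodes the combinatorial requirement that an entire $D$-path in $G\setminus F$ be mapped into a single edge of $H$. Once this propagation along the path is in hand, the case split $k=1$ versus $k\geq 2$ and the colour-based identification of $\bar h(v)$ with the correct endpoint of $e$ are routine.
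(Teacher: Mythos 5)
Your proof is correct and follows essentially the same route as the paper: normalise so that $\pi_1\circ h=\id$ by precomposing with $(\pi_1\circ h)^{-1}$, read off $\bar h$ from the second coordinate, and verify the homomorphism property by tracing the witnessing $D$-path through $E_{11}$, $E_{44}$, and $E_{14}$. Your use of the colour constraint to pin down $\bar h(v)=a$ and $\bar h(w)=b$ is a slightly more explicit variant of the paper's final step, but the argument is the same.
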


\begin{proof}
Let $\rho:= \pi_1\circ h$. By assumption $\rho$ is an automorphism of $G$,
hence so is $\rho^{-1}$. Thus $h\circ \rho^{-1}$ is a homomorphism from $G$
to $P$ with
\begin{equation*}%\label{eq:id}
\pi_1\circ (h\circ \rho^{-1})= (\pi_1\circ h)\circ\rho^{-1}=(\pi_1\circ h)\circ (\pi_1\circ h)^{-1}=
\id.
\end{equation*}
Hence for every $u\in V(G)$ there is a $w$ such that
\begin{equation}\label{eq:h}
h\circ \rho^{-1}(u) = (u, w)
\end{equation}
Let
\[
\bar h:= \pi_2\circ h\circ \rho^{-1}.
\]
We claim that $\bar h$ restricted to $V(G)\setminus (F\cup D)$ is the
desired homomorphism from $(G\setminus F)/D$ to $H$.

\medskip
Let $u\in V(G)\setminus (F\cup D)$. By the definition of $V(P)$ in (P1)
and~\eqref{eq:h} we have $\bar h(u)\in V_1$, and thus by the definition of
$V_1$,
\[
\bar h(u)\in V(H) \ \text{\; with \; $\chi(\bar h(u))= u$}.
\]
Next, let $uv\in E\big((G\setminus F)/D\big)$. We have to show $\bar h(u)
\bar h(v)\in E(H)$. By the definition of $(G\setminus F)/D$ there is a path
\[
u=v_1 \to v_2 \to \cdots \to v_k= v
\]
in $G\setminus F$ with $k\ge 2$ and all $v_i\in D$ for $1< i < k$. If $k=2$,
then $uv\in E(G)$. We can conclude
\[
\left\{\big(u, \bar h(u)\big), \big(v, \bar h(v)\big)\right\}=\left\{h\circ \rho^{-1}(u), h\circ \rho^{-1}(v)\right\}
 \in E(P),
\]
because $h\circ \rho^{-1}$ is a homomorphism. Then $\big\{\bar h(u), \bar
h(v)\big\}\in E(H)$ follows directly from the definition of $E_{11}$ in
(P2). So assume $k> 2$. Again by~\eqref{eq:h} and (P1) for some pairwise
distinct $a,b\in V(H)$ and $w_2, \ldots, w_{k-1}$
\begin{align*}
h\circ \rho^{-1}(u) &= (u,a), \\
h\circ \rho^{-1}(v_2) &= (v_2, w_2), \\
 & \ \ \vdots \\
h\circ \rho^{-1}(v_{k-1}) &= (v_{k-1}, w_{k-1}), \\
h\circ \rho^{-1}(v) &= (v, b).
\end{align*}
As every $v_i$ is associated with $\{u,v\}$, there are $e_2, \ldots,
e_{k-1}\in E(H)$ with $w_i= \vv_{v_i, e_i}$ by the definition of $V_4$ in
(P1). Since $h\circ \rho^{-1}$ is a homomorphism from $G$ to $P$,
\begin{align*}
\big\{(u,a), (v_2,\vv_{v_2, e_2})\big\}&\in E(P), \\
\big\{(v_2, \vv_{v_2, e_2}),(v_3, \vv_{v_3, e_3})\big\}&\in E(P), \\
  &\vdots \\
\big\{(v_{k-2}, \vv_{v_{k-2}, e_{k-2}}), (v_{k-1}, \vv_{v_{k-1}, e_{k-1}})\big\}&\in E(P), \\
\big\{(v_{k-1}, \vv_{v_{k-1}, e_{k-1}}), (v,b)\big\}&\in E(P).
\end{align*}
Then by the definition of $E_{44}$ in (P2), we conclude $e_2= \cdots
=e_{k-1}$. Finally, the definition of $E_{14}$ implies that $e_2= \{a,b\}$,
i.e., $\big\{\bar h(u), \bar h(v)\big\}\in E(H)$.
\end{proof}

\begin{lemma}\label{lem:embedlift}
If there is an embedding $\bar h$ from $\big(G\setminus F)/D$ to $H$ with
$\chi(\bar h(v))= v$ for every $v\in V(G)\setminus (F\cup D)$, then there is
an embedding from $G$ to $P$.
\end{lemma}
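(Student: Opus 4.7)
The plan is to invert the construction of Lemma~\ref{lem:embedproj}: given the embedding $\bar h$, I will define $h : V(G) \to V(P)$ by cases, choosing the image of each vertex from exactly one of $V_1, \ldots, V_4$ in (P1), and then verify that $h$ is both injective and a homomorphism. Concretely, I set $h(u) := (u, \bar h(u)) \in V_1$ for $u \in V(G) \setminus (F \cup D)$, which lies in $V_1$ since $\chi(\bar h(u)) = u$; $h(u) := (u, u) \in V_2$ for $u \in F$ or for $u \in D$ not associated with any vertex or edge of $(G \setminus F)/D$; $h(u) := (u, \vv_{u, \bar h(v)}) \in V_3$ for $u \in D$ associated with a vertex $v$ of $(G \setminus F)/D$; and $h(u) := (u, \vv_{u, \bar e}) \in V_4$ for $u \in D$ associated with an edge $e = \{v, w\}$ of $(G \setminus F)/D$, with $\bar e := \{\bar h(v), \bar h(w)\} \in E(H)$ (an edge of $H$ because $\bar h$ is an embedding of $(G \setminus F)/D$). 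Each image is a legitimate element of $V(P)$, and injectivity of $h$ is immediate: the first coordinate of $h(u)$ is always $u$ itself.

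The edge-preservation check is the substantive part. For each $uv \in E(G)$ I run a case analysis based on the location of the endpoints in $F$, $D$, or $V(G) \setminus (F \cup D)$ and show that $h(u)h(v)$ falls into the appropriate $E_{ij}$ in (P2). If both endpoints are in $V(G) \setminus (F \cup D)$, then $uv$ induces an edge of $(G \setminus F)/D$, so $\bar h(u)\bar h(v) \in E(H)$ and the edge lies in $E_{11}$. If some endpoint lies in $F$, the edge requirement of $E_{12}, E_{22}, E_{23}$, or $E_{24}$ reduces to $uv \in E(G)$, which holds by hypothesis. The interesting cases involve $D$: hypothesis (S3), $\deg^{G \setminus F}(w) \le 2$ for $w \in D$, forces the induced subgraph $(G \setminus F)[D]$ to be a disjoint union of paths and cycles, so any two adjacent $D$-vertices lie on a common maximal $D$-path or cycle and must share the same association label (or both be unassociated), placing the edge in $E_{22}$, $E_{33}$, or $E_{44}$ with matching second coordinates. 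When the edge crosses from $u \in V(G) \setminus (F \cup D)$ to $v \in D$, the same degree bound forces $v$ to be an endpoint of its maximal $D$-path whose exit leads to $u$; that path is then associated either with $u$ itself as a vertex (yielding an $E_{13}$-edge) or with an edge of $(G \setminus F)/D$ incident to $u$ (yielding an $E_{14}$-edge, for which $\bar h(u)$ belongs to the label $\bar e$ by construction).

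The main obstacle is precisely this last case analysis: correctly tracing the associations of $D$-vertices along maximal $D$-paths in $(G \setminus F)[D]$ to identify the right $E_{ij}$-class for each edge. This is the only place where hypothesis (S3) is essentially used; once this bookkeeping is done, the rest of the verification is routine.
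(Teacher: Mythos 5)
Your proof is correct and follows essentially the same approach as the paper: you define $h$ case by case according to whether $u$ lies in $V(G)\setminus(F\cup D)$, in $F$, or in $D$ with the three possible association types, observe that injectivity is immediate from the first coordinate, and verify edge preservation by matching each edge $uv\in E(G)$ to the appropriate $E_{ij}$. The paper states the same definition of $h$ verbatim and runs the same case analysis, merely abbreviating the $D$-to-$D$ and mixed-endpoint cases as ``similar and easy,'' whereas you spell out the role of (S3) in forcing adjacent $D$-vertices onto a common maximal $D$-path or cycle with matching associations.
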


\begin{proof}
We define a mapping $h: V(G)\to V(P)$ and show that it is an embedding.
\begin{itemize}
\item For $u\in V(G)\setminus (F\cup D)$ let $h(u):= \big(u, \bar
    h(u)\big)$, which is well defined by $\chi(\bar h(u))= u$.

\item For $u\in F$ let $h(u):= (u,u)$.

\item For $u\in D$ without being associated with any vertex or edge in
    $(G\setminus F)/D$ let $h(u):= (u,u)$.

\item Let $u\in D$ be associated with a (unique) $v \in V\big((G\setminus
    F)/D\big)$. We set $h(u):= \big(u, \vv_{u, \bar h(v)}\big)$.

\item Let $u\in D$ be associated with a (unique) %\yrand{This is the only place we need $D$ consist of vertices of degree $\le 2$.}
    $vw \in E\big((G\setminus F)/D\big)$. We set $h(u):= \big(u, \vv_{u,
    \bar h(v) \bar h(w)}\big)$.
\end{itemize}
The injectivity of $h$ is trivial. To see that it is a homomorphism, let
$uv\in E(G)$ and we need to establish $h(u)h(v)\in E(P)$.
\begin{itemize}
\item Assume $u,v\in V(G)\setminus (F\cup D)$. Then $uv\in E(G)$
  implies $uv\in E\big((G\setminus F)/D\big)$, and as $\bar h$ is a
    homomorphism from $\big(G\setminus F\big)/D$ to $H$, it follows that
    $\bar h(u)\bar h(v)\in E(H)$. So by the definition of $E_{11}$ in (P1)
    we conclude $\big(u, \bar h(u)\big) \big(v, \bar h(v)\big)\in E(P)$.

\item Let $u\in V(G)\setminus (F\cup D)$ and $v\in D$. Furthermore, assume
    that $v$ is associated with an edge $wz\in E\big((G\setminus
    F)/D\big)$.
    %let $wz\in E\big((G\setminus F)/D\big)$ be the edge with which $v$ is
    %associated.
    Hence, $h(u)= \big(u, \bar h(u)\big)$ and $h(v)= \big(v, \vv_{v, \bar
    h(w)\bar h(z)}\big)$. Recall $uv\in E(G)$, therefore $u= w$ or $u= z$.
    Then, $\bar h(u)\in \big\{\bar h(w), \bar h(z)\big\}$, and the
    definition of $E_{14}$ in (P2) implies that $h(u) h(v)\in E(P)$.

\item Assume both $u,v\in D$ and they are associated with some edges
    $e_1,e_2\in E\big((G\setminus F)/D\big)$. Then $e_1=e_2$ by $uv\in
    E(G)$, and $h(u)h(v)\in E(P)$ follows from the definition of $E_{44}$
    in (P2).

\item All the remaining cases are similar and easy.
\end{itemize}
\end{proof}

\begin{definition}\label{def:rigidskeleton}
A skeleton $\mathcal S= (F, D)$ is \emph{rigid} if for every graph $H$,
every $\chi: V(H) \to V(G)\setminus (F\cup D)$, and every \emph{embedding}
$h$ from $G$ to $P= P(G, \mathcal S, H, \chi)$, it holds that $\big\{(u,u)
\bigmid u\in F\big\}\subseteq h(V(G))$.
\end{definition}

\begin{proposition}\label{prop:rigidcomputable}
There is an algorithm which lists all rigid skeletons of an input graph $G$.
\end{proposition}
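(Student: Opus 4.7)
The plan is to enumerate all pairs $(F,D)$ with $F,D\subseteq V(G)$ and, for each, decide whether it is a rigid skeleton. The number of such pairs is at most $4^{|V(G)|}$, which is finite. Conditions (S2) and (S3) of Definition~\ref{def:skeleton} are local and trivially decidable. For (S1) --- that $F$ is a frame --- we enumerate the at most $|V(G)|^{|V(G)|}$ self-maps of $V(G)$ and verify that every endomorphism whose image contains $F$ is surjective. The remaining, nontrivial task is to decide rigidity.

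The main obstacle is that Definition~\ref{def:rigidskeleton} quantifies over \emph{all} graphs $H$ and all colorings $\chi\colon V(H)\to V(G)\setminus(F\cup D)$. The key step is a small-model reduction: if $(F,D)$ is not rigid, then there is a witness with $|V(H)|\le 2|V(G)|$. Indeed, given any witnessing triple $(H,\chi,h)$, where $h\colon G\to P(G,\mathcal{S},H,\chi)$ is an embedding with $(u,u)\notin h(V(G))$ for some $u\in F$, let $A\subseteq V(H)$ consist of all vertices of $H$ that appear as second coordinates of vertices in $h(V(G))\cap(V_1\cup V_3)$, together with both endpoints of every edge $e\in E(H)$ such that some $(u,\vv_{u,e})$ lies in $h(V(G))\cap V_4$. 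Each of the $|V(G)|$ image vertices contributes at most two elements to $A$, so $|A|\le 2|V(G)|$. Let $H'$ be the subgraph of $H$ induced on $A$ and $\chi':=\chi|_A$. Inspection of (P1) and (P2) shows that $P(G,\mathcal{S},H',\chi')$ is a subgraph of $P(G,\mathcal{S},H,\chi)$; crucially, because $H'$ is taken \emph{induced}, every edge $ab\in E(H)$ with $a,b\in A$ is retained, so every $E_{11}$-edge of $P$ between two image vertices is preserved. Thus $h$ remains an embedding from $G$ into $P(G,\mathcal{S},H',\chi')$, furnishing the desired bounded-size witness.

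With the small-model reduction in hand, rigidity of a fixed skeleton $(F,D)$ is decided by enumerating all graphs $H$ on vertex set $[2|V(G)|]$ and all colorings $\chi\colon V(H)\to V(G)\setminus(F\cup D)$, constructing $P(G,\mathcal{S},H,\chi)$, and checking by brute force over all mappings $V(G)\to V(P)$ whether any is an embedding whose image omits $(u,u)$ for some $u\in F$. This is a finite procedure. Listing all rigid skeletons of $G$ then amounts to iterating the combined skeleton-and-rigidity test over all candidate pairs $(F,D)$ and outputting those that pass. The main subtle point is the verification that taking $H'$ to be the \emph{induced} subgraph (rather than merely a spanning subgraph) preserves precisely the edges of $P$ actually used by $h$, which is what makes the uniform bound $|V(H)|\le 2|V(G)|$ sufficient.
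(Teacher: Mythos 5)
Your proposal is correct and follows essentially the same strategy as the paper's proof: reduce rigidity to a finitely checkable property via a small-model argument, defining a set of at most $2|V(G)|$ vertices of $H$ (the paper calls it $X$, you call it $A$) from the second coordinates/indices of the image $h(V(G))$, and observing that $h$ remains an embedding into the product built from the induced subgraph $H[X]$. One small wording issue: you describe $A$ as containing ``vertices of $H$ that appear as second coordinates of vertices in $h(V(G))\cap(V_1\cup V_3)$,'' but for a $V_3$-vertex $(u,\vv_{u,a})$ the literal second coordinate is the fresh symbol $\vv_{u,a}$, not the vertex $a\in V(H)$; your budget count (at most two elements per image vertex) shows you intend to extract the index $a$, which matches the paper's definition of $X$, so this is a phrasing slip rather than a gap.
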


\begin{proof}
Let $G$ be a graph and $F, D\subseteq V(G)$. Clearly it is
decidable whether $\mathcal S= (F,D)$ is a skeleton by
Definition~\ref{def:skeleton}. Moreover, we observe that $\mathcal S$ is
\emph{not} a rigid skeleton if and only if there is graph $H$, a mapping
$\chi: V(H)\to V(G)\setminus (F\cup D)$, and an embedding $h$ from $G$ to
$P= P(G, \mathcal S, H, \chi)$ such that
\begin{equation}\label{eq:notrigid}
\big\{(u,u) \bigmid u\in F\big\}\not\subseteq h(V(G)).
\end{equation}
We define a set
\begin{align*}
X= & \ \big\{a\in V(H) \bigmid \text{$(u,a)\in h(G)$ for some $u\in V(G)\setminus (F\cup D)$}\big\} \\
 & \qquad \cup \left\{ a\in V(H) \bigmid \text{$(u, \vv_{u, a})\in h(G)$ for some $u\in D$}\right\} \\
 & \qquad \cup \left\{ a,b\in V(H) \bigmid \text{$(u, \vv_{u, ab})\in h(G)$ for some $u\in D$}\right\}.
\end{align*}
It is routine to verify that $h$ is an embedding from $G$ to $P'= P\big(G,
\mathcal S, H[X], \chi_{\upharpoonright X}\big)$ such that
\eqref{eq:notrigid} also holds. Hence, the induced subgraph $H[X]$ with the
coloring $\chi_{\upharpoonright X}$ also witnesses that $\mathcal S$ is not
rigid. Observe that $|X| \le 2|V(G)|$.

\medskip
Therefore, to list all the rigid skeletons of $G$, we enumerate all pairs
$\mathcal S= (F,D)$,
\begin{itemize}
\item check whether $\mathcal S$ is a skeleton,

\item and if so, then check whether it is rigid by going through all
    graphs on the vertex set $[n]$ with $n\le 2|V(G)|$.
\end{itemize}
\end{proof}

\begin{definition}
A class $\cls K$ of graphs is \emph{rich} if for every $k\in \mathbb N$
there is a graph $G\in \cls K$ such that $G$ has a rigid skeleton $(F,D)$
with
\begin{equation}\label{eq:tw}
\tw\big((G\setminus F)/D\big)\ge k.
\end{equation}
\end{definition}

\begin{theorem}\label{thm:richW1}
Let $\cls K$ be a recursively enumerable and rich class of graphs. Then
$\pEmb(\cls K)$ is hard for \W 1.
\end{theorem}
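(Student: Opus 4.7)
The plan is to reduce from $\pColEmb(\cls K')$, where $\cls K'$ is the class of all ``contracted'' graphs $(G\setminus F)/D$ arising from graphs $G\in\cls K$ equipped with some rigid skeleton $(F,D)$. Because $\cls K$ is rich, $\cls K'$ contains graphs of arbitrarily large treewidth, and because $\cls K$ is recursively enumerable and Proposition~\ref{prop:rigidcomputable} effectively lists all rigid skeletons of any given graph, $\cls K'$ is itself recursively enumerable. Lemma~\ref{lem:colembedW1} then immediately gives that $\pColEmb(\cls K')$ is \W 1-hard, providing the source problem for the reduction.

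Given an instance $(G',H,\chi)$ of $\pColEmb(\cls K')$, I would enumerate pairs $(G,(F,D))$ with $G\in\cls K$ and $(F,D)$ a rigid skeleton of $G$ until I find one with $(G\setminus F)/D = G'$. Such a pair exists by the definition of $\cls K'$, and the procedure runs in time depending only on $G'$; in particular this yields a computable $f$ with $|G|\le f(|G'|)$, as required for an \fpt-reduction. I then build the product graph $P:=P(G,(F,D),H,\chi)$ from Definition~\ref{def:product} and output the instance $(G,P)$ of $\pEmb(\cls K)$.

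Correctness is already encoded in the lemmas of this section. If $\bar h$ is an embedding from $(G\setminus F)/D = G'$ to $H$ respecting $\chi$, Lemma~\ref{lem:embedlift} lifts it to an embedding $G\to P$. Conversely, if $h\colon G\to P$ is any embedding, then rigidity of $(F,D)$ forces $\{(u,u)\mid u\in F\}\subseteq h(V(G))$; Lemma~\ref{lem:productframe} then ensures that $\pi_1\circ h$ is an automorphism of $G$, and finally Lemma~\ref{lem:embedproj} extracts the required colored embedding $G'\to H$. Thus $(G',H,\chi)$ is a yes-instance of $\pColEmb(\cls K')$ if and only if $(G,P)$ is a yes-instance of $\pEmb(\cls K)$.

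The main obstacle is bookkeeping rather than conceptual: one has to verify that the search step actually terminates in time computable from $|G'|$ and outputs a $G$ whose size is bounded by a computable function of $|G'|$. This relies crucially on recursive enumerability of $\cls K$ together with Proposition~\ref{prop:rigidcomputable}; without enumerability of $\cls K$ only a non-uniform hardness result would be available, exactly as flagged in the footnote to Theorem~\ref{thm:grohe}.
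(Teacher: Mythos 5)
Your proposal follows essentially the same route as the paper's proof: construct a recursively enumerable class of contracted graphs $(G\setminus F)/D$ of unbounded treewidth, invoke Lemma~\ref{lem:colembedW1} to get \W 1-hardness of the corresponding colored embedding problem, and reduce to $\pEmb(\cls K)$ via the product construction with correctness from Lemmas~\ref{lem:embedlift}, \ref{lem:productframe}, and~\ref{lem:embedproj}. The only cosmetic difference is that the paper defines its auxiliary class $\cls K^*$ as a specific sequence indexed by the treewidth bound (which makes recovering $G_i$ from $G^*_i$ a matter of replaying the enumeration), while you take the class of all contracted graphs; both yield an r.e.\ class of unbounded treewidth and the same reduction.
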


\begin{proof}
We define a sequence of graphs $G_1, G_2, \ldots$ and sets $F_i,D_i\subseteq
V(G_i)$ as follows. For every $i\in \mathbb N$ we enumerate graphs $G$ in
the class $\cls K$ one by one. For every $G$ we list all the rigid skeletons
$(F, D)$ of $G$ by Proposition~\ref{prop:rigidcomputable}. Then we check
whether there is such a rigid skeleton $(F, D)$ satisfying~\eqref{eq:tw}. If
so, we let $G_i:= G$, $(F_i,D_i):= (F,D)$, and define
\[
G^*_i:= (G_i\setminus F_i)/D_i.
\]
By our assumption, $G_i$ will be found eventually, and $G^*_i$ is well
defined and computable from $G_i$. It follows that the class
\[
\cls K^*:= \big\{G^*_i \bigmid i\in \mathbb N\big\}
\]
is recursively enumerable and has unbounded treewidth.

By Lemma~\ref{lem:colembedW1}, we conclude that $\pColEmb(\cls K^*)$ is
\W 1-hard. Hence it suffices to give an \fpt-reduction from $\pColEmb(\cls
K^*)$ to $\pEmb(\cls K)$. Let $G^*_i\in \cls K^*$. Thus $G^*_i=
(G_i\setminus F_i)/D_i$ for the rigid skeleton $\mathcal S_i= (F_i,D_i)$.
Then for every graph $H$ and $\chi: V(H)\mapsto V(G^*_i)$ we claim that
\begin{align*}
\text{there is an embedding $h$ from} & \ \text{$G^*_i$ to $H$ with $\chi(h(v))= v$ for every $v\in V(G^*_i)$} \\
 & \iff \text{there is an embedding from $G_i$ to $P\big(G_i, \mathcal S_i, H, \chi\big)$}.
\end{align*}
The direction from left to right is by Lemma~\ref{lem:embedlift}. The other
direction follows from the rigidity of $\mathcal S_i$,
Lemma~\ref{lem:productframe}, and Lemma~\ref{lem:embedproj}.
\end{proof}

%In Section~\ref{sec:richnessgridwall} we will prove the richness of grids
%and walls. More precisely:
%
%\begin{proposition}\label{prop:gridwallrich}
%Let $\cls K$ be a class of graphs.
%\begin{enumerate}
%\item[(i)] If every $k\in \mathbb N$ there exists a grid $\str G_{s,t}\in
%    \cls K$ with $\min \{s,t\}\ge k$. Then $\cls K$ is rich.
%
%\item[(ii)] If every $k\in \mathbb N$ there exists a wall $\str W_{s,t}\in
%    \cls K$ with $\min \{s,t\}\ge k$. Then $\cls K$ is rich.
%\end{enumerate}
%\end{proposition}
%
%Now the following more general version of Theorem~\ref{thm:main} is an
%immediate consequence of Theorem~\ref{thm:richW1} and
%Proposition~\ref{prop:gridwallrich}.
%\begin{theorem}
%Let $\cls K$ be a recursively enumerable class of graphs. Then $\pEmb(\cls
%K)$ is \W 1-hard if one of the following conditions is satisfied.
%\begin{enumerate}
%\item For every $k\in \mathbb N$ there exists a grid $\str G_{k_1,k_2}\in
%    \cls K$ with $\min\{k_1, k_2\}\ge k$.
%
%\item For every $k\in \mathbb N$ there exists a wall $\str W_{k_1,k_2}\in
%    \cls K$ with $\min\{k_1, k_2\}\ge k$.
%\end{enumerate}
%\end{theorem}

\section{Grids and Walls}\label{sec:richnessgridwall}
In this section we show that the classes of grids and walls are rich. More
precisely:

\begin{proposition}\label{prop:gridwallrich}
Let $\cls K$ be a class of graphs.
\begin{enumerate}
\item[(i)] If every $k\in \mathbb N$ there exists a grid $\str G_{s,t}\in
    \cls K$ with $\min \{s,t\}\ge k$. Then $\cls K$ is rich.

\item[(ii)] If every $k\in \mathbb N$ there exists a wall $\str W_{s,t}\in
    \cls K$ with $\min \{s,t\}\ge k$. Then $\cls K$ is rich.
\end{enumerate}
\end{proposition}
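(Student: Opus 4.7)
Proof plan. The plan is to scale up Example~\ref{exam:gridskeleton} to arbitrary $k$. Given $k$, choose a grid $\str G_{s,t}\in\cls K$ with $s,t$ linear in $k$ (say $s=2k+3$ and $t=2k+4$; the example realises $k=2$ inside $\str G_{7,8}$), and define $(F,D)$ by the natural periodic generalisation: $F$ consists of the two leftmost and two rightmost columns, the bottom row and the top two rows, together with interior ``divider'' vertices placed on every second interior column at every second interior row; $D$ consists of the remaining degree-$\leq 2$ interior vertices, which form length-two ``horizontal/vertical bridges'' between adjacent cell-centres or single ``stubs'' attached to cell-centres next to the frame. The cell-centres $V(G)\setminus(F\cup D)$ then form a $k\times k$ array and are pairwise non-adjacent in $G$; contracting bridges gives $(G\setminus F)/D\cong\str G_{k,k}$, whence $\tw((G\setminus F)/D)\geq k$.

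Axioms (S2) and (S3) are immediate: $F\cap D=\emptyset$ by placement, and every $D$-vertex has at most two neighbours in $G\setminus F$ since its other $G$-neighbours lie in $F$. Axiom (S1), the frame property, is the first nontrivial step. I plan to establish it by exploiting that $F$ contains the entire boundary cycle of $\str G_{s,t}$: any endomorphism $h$ with $F\subseteq h(V(G))$ must place the four degree-two corners into its image, forcing degree-two vertices of $G$ (necessarily corners) to be mapped to each corner; this rigidity propagates along the boundary cycle, and the interior dividers fill in enough extra constraints that $h$ must be a full automorphism.

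The main obstacle is rigidity of $(F,D)$ in the sense of Definition~\ref{def:rigidskeleton}. The crucial observation, from direct inspection of Definition~\ref{def:product}(P1), is that the only vertex of $V(P)$ whose first coordinate lies in $F$ is of the form $(u,u)\in V_2$: the clauses for $V_1$, $V_3$, and $V_4$ all require the first coordinate to lie outside $F$. Hence $\{(u,u):u\in F\}\subseteq h(V(G))$ is equivalent to $F\subseteq\rho(V(G))$, where $\rho:=\pi_1\circ h$ is the endomorphism of Lemma~\ref{lem:pi1hom}. The key structural input for showing this inclusion is that the cell-centres form an independent set in $G$, so by inspection of $E_{11}$ the set $V_1$ is an independent set in $P$; hence $h^{-1}(V_1)$ is independent in $G$. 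Combined with the fact that each vertex of $V_3\cup V_4$ has its $P$-neighbours confined to a single bridge, this rules out the boundary of $G$ (a long cycle containing edges) being mapped into $V_1\cup V_3\cup V_4$, so the boundary must land in $V_2$; a cycle-length argument then shows the boundary of $G$ maps bijectively onto the boundary of $F$ inside $V_2$, hitting every $u$ on the outer boundary under $\rho$. Iterating this argument inward, using the divider structure of $F$, yields $F\subseteq\rho(V(G))$, hence rigidity. The delicate point is the inductive propagation from the outer boundary to the interior dividers, where the specific grid structure is used most heavily.

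The wall case is handled by the same template after minor parity adjustments to $F$ and $D$ to accommodate the brick layout of $\str W_{s,t}$. Walls share all the features used above (maximum degree $3$, unique degree-two corners, cell-centres independent in $G$, long boundary cycle), so the rigidity argument transfers with only cosmetic changes, and the quotient is a wall isomorphic to $\str W_{k,k}$ of tree-width at least $k$, establishing richness of both classes.
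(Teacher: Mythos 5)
Your skeleton for grids and the resulting treewidth calculation are essentially the paper's (Definition~\ref{defn:gridskeleton} and Lemma~\ref{lem:gridskeletontw}), and the observation that a vertex of $P$ with first coordinate in $F$ must be of the form $(u,u)\in V_2$, so that $\{(u,u):u\in F\}\subseteq h(V(G))$ is equivalent to $F\subseteq\pi_1\circ h(V(G))$, is correct and clean. Beyond that, however, there are two genuine gaps.

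For the grid rigidity (Proposition~\ref{prop:gridrigid}) your boundary-driven argument is not developed far enough to be checkable and differs substantially from the paper's. You assert that the boundary cycle of $G$ must land in $V_2$ and that a ``cycle-length argument'' then forces it to map onto the boundary of $F$; but $G[F]$ contains many cycles of the same length as the boundary of $\str G_{s,t}$ (one can snake through the even--even ``divider'' vertices), so length alone does not pin the image down, and you have not ruled out images of the boundary that meet $V_2$ only partially. The paper instead counts disjoint $4$-cycles $\str Z_{i,j}$ against $|C|$, shows every $4$-cycle of $P$ contains a vertex of $C$ and every pair of $C$-vertices has even distance (Lemma~\ref{lem:C}), pins down an orientation via a corner $4$-cycle (Lemma~\ref{lem:correctcenter}), and then propagates by distance arguments (Lemmas~\ref{lem:4corners}, \ref{lem:gridcorners}). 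None of these steps appear in your sketch, and the ``iterating inward'' and ``cycle-length'' steps you rely on would need a comparable amount of new machinery to close.

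The wall case is where the proposal is actually wrong rather than merely incomplete. You claim walls share with grids the feature that the cell-centres $V(G)\setminus(F\cup D)$ are independent and that the rigidity argument ``transfers with only cosmetic changes.'' The paper's wall skeleton $\mathcal S^{\rm wall}_{s,t}$ has $D=\emptyset$, so $V(G)\setminus(F\cup D)$ is an induced sub-wall, which is very much not an independent set, and $V_1$ is not independent in $P$. Moreover walls have no $4$-cycles, so even if one attempted the paper's $4$-cycle counting it would be vacuous; the paper instead exploits that $5$-cycles are the shortest odd cycles and appear only in $F_1$ (Lemma~\ref{lem:5cycle}, Corollary~\ref{cor:oddcycle}, Lemma~\ref{lem:oddcycle}). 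Because your rigidity sketch leans on $V_1$-independence and on bridges in $D$, it simply does not apply to the paper's wall skeleton, and you have not specified an alternative skeleton with nonempty $D$ for which it would.

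One smaller point: the hypothesis only guarantees some grid with $\min\{s,t\}\ge k$, so you cannot fix $s=2k+3$, $t=2k+4$; you must, as the paper does, verify the construction for all sufficiently large $s,t$ and read off the treewidth from Lemma~\ref{lem:gridskeletontw}.
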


Now the following more general version of Theorem~\ref{thm:main} is an
immediate consequence of Theorem~\ref{thm:richW1} and
Proposition~\ref{prop:gridwallrich}.
\begin{theorem}
Let $\cls K$ be a recursively enumerable class of graphs. Then $\pEmb(\cls
K)$ is \W 1-hard if one of the following conditions is satisfied.
\begin{enumerate}
\item For every $k\in \mathbb N$ there exists a grid $\str G_{k_1,k_2}\in
    \cls K$ with $\min\{k_1, k_2\}\ge k$.

\item For every $k\in \mathbb N$ there exists a wall $\str W_{k_1,k_2}\in
    \cls K$ with $\min\{k_1, k_2\}\ge k$.
\end{enumerate}
\end{theorem}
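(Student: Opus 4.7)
The plan is to observe that this theorem is essentially a repackaging of Theorem~\ref{thm:richW1} once Proposition~\ref{prop:gridwallrich} is granted: the two hypotheses of the theorem match verbatim the two premises of the proposition, so the proof is a one-step invocation.

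More concretely, I would proceed as follows. Suppose $\cls K$ is a recursively enumerable class of graphs satisfying condition~1 or condition~2. If condition~1 holds, then $\cls K$ contains grids $\str G_{k_1,k_2}$ with $\min\{k_1,k_2\}$ arbitrarily large, so by Proposition~\ref{prop:gridwallrich}(i) the class $\cls K$ is rich. If instead condition~2 holds, $\cls K$ contains walls $\str W_{k_1,k_2}$ with $\min\{k_1,k_2\}$ arbitrarily large, and Proposition~\ref{prop:gridwallrich}(ii) again gives that $\cls K$ is rich. In either case we have a recursively enumerable and rich class, so Theorem~\ref{thm:richW1} immediately yields that $\pEmb(\cls K)$ is \W1-hard.

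There is no real obstacle at the level of this theorem itself, since every step is a direct application of a previously stated result. The substantive work is done elsewhere, namely in Theorem~\ref{thm:richW1} (which encodes the frame/skeleton machinery and the reduction from $\pColEmb$) and, crucially for the present theorem, in Proposition~\ref{prop:gridwallrich} that remains to be proved in Section~\ref{sec:richnessgridwall}. For that proposition one must exhibit, for arbitrarily large $k$, a grid $\str G_{s,t}$ (resp.\ wall $\str W_{s,t}$) together with a rigid skeleton $(F,D)$ such that $\tw\big((G\setminus F)/D\big) \ge k$; the construction is suggested by Example~\ref{exam:gridskeleton}, where $F$ encases the boundary plus a vertical stripe of ``anchors'' and $D$ consists of the interior vertices that compress edges of the quotient $(G\setminus F)/D$ back into a smaller grid-like graph.
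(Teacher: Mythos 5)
Your proof is correct and follows exactly the same route as the paper: the theorem is stated in the paper as an immediate consequence of Theorem~\ref{thm:richW1} and Proposition~\ref{prop:gridwallrich}, and you invoke those two results in precisely the same way. There is nothing to add at the level of this theorem.
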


%
%Grids will be our main focus. The proof for walls is easier.
%
%\begin{definition}\label{def:gridwall}
%Let $s,t\in \mathbb N$. A \emph{$(s\times t)$-grid $\str G_{s, t}$} has
%\begin{eqnarray*}
%V(\str G_{s, t})= [s]\times [t] & \text{and} &
%E(\str G_{s, t})= \big\{(i,j)(i',j')\bigmid |i-i'|+ |j-j'|=1\big\}.
%\end{eqnarray*}
%And the \emph{wall $\str W_{s, t}$} of width $s$ and height $t$ is defined
%by
%\begin{align*}
%V(\str W_{s, t}) & = \big\{v_{i,j} \bigmid \text{$i\in [s+1]$ and $j\in [t]$}\big\}
% \cup \big\{v_{i,t+1} \bigmid \text{$i\in [s+1]$ and odd $t$}\big\} \\
% & \quad \cup \big\{u_{i,j} \bigmid \text{$i\in [s+1]$ and $2\le j\le t$]}\big\}
%  \cup \big\{u_{i,t+1} \bigmid \text{$i\in [s+1]$ and even $t$}\big\}, \\
%E(\str W_{s, t}) & = \big\{v_{i,1}v_{i+1,1}\bigmid i\in [s]\big\} \\
% & \quad \cup \big\{v_{i,t+1}v_{i+1,t+1}\bigmid \text{$i\in [s]$ and odd $t$}\big\} \\
% & \quad \cup \big\{u_{i,t+1}u_{i+1,t+1}\bigmid \text{$i\in [s]$ and even $t$}\big\} \\
% & \quad \cup \big\{v_{i,j}u_{i,j}, u_{i,j}v_{i+1,j} \bigmid \text{$i\in [s]$ and $2\le j\le t$}\big\} \\
% & \quad \cup \big\{v_{i,j}v_{i,j+1} \bigmid \text{$i\in [s+1]$ and odd $j\in [t]$}\big\} \\
% & \quad \cup \big\{u_{i,j}u_{i,j+1} \bigmid \text{$i\in [s+1]$ and even $j\in [t]$}\big\}.
%\end{align*}
%\end{definition}
%
\subsection{The richness of grids}\label{subsec:grids}

\begin{lemma}\label{lem:gridframe}
Let $s,t\in \mathbb N$. Then the set of the four corner vertices
\[
F:= \big\{(1,1), (s,1), (1,t), (s,t) \big\}
\]
is a frame for $\str G_{s,t}$.
\end{lemma}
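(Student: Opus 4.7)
The plan is to exploit the fact that every graph homomorphism is distance-non-increasing, together with two elementary facts about $\str G_{s,t}$: (a) its diameter is $s+t-2$, attained only by pairs of opposite corners; and (b) for every vertex $v=(i,j)$, the sum $\dist(v,(1,1)) + \dist(v,(s,1)) + \dist(v,(1,t)) + \dist(v,(s,t))$ equals the constant $2(s+t-2)$, as one sees directly from $\dist((i,j),(i',j')) = |i-i'|+|j-j'|$.

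Given an endomorphism $h$ with $F \subseteq h(V(\str G_{s,t}))$, I first fix (necessarily distinct) preimages $a, b, c, d$ of $(1,1), (s,1), (1,t), (s,t)$. Since $\dist(a,d) \geq \dist(h(a),h(d)) = s+t-2$ is the diameter, fact (a) forces $\{a,d\}$ to be a pair of opposite corners; the same argument applied to $\{b,c\}$ produces a second pair of opposite corners, and since $a,b,c,d$ are distinct, the two pairs together exhaust $F$, so $\{a,b,c,d\}=F$.

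For any $v$, summing the four inequalities $\dist(v,x) \geq \dist(h(v),h(x))$ over $x \in \{a,b,c,d\} = F$ yields the same value $2(s+t-2)$ on both sides, by fact (b) applied once to $v$ and once to $h(v)$. Therefore each inequality is in fact an equality, giving $\dist(v,a) = \dist(h(v),(1,1))$ and the analogous identities for the other three corners. Since a vertex of $\str G_{s,t}$ is uniquely determined by its distances to the four corners (already the distances to $(1,1)$ and $(s,1)$ pin down both coordinates), the condition $h(v_1) = h(v_2)$ forces $v_1 = v_2$; hence $h$ is injective on the finite set $V(\str G_{s,t})$, and therefore bijective and in particular surjective, as required.

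The only step that requires genuine thought is the deduction that preimages of opposite corners are themselves opposite corners; everything afterwards is bookkeeping once the distance-preservation equalities are in hand. The degenerate cases $s=1$ or $t=1$, in which the four corners collapse to two, are handled by the same argument using just the two endpoints of the resulting path.
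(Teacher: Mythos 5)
Your proof is correct and follows essentially the same strategy as the paper's: the diameter argument to pin the preimages of the corners down to $F$ itself, followed by a distances-to-corners argument (the paper packages your fact (b) into its ``Claim~2''). The only noteworthy difference is that you prove injectivity of $h$ directly from the resulting distance equalities, whereas the paper first composes $h$ with a grid automorphism to reduce to the case $h\!\restriction_F=\id$; your route avoids having to justify that normalization.
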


\begin{proof}
Let $h$ be an endomorphism of $\str G_{s,t}$ with $F\subseteq h(\str
G_{s,t})$.

\medskip
\noindent \textit{Claim 1.} $h^{-1}(F)= F$. More precisely,
\begin{eqnarray*}
h^{-1}\big(\{(1,1), (s,t)\}\big)= \{(1,1), (s,t)\}
 & \text{and} &
h^{-1}\big(\{(1,t), (s,1)\}\big)= \{(1,t), (s,1)\},
\end{eqnarray*}
or
\begin{eqnarray*}
h^{-1}\big(\{(1,1), (s,t)\}\big)= \{(1,t), (s,1)\}
 & \text{and} &
h^{-1}\big(\{(1,t), (s,1)\}\big)= \{(1,1), (s,t)\}.
\end{eqnarray*}

\medskip
\noindent \textit{Proof of the claim.} As $F\subseteq h(\str G_{s,t})$,
\[
\{u,v\}:= h^{-1}\big(\{(1,1), (s,t)\}\big).
\]
is well defined. Clearly
\[
\dist^{\str G_{s,t}}(u,v)
\ge \dist^{\str G_{s,t}}\big(h(u),h(v)\big)
= \dist^{\str G_{s,t}}\big((1,1), (s,t)\big) = s+t-2,
\]
since $h$ is a homomorphism. But this is only possible if
\begin{eqnarray*}
\{u,v\}= \big\{(1,1), (s,t)\big\} & \text{or} & \{u,v\}= \big\{(1,t),
(s,1)\big\}.
\end{eqnarray*}
By the same reasoning,
\begin{eqnarray*}
h^{-1}\big(\{(1,t), (s,1)\}\big) = \big\{(1,t), (s,1)\big\} & \text{or} &
h^{-1}\big(\{(1,t), (s,1)\}\big) = \big\{(1,1), (s,t)\big\}.
\end{eqnarray*}
The claim then follows easily. \hfill$\dashv$

\medskip
So without loss of generality we can assume that $h(u)= u$ for every $u\in
F$.

\medskip
\noindent \textit{Claim 2.} Let $u,v\in V(\str G_{s,t})$. If for every $w\in
F$ we have
\[
\dist^{\str G_{s,t}}(u,w)\ge \dist^{\str G_{s,t}}(v,w),
\]
then $u=v$.

\medskip
\noindent \textit{Proof of the claim.} Routine. \hfill$\dashv$

\medskip
Now we are ready to show that $h(u)= u$ for every $u\in V(\str G_{s,t})$.
Note that for every $w\in F$
\begin{align*}
\dist^{\str G_{s,t}}(u,w)
 & \ge \dist^{\str G_{s,t}}(h(u),h(w)) & \text{($h$ is a homomorphism)}\\
 & = \dist^{\str G_{s,t}}(h(u),w) & \text{(our assumption $h(w)= w$)}.
\end{align*}
Claim~2 implies that $h(u)=u$.
\end{proof}

\begin{definition}\label{defn:gridskeleton}
Let $s,t\in \mathbb N$ with $s\ge 5$ and $t\ge 6$. Set
\begin{eqnarray}\label{eq:k1k2}
k_1:= %\max \big\{k\in \mathbb N \bigmid 2k< s\big\}=
  \floor{\frac{s-1}{2}}
 & \text{and} &
k_2:= %\max \big\{k\in \mathbb N \bigmid 2k\le t-2\big\} =
  \floor{\frac{t-2}{2}}.
\end{eqnarray}
Then we define $\mathcal S_{s,t}:= (F,D)$ with
\begin{align*}
F &= \big\{(i,j)\bigmid \text{\big($i\in [2]$ or\; $2k_1\le i \le s$\big)
 or \big($j=1$ or\; $2k_2< j \le t$\big)}\big\} \\
 & \hspace{5.8cm}\cup \big\{(2i,2j)\bigmid \text{$i\in [k_1]$ and $j\in [k_2]$}\big\}, \\
D &= \big\{(2i+1,2j) \bigmid \text{$i\in [k_1-1]$ and $j\in [k_2]$}\big\} \\
 & \hspace{3.9cm}
 \cup \big\{(2i,2j+1) \bigmid \text{$1< i< k_1$ and $j\in [k_2-1]$}\big\}.
\end{align*}
See Figure~\ref{fig:gridskeleton} for $\mathcal S_{7,8}$.
%\mrand{A picture would help}
\end{definition}

In the following we fix some $s,t\in \mathbb N$ with $s\ge 5$ and $t\ge 6$.

\begin{lemma}\label{lem:gridskeleton}
$\mathcal S_{s,t}$ is a skeleton of the grid $\str G_{s, t}$.
\end{lemma}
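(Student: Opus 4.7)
The plan is to verify the three clauses (S1)--(S3) of Definition~\ref{def:skeleton} for $\mathcal S_{s,t}=(F,D)$ in increasing order of tediousness. The decisive observation is that both $F$ and $D$ are defined by simple parity plus index-range conditions on the grid coordinates, so each clause reduces to an elementary case analysis; the only step requiring any real care is (S3).

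For (S1), I note that the definition of $F$ includes every vertex with $i\in[2]$, with $2k_1\le i\le s$, with $j=1$, or with $2k_2<j\le t$, so in particular $F$ contains the four corners $(1,1)$, $(s,1)$, $(1,t)$, $(s,t)$. By Lemma~\ref{lem:gridframe} the corner set is already a frame of $\str G_{s,t}$, and by Remark~\ref{rem:frame}(3) any superset of a frame is a frame; hence $F$ is a frame.

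For (S2), the two types of vertex in $D$ are $(2i+1,2j)$ and $(2i,2j+1)$, each with exactly one odd coordinate, so the even--even part $\{(2i,2j):i\in[k_1],\,j\in[k_2]\}$ of $F$ is disjoint from $D$ on parity grounds alone. For the ``border'' part of $F$, the index ranges in Definition~\ref{defn:gridskeleton} force $3\le 2i+1\le 2k_1-1$ and $2\le 2j\le 2k_2$ for vertices of the first type, and $4\le 2i\le 2k_1-2$, $3\le 2j+1\le 2k_2-1$ for the second; so none of the border conditions $i\le 2$, $i\ge 2k_1$, $j=1$, $j>2k_2$ is triggered. Hence $F\cap D=\emptyset$.

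The heart of the argument is (S3), and even this is a direct check. A vertex $(2i+1,2j)\in D$ has exactly four grid neighbors; the horizontal pair $(2i,2j)$ and $(2(i+1),2j)$ both belong to the even--even block of $F$ because $i\in[k_1-1]$ gives $i,i+1\in[k_1]$ while $j\in[k_2]$. Symmetrically, for $(2i,2j+1)\in D$ with $1<i<k_1$ and $j\in[k_2-1]$, the vertical neighbors $(2i,2j)$ and $(2i,2(j+1))$ lie in $F$ since $i\in[k_1]$ and $j,j+1\in[k_2]$. In either case at most the remaining two neighbors survive in $G\setminus F$, giving $\deg^{G\setminus F}(v)\le 2$ as required. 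The main (but modest) obstacle is the parity and range bookkeeping at the corners of the interior region, where I would be most careful but expect no genuine difficulty.
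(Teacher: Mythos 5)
Your proof is correct and follows the paper's approach exactly: condition (S1) is handled by citing Lemma~\ref{lem:gridframe} together with Remark~\ref{rem:frame}(3), while (S2) and (S3), which the paper dismisses as ``immediate,'' you simply spell out with the (correct) parity and index-range bookkeeping.
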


\begin{proof}
The condition (S1) in Definition~\ref{def:skeleton} follows from
Lemma~\ref{lem:gridframe} and Remark~\ref{rem:frame}~(3). The conditions
(S2) and (S3) are immediate.
\end{proof}

\begin{remark}\label{rem:gridskeletion}
The frame in Definition~\ref{defn:gridskeleton} contains much more vertices
than what is required in Lemma~\ref{lem:gridframe}. It will become evident
for our proof that $F$ needs to contain top two rows, leftmost two columns,
and rightmost two columns in the grid.
\end{remark}

The next result is routine.
\begin{lemma}\label{lem:gridskeletontw}
$(\str G_{s, t}\setminus F)/D$ is a $\big((\floor{(s-1)/2}-1)\times
(\floor{(t-2)/2}-1)\big)$-grid, where $(F,D)= \mathcal S_{s,t}$. Hence
\[
\tw\big((\str G_{s, t}\setminus F)/D\big)=
 \min\left\{\floor{\frac{s-1}{2}}, \floor{\frac{t-2}{2}}\right\}-1.
\]
\end{lemma}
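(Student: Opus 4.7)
The plan is to exhibit an explicit graph isomorphism $(\str G_{s,t}\setminus F)/D\cong \str G_{k_1-1,k_2-1}$ and then invoke the standard grid treewidth formula. First I will identify the vertex set: inspecting Definition~\ref{defn:gridskeleton}, every vertex outside the ``window'' $\{3,\ldots,2k_1-1\}\times\{2,\ldots,2k_2\}$ lies in $F$, and inside that window $F$ further removes the even-even pillars $(2i,2j)$ while $D$ removes the odd-even and even-odd ``connectors''. What remains is exactly the odd-odd array $\{(2i+1,2j+1):i\in[k_1-1],\ j\in[k_2-1]\}$, giving the claimed $(k_1-1)(k_2-1)$ vertex count.

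The crucial structural observation is that $D$ is an independent set in $\str G_{s,t}$: two $D$-vertices of the same type differ by at least $2$ in some coordinate, while two of different types have both coordinates differing by odd amounts, so are not grid-adjacent. Hence every path in $\str G_{s,t}\setminus F$ with internal vertices in $D$ has length at most $2$, and since no two distinct odd-odd positions are grid-adjacent either, every edge of $(\str G_{s,t}\setminus F)/D$ must come from a length-$2$ path $u-w-v$ with $w\in D$ adjacent in $\str G_{s,t}$ to both window vertices $u,v$.

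A short case analysis on $w$ then produces all edges. For $w=(2i+1,2j)\in D$, its two pillar neighbors $(2i,2j),(2i+2,2j)$ lie in $F$, while the remaining two grid-neighbors $(2i+1,2j\pm 1)$ are both window vertices precisely when $2\le j\le k_2-1$, producing the ``vertical'' quotient edge between $(2i+1,2(j-1)+1)$ and $(2i+1,2j+1)$. Symmetrically, $w=(2i,2j+1)$ with $1<i<k_1$ produces the ``horizontal'' quotient edge between $(2(i-1)+1,2j+1)$ and $(2i+1,2j+1)$. Under the bijection $(2i+1,2j+1)\leftrightarrow(i,j)$, these are exactly the edges of $\str G_{k_1-1,k_2-1}$, so the isomorphism is established; the treewidth equality then follows at once from the standard identity $\tw(\str G_{m,n})=\min\{m,n\}$.

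The only mild obstacle is the boundary bookkeeping, which is precisely the point of Remark~\ref{rem:gridskeletion}: one must check that the two extra rows and columns added to $F$ (beyond the four corners used in Lemma~\ref{lem:gridframe}) genuinely block every $D$-connector from having a third window neighbor, so that the case analysis above captures all length-$2$ paths and no spurious adjacencies arise. Once the parity structure of $F$ and $D$ is laid out, this verification is mechanical.
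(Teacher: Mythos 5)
Your proof is correct and fills in the verification that the paper dismisses as ``routine,'' using the expected direct approach: identify the surviving vertex set as the odd--odd positions inside the window $\{3,\dots,2k_1{-}1\}\times\{2,\dots,2k_2\}$, observe that $D$ is independent so every quotient edge arises from a unique length-$2$ path through a $D$-vertex, and do the boundary case analysis to see that the resulting edges are exactly those of $\str G_{k_1-1,k_2-1}$ under $(2a+1,2b+1)\leftrightarrow(a,b)$. All the coordinate bookkeeping checks out (e.g.\ the $j=1$ and $j=k_2$ cases of $w=(2i+1,2j)$ correctly yield only one window neighbour and hence no quotient edge, and likewise for the horizontal connectors), and the treewidth conclusion follows from $\tw(\str G_{m,n})=\min\{m,n\}$ as quoted in the preliminaries.

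One minor nitpick: the reference to Remark~\ref{rem:gridskeletion} is a slight stretch --- that remark is really foreshadowing the \emph{rigidity} argument (Proposition~\ref{prop:gridrigid}), not this treewidth computation --- but the structural check you describe (that the thick boundary of $F$ prevents any $D$-connector from having a third window neighbour) is indeed exactly what needs to be, and is, verified in your case analysis, so the argument stands.
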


\begin{proposition}\label{prop:gridrigid}
$\mathcal S_{s,t}$ is a rigid skeleton of the grid $\str G_{s, t}$.
\end{proposition}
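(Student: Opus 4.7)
The plan is to reduce rigidity to the statement that the endomorphism $\phi := \pi_1 \circ h$ of $G := \str G_{s,t}$ is an automorphism, for every embedding $h\colon G \to P := P(G, \mathcal S_{s,t}, H, \chi)$. By Lemma~\ref{lem:pi1hom}, $\phi$ is always an endomorphism, and once $\phi$ is an automorphism then each $u \in F$ has a unique preimage $v$ with $h(v) \in \pi_1^{-1}(u) = \{(u,u)\}$, so $(u,u) \in h(V(G))$ as required. Combining Lemma~\ref{lem:gridframe} with Lemma~\ref{lem:productframe}, it thus suffices to prove $F \subseteq \phi(V(G))$. The main lever I plan to use is a \emph{singleton-fiber rigidity}: for every $u \in F$, $\pi_1^{-1}(u) = \{(u,u)\}$, so the injectivity of $h$ forces $|\phi^{-1}(u)| \le 1$. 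Hence any identification $\phi(v_1)=\phi(v_2)$ with $v_1 \ne v_2$ must send the common image into $V(G) \setminus F$, i.e., into the narrow inner strip whose contraction $(G\setminus F)/D$ is, by Lemma~\ref{lem:gridskeletontw}, only a small grid.

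To obtain $F \subseteq \phi(V(G))$, I would proceed in three stages. \emph{Stage~1.} Trace the image of the boundary cycle $C \subseteq F$ of $G$ and show that $\phi(C) \subseteq F$: any collapse of two boundary vertices would push the image into $V(G)\setminus F$, and since the walk $\phi(C)$ has length $2(s+t-2)$ and must be the homomorphic image of a cycle, while $V(G) \setminus F$ is too small and geometrically restrictive to host such a walk consistently, the collapse will be ruled out. \emph{Stage~2.} From the injectivity of $\phi|_C$ (which, once Stage~1 gives $\phi(C)\subseteq F$, follows from singleton-fiber rigidity), identify $\phi(C)$ with $C$ by analysing closed walks of length $2(s+t-2)$ in the induced subgraph $G[F]$; the ``thick boundary plus even-even interior'' shape of $F$ essentially singles out the boundary cycle, and in particular the four corners of $G$ map to corners. \emph{Stage~3.} Propagate inward: each even-even vertex $(2i,2j)\in F$ is pinned down by its vector of graph-distances to the four corners, so the distance-contraction of $\phi$ combined with Stage~2 forces every such vertex into $\phi(V(G))$, yielding $F \subseteq \phi(V(G))$.

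The principal obstacle is Stage~1, since $G$ genuinely admits proper endomorphisms, most notably axis-aligned folds onto a smaller subgrid. The asymmetric design of $F$ in Definition~\ref{defn:gridskeleton}---one bottom row against two or three top rows, pairs of columns on each side, and the interior even-even pattern---is precisely calibrated so that every such retraction must identify two distinct $F$-vertices onto an $F$-vertex, which singleton-fiber rigidity forbids. Turning this intuition into a complete exclusion will require a careful case analysis of the action of $\phi$ on the bottom row and on the two extremal columns of $G$, and this is the technical heart of the proof.
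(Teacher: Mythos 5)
Your reduction to showing $F\subseteq\phi(V(G))$ for $\phi:=\pi_1\circ h$ is sound, and the singleton-fiber observation (that $\pi_1^{-1}(u)=\{(u,u)\}$ for $u\in F$, so injectivity of $h$ forces $|\phi^{-1}(u)|\le 1$) is correct and is also implicit in the paper. But your overall route is essentially the reverse of the paper's: the paper first pins down $h$ on the \emph{interior} even-even set $F_1$ and then propagates outward to the boundary, whereas you try to pin down the \emph{boundary} cycle first and then propagate inward. This is not merely a cosmetic difference; the paper's proof does not, and apparently cannot, proceed by your route, and the gap is concentrated exactly where you expect it: Stage~1.

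Concretely, your justification for Stage~1 does not hold up. You argue that a collapse of two boundary vertices would force the offending vertex of the walk into $V(G)\setminus F$ (correct, by singleton-fiber), and then you claim $V(G)\setminus F$ is ``too small and geometrically restrictive'' to host the walk $\phi(C)$ of length $2(s+t-2)$. But $|V(G)\setminus F|=\Theta(st)$, which is far larger than $2(s+t-2)=\Theta(s+t)$, so neither a counting argument nor a crude geometric one rules this out; and the walk need not lie entirely in $V(G)\setminus F$, only its repeated vertices need to, which weakens the constraint further. What is really at stake is whether ``$\phi$ is an endomorphism of $\str G_{s,t}$ whose fibers over $F$ are singletons'' already forces surjectivity. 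You give no argument for this, and it is exactly the kind of statement the paper avoids having to prove. Instead, the paper's Lemma~\ref{lem:C} extracts a \emph{stronger} constraint from the definition of $P$: using the partition $V_1,\dots,V_4$ and the explicit edge sets $E_{ij}$, it shows that no $4$-cycle of $P$ can map to a single edge under $\pi_1$, hence every $4$-cycle of $P$ meets $C=\{(u,u):u\in F_1\}$, and that any two vertices of $C$ have even distance in $P$. This parity-plus-disjoint-$4$-cycles argument (Lemma~\ref{lem:correctcenter}) forces $h(F_1)=C$ up to a flip, after which Lemmas~\ref{lem:4corners} and~\ref{lem:gridcorners} finish by distance propagation outward. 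Your Stages~2 and~3 are plausible \emph{given} Stage~1, and Stage~3 is essentially the distance argument of Lemma~\ref{lem:gridframe}; but without Lemma~\ref{lem:C}-style information about $P$ itself, Stage~1 is unsupported and, as stated, the ``too small'' heuristic is simply false. To salvage your plan you would need to either prove the missing implication ``singleton fibers over $F$ implies surjectivity for grid endomorphisms'' (which I doubt, and which in any case is not attempted), or import the $4$-cycle non-degeneracy of $P$ that your proposal currently ignores.
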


The proof is very technical. We proceed in several steps to improve the
readability.

Note that the shape of $F$ is slightly different depending on the parity of
$s$ and $t$. Without loss of generality, we assume that $s$ is odd and $t$
is even as in Example~\ref{exam:gridskeleton} and
Example~\ref{exam:gridproduct}. Therefore in~\eqref{eq:k1k2} we have $k_1=
(s-1)/2$ and $k_2= t/2-1$.

We fix a graph $H$, a mapping $\chi: V(H)\to V(\str G_{s, t})\setminus
(F\cup D)$, and an embedding $h$ from $\str G_{s, t}$ to $P= P(\str G_{s,
t}, \mathcal S_{s,t}, H, \chi)$. Moreover, let
\[
F_1:= \big\{(2i, 2j) \bigmid \text{$i\in [k_1]$ and $j\in [k_2+1]$}\big\}.
\]

\begin{lemma}\label{lem:C}
Let
\[
C:=\big\{(u,u) \bigmid u\in F_1\big\}
\]
(which is shown in Figure~\ref{fig:gridC} for our running example).
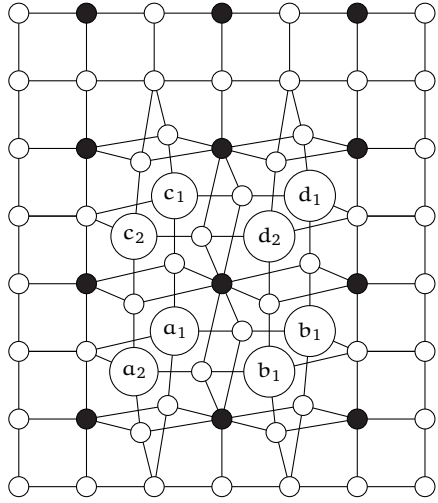
\begin{figure}%[h!]
\centering
\begin{tikzpicture}[scale=.9]
 {\scriptsize

% the grid product

 \draw (1,1)--(7,1);

% \draw (1,2)--(7,2);
 \draw (1,2)--(2,2);
 \draw (2,2)--(3.2,2.2)--(4,2)--(5.2,2.2)--(6,2);
 \draw (2,2)--(2.8,1.8)--(4,2)--(4.8,1.8)--(6,2);
 \draw (6,2)--(7,2);

% \draw (1,3)--(7,3);
 \draw (1,3)--(2,3)--(3.3,3.3)--(5.3,3.3)--(6,3)--(7,3);
 \draw (1,3)--(2,3)--(2.7,2.7)--(4.7,2.7)--(6,3)--(7,3);
% \draw (6,3)--(7,3);

% \draw (1,4)--(7,4);
 \draw (1,4)--(2,4);
 \draw (2,4)--(3.3,4.3)--(4,4)--(5.3,4.3)--(6,4);
 \draw (2,4)--(2.7,3.7)--(4,4)--(4.7,3.7)--(6,4);
 \draw (6,4)--(7,4);

% \draw (1,5)--(7,5);
 \draw (1,5)--(2,5)--(3.3,5.3)--(5.3,5.3)--(6,5)--(7,5);
 \draw (1,5)--(2,5)--(2.7,4.7)--(4.7,4.7)--(6,5)--(7,5);
% \draw (6,5)--(7,5);

% \draw (1,6)--(7,6);
 \draw (1,6)--(2,6);
 \draw (2,6)--(3.2,6.2)--(4,6)--(5.2,6.2)--(6,6);
 \draw (2,6)--(2.8,5.8)--(4,6)--(4.8,5.8)--(6,6);
 \draw (6,6)--(7,6);

 \draw (1,7)--(7,7);
 \draw (1,8)--(7,8);

 \draw (1,1)--(1,8);

 \draw (2,1)--(2,8);

 \draw (3,1)--(3.2,2.2)--(3.3,3.3)--(3.3,5.3)--(3.2,6.2)--(3,7);
 \draw (3,1)--(2.8,1.8)--(2.7,2.7)--(2.7,4.7)--(2.8,5.8)--(3,7);
 \draw (3,7)--(3,8);

 \draw (4,1)--(4,2);
 \draw (4,2)--(4.3,3.3)--(4,4)--(4.3,5.3)--(4,6);
 \draw (4,2)--(3.7,2.7)--(4,4)--(3.7,4.7)--(4,6);
 \draw (4,6)--(4,8);

 \draw (5,1)--(5.2,2.2)--(5.3,3.3)--(5.3,5.3)--(5.2,6.2)--(5,7);
 \draw (5,1)--(4.8,1.8)--(4.7,2.7)--(4.7,4.7)--(4.8,5.8)--(5,7);
 \draw (5,7)--(5,8);

 \draw (6,1)--(6,8);

 \draw (7,1)--(7,8);

% the set $V_2$

 \draw (1,1) node[circle,draw,fill=white]{};
 \draw (1,2) node[circle,draw,fill=white]{};
 \draw (1,3) node[circle,draw,fill=white]{};
 \draw (1,4) node[circle,draw,fill=white]{};
 \draw (1,5) node[circle,draw,fill=white]{};
 \draw (1,6) node[circle,draw,fill=white]{};
 \draw (1,7) node[circle,draw,fill=white]{};
 \draw (1,8) node[circle,draw,fill=white]{};

 \draw (2,1) node[circle,draw,fill=white]{};
 \draw (2,2) node[circle,draw,fill=Black]{};
 \draw (2,3) node[circle,draw,fill=white]{};
 \draw (2,4) node[circle,draw,fill=Black]{};
 \draw (2,5) node[circle,draw,fill=white]{};
 \draw (2,6) node[circle,draw,fill=Black]{};
 \draw (2,7) node[circle,draw,fill=white]{};
 \draw (2,8) node[circle,draw,fill=Black]{};

 \draw (6,1) node[circle,draw,fill=white]{};
 \draw (6,2) node[circle,draw,fill=Black]{};
 \draw (6,3) node[circle,draw,fill=white]{};
 \draw (6,4) node[circle,draw,fill=Black]{};
 \draw (6,5) node[circle,draw,fill=white]{};
 \draw (6,6) node[circle,draw,fill=Black]{};
 \draw (6,7) node[circle,draw,fill=white]{};
 \draw (6,8) node[circle,draw,fill=Black]{};

 \draw (7,1) node[circle,draw,fill=white]{};
 \draw (7,2) node[circle,draw,fill=white]{};
 \draw (7,3) node[circle,draw,fill=white]{};
 \draw (7,4) node[circle,draw,fill=white]{};
 \draw (7,5) node[circle,draw,fill=white]{};
 \draw (7,6) node[circle,draw,fill=white]{};
 \draw (7,7) node[circle,draw,fill=white]{};
 \draw (7,8) node[circle,draw,fill=white]{};

 \draw (3,1) node[circle,draw,fill=white]{};
 \draw (4,1) node[circle,draw,fill=white]{};
 \draw (5,1) node[circle,draw,fill=white]{};

% \draw (3,2) node[circle,draw,fill=white]{};
 \draw (4,2) node[circle,draw,fill=Black]{};
% \draw (5,2) node[circle,draw,fill=white]{};

% \draw (3,6) node[circle,draw,fill=white]{};
 \draw (4,6) node[circle,draw,fill=Black]{};
% \draw (5,6) node[circle,draw,fill=white]{};

 \draw (3,7) node[circle,draw,fill=white]{};
 \draw (4,7) node[circle,draw,fill=white]{};
 \draw (5,7) node[circle,draw,fill=white]{};

 \draw (3,8) node[circle,draw,fill=white]{};
 \draw (4,8) node[circle,draw,fill=Black]{};
 \draw (5,8) node[circle,draw,fill=white]{};

 \draw (4,4) node[circle,draw,fill=Black]{};

% the set $V_3$

 \draw (2.8,1.8) node[circle,draw,fill=white]{};
 \draw (3.2,2.2) node[circle,draw,fill=white]{};

 \draw (4.8,1.8) node[circle,draw,fill=white]{};
 \draw (5.2,2.2) node[circle,draw,fill=white]{};

 \draw (2.8,5.8) node[circle,draw,fill=white]{};
 \draw (3.2,6.2) node[circle,draw,fill=white]{};

 \draw (4.8,5.8) node[circle,draw,fill=white]{};
 \draw (5.2,6.2) node[circle,draw,fill=white]{};

% the set $V_4$

 \draw (3.7,2.7) node[circle,draw,fill=white]{};
 \draw (4.3,3.3) node[circle,draw,fill=white]{};

 \draw (3.7,4.7) node[circle,draw,fill=white]{};
 \draw (4.3,5.3) node[circle,draw,fill=white]{};

 \draw (2.7,3.7) node[circle,draw,fill=white]{};
 \draw (3.3,4.3) node[circle,draw,fill=white]{};

 \draw (4.7,3.7) node[circle,draw,fill=white]{};
 \draw (5.3,4.3) node[circle,draw,fill=white]{};

% the set $V_1$

 \draw (3.3,3.3) node[circle,draw,fill=white]{$a_1$};
 \draw (2.7,2.7) node[circle,draw,fill=white]{$a_2$};

 \draw (5.3,3.3) node[circle,draw,fill=white]{$b_1$};
 \draw (4.7,2.7) node[circle,draw,fill=white]{$b_1$};

 \draw (3.3,5.3) node[circle,draw,fill=white]{$c_1$};
 \draw (2.7,4.7) node[circle,draw,fill=white]{$c_2$};

 \draw (5.3,5.3) node[circle,draw,fill=white]{$d_1$};
 \draw (4.7,4.7) node[circle,draw,fill=white]{$d_2$};

% \draw (13,0) node{{\normalsize (b)}};

 }

\end{tikzpicture}
\caption{The set $C$ consists of those black vertices.}\label{fig:gridC}
\end{figure}
\begin{enumerate}
\item[(C1)] Every $4$-cycle in $P$ must contain a vertex in $C$.

\item[(C2)] Every pair of vertices in $C$ has even distance in $P$.
\end{enumerate}
\end{lemma}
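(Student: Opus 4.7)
The plan for (C2) is short and bipartite-flavored. The grid $\str G_{s,t}$ is bipartite with classes given by the parity of $i+j$, and every vertex $(2i,2j) \in F_1$ has $2i+2j$ even, so $F_1$ lies entirely in the even class. Since $\pi_1 : P \to \str G_{s,t}$ is a graph homomorphism (Lemma~\ref{lem:pi1hom}), any $P$-path of length $\ell$ from $(u,u)$ to $(v,v)$ with $u,v \in F_1$ projects to a walk of length $\ell$ in $\str G_{s,t}$ from $u$ to $v$; such a walk has even length by bipartiteness, so $\ell$ is even.

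The plan for (C1) is to let $(p_1,p_2,p_3,p_4)$ be any 4-cycle in $P$ and set $u_i := \pi_1(p_i)$; then $(u_1,u_2,u_3,u_4)$ is a closed walk of length 4 in the bipartite graph $\str G_{s,t}$. I would split into two cases based on whether all four $u_i$ are distinct. If they are, then $u_1 u_2 u_3 u_4$ is a 4-cycle of $\str G_{s,t}$, i.e.\ a unit square with corners $(a,b), (a{+}1,b), (a,b{+}1), (a{+}1,b{+}1)$. Exactly one corner has both coordinates even; using $s$ odd and $t$ even, this corner has the form $(2i,2j)$ with $i \in [k_1]$ and $j \in [k_2+1]$, hence lies in $F_1 \subseteq F$. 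Any $u \in F$ has a unique $\pi_1$-preimage $(u,u) \in V_2$, which here belongs to $C$ and appears on the cycle. Otherwise some $u_i$ coincide and, by cyclic rotation, I may assume $u_1 = u_3 =: u$. Since $p_1 \ne p_3$ both project to $u$, the vertex $u$ has at least two $P$-preimages, ruling out $u \in F$ and $u \in D$ without association. The surviving possibilities are $u \in N := V(G) \setminus (F \cup D)$, $u \in D$ associated with a vertex of $(G \setminus F)/D$, or $u \in D$ associated with an edge of $(G \setminus F)/D$. In each case I would run through the options for the two $G$-neighbors $u_2$ and $u_4$ of $u$, using the definitions of $E_{12}, E_{13}, E_{14}, E_{23}, E_{24}$, together with the parity facts that no two $N$-vertices and no two $D$-vertices are $G$-adjacent (whence $E_{11} = E_{33} = E_{44} = \emptyset$) and that every even-coordinate $G$-neighbor of a $D$-vertex belongs to $F_1$. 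Each sub-case either collapses (forcing the supposedly distinct $p_1,p_3$ or $p_2,p_4$ to agree) or pins one of $u_2, u_4$ into $F_1$, placing the corresponding $V_2$-vertex in $C$.

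The hard part will be the degenerate analysis just described. The $\vv$-labels attached to $V_3$- and $V_4$-vertices are tightly coupled to the second coordinates of adjacent $V_1$-vertices via $E_{13}$ and $E_{14}$: tracing a back-and-forth 4-walk through a single $D$-vertex typically forces the two $H$-vertices (or $H$-edges) appearing on its two $P$-preimages to coincide, collapsing $p_1$ to $p_3$. The only surviving configurations are those in which both exits from $u$ land in $F$, and pinning those exits into $F_1$ is where the parity design of the skeleton $\mathcal S_{s,t}$---in particular the placement of the ``interior posts'' $(2i,2j) \in F$ and the fact that $D$-vertices have exactly one odd coordinate---does the real work.
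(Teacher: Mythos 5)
Your plan for (C2) is the same argument as the paper's (the paper phrases it as ``$\pi_1$ preserves parity of walk length''; you phrase it via bipartiteness of the grid). Your plan for (C1) in the non-degenerate case (all four $\pi_1$-images distinct) is also the paper's argument: a unit square has exactly one even-even corner, which lies in $F_1$ and has the unique preimage $(u,u)\in C$.

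The degenerate case is where you actually diverge, and here you are more careful than the written proof: the paper jumps from ``$\pi_1(X)$ is a $4$-cycle'' to ``$\pi_1(X)$ is an edge'', silently omitting the possibility that the length-$4$ closed walk $\pi_1(X)$ visits exactly three vertices (i.e.\ $u_1=u_3$ but $u_2\ne u_4$, or vice versa). Your set-up handles all degenerate patterns at once. However, your stated dichotomy — ``each sub-case either collapses or pins one of $u_2,u_4$ into $F_1$'' — does not cover the critical sub-case $u\in N$, $u_2,u_4\in F$ with $u_2\ne u_4$. Here nothing collapses (we get genuine $E_{12}$-edges to the distinct $V_2$-vertices $(u_2,u_2)$ and $(u_4,u_4)$), and nothing is pinned into $F_1$: an interior $(\text{odd},\text{odd})$ vertex $u$ has neighbours of mixed parity, so its $F$-neighbours are never of the even-even form and hence never in $F_1$. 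The fact you actually need is that such a $u$ has \emph{at most one} neighbour in $F$ (the only candidates are $(a-1,b)$ when $a=3$ and $(a+1,b)$ when $a=s-2$), so for $s\ge 7$ the sub-case is vacuous — that is what saves (C1), not a pin into $F_1$. Note in passing that for $s=5$ both candidates coincide on the same $u$, so this sub-case genuinely produces a $C$-free $4$-cycle whenever two $H$-vertices share the colour $u$; the lemma as stated for ``$s\ge 5$'' therefore needs a larger lower bound on $s$, which the downstream richness argument can easily afford. For $u\in D$, your ``pins into $F_1$'' conclusion is correct and follows exactly from the parity fact you cite (a $D$-vertex has two even-even neighbours, both in $F_1$, and at most one other $F$-neighbour); you should make that count explicit. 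In short: right architecture and a genuine improvement in case coverage, but the stated resolution of the $u\in N$ degenerate sub-case is wrong as written, and the analysis is left as a sketch at precisely the point where the paper is already thin.
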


\begin{proof}
Recall that $\pi_1$ is a homomorphism from $P$ to $\str  G_{s,t}$ with
$\pi_1(u,z)=u$ for every $(u,z)\in V(P)$. Obviously, $\pi_1(C)=F_1$. Since
$P$ and $\str G_{s,t}$ have no loops, every homomorphism from $P$ to $\str
G_{s,t}$ preserves the parity of walk length. In particular, if
$(u_1,z_1),(u_2,z_2)\in V(P)$ have odd distance in $P$, then $u_1$ and $u_2$
must have a walk with odd length in $\str G_{s,t}$. It is not hard to see
that (C2) follows from the fact  that every walk for a pair of vertices in
$F_1$ has even length in $\str G_{s,t}$.

To prove (C1), we observe that every $4$-cycle in $\str G_{s,t}$ must
contain a vertex in $F_1$. If a $4$-cycle $X$ in $P$ is mapped to a
$4$-cycle in $\str G_{s,t}$ by $\pi_1$, then $X$ must contain a vertex in
$C$ because $\pi_1^{-1}(F_1)=C$. So it suffices to argue that no $4$-cycle
is mapped to an edge under $\pi_1$. Suppose a $4$-cycle $X$ in $P$ is mapped
to an edge under $\pi_1$. Then there must exist $u,u'\in V(\str G_{s,t})$
such that $V(X)=\{(u,z_1),(u',z_2),(u,z_3),(u',z_4)\}$ and it forms a
$4$-cycle
\[
X: (u, z_1) \to (u', z_2) \to (u, z_3) \to (u', z_4) \to (u, z_1).
\]
Recall that the vertex set of $P$ is partitioned into $4$ subsets $V_1$,
$V_2$, $V_3$, and $V_4$.

\begin{itemize}
\item First, we note that for every vertex $(v,b)\in V_2$, there is no
    vertex $(v,b')\in V(P)$ with $b'\neq b$. Thus $(u,z_1)$ can not be in
    $V_2$, otherwise $z_3=z_1$, contradicting the fact that $X$ is a
    $4$-cycle. By the similar argument, we have $V(X)\cap
    V_2=\varnothing$.

\item If $(u,z_1)\in V_3$, then so is $(u,z_3)$. We first note that
    $(u',z_2),(u',z_4)$ can not be in  $V_4$ because there is no edge
    between $V_3$ and $V_4$.

    If $(u',z_2)$ or $(u',z_4)$ is in $V_3$, then both are in $V_3$. By
    the definition of $E_{3,3}$ this means that $z_1=z_2=z_3=z_4$. It
    follows that $(u,z_1)=(u,z_3)$, which is contradiction.

    So let us assume that $(u',z_2),(u',z_4)\in V_1$.  By the definition
    of $E_{13}$ and $V_3$, we have $z_1=z_3=\vv_{u,z_2}$, which leads to a
    contradiction.

\item If $(u,z_1)\in V_4$, then by similar arguments, we must have
    $(u,z_3)\in V_4$ and $(u',z_2),(u',z_4)\in V_1$.
By the definition of $E_{14}$ and since $z_2\neq z_4$ we have
$z_1=z_3=\vv_{u,\{z_2,z_4\}}$. Again this leads to a contradiction.
\item Finally, assume that $V(X)\subseteq V_1$. Thus $u,u'\in V(\str
    G_{s,t})\setminus(F\cup D)$ and $uu'\in E(\str G_{s,t})$. However it
    is easy to see that the vertices in $V(\str G_{s,t})\setminus(F\cup
    D)$ are mutually nonadjacent. %\qed
\end{itemize}
\end{proof}

We define a further function
\[
\bar h(i,j):= h(i, t+1-j).
\]
for every $i\in [s]$ and $j\in [t]$. The mapping $\bar h$ is also an
embedding from $\str G_{s,t}$ to $P$, since $(i,j) \mapsto (i, t+1-j)$ is an
automorphism of $\str G_{s,t}$.

\begin{lemma}\label{lem:correctcenter}
Either $h(F_1)= C$ or $\bar h(F_1)= C$.
\end{lemma}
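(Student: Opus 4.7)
My plan is to combine the two properties of $C$ from Lemma~\ref{lem:C} with a uniqueness argument for minimum vertex covers in an auxiliary bipartite graph. Set $S:=h^{-1}(C)$; by injectivity of $h$ we have $|S|\le|C|=|F_1|$, and the goal reduces to showing that $S$ is either $F_1$ or $\sigma(F_1)$, where $\sigma(i,j):=(i,t+1-j)$ denotes the vertical flip, since then $h(F_1)=C$ in the first case and $\bar h(F_1)=h(\sigma(F_1))=C$ in the second.

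First I apply (C1) to the image under $h$ of each unit square of $\str G_{s,t}$: because $h$ is an embedding, each unit square maps to a $4$-cycle in $P$, which by (C1) must contain a vertex of $C$, so $S$ is a transversal of the unit squares of $\str G_{s,t}$. Next I use (C2) together with the bipartiteness of $P$: the $2$-coloring of $\str G_{s,t}$ pulls back through the homomorphism $\pi_1$ of Lemma~\ref{lem:pi1hom} to a proper $2$-coloring of $P$, so $P$ is bipartite. Since $F_1$ lies in the even-parity class $A$ of $\str G_{s,t}$, $C$ lies in a single bipartition class of $P$; and because $h$ is a homomorphism between two bipartite graphs it either preserves or swaps bipartition classes, so $S$ lies entirely in one bipartition class of $\str G_{s,t}$, yielding the two cases $S\subseteq A$ and $S\subseteq B$.

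The combinatorial heart of the argument is the claim that inside $A$ the only transversal of the unit squares of size at most $|F_1|$ is $F_1$ itself (and symmetrically $\sigma(F_1)$ inside $B$). I would formalize this through the auxiliary bipartite graph $D_A$ on vertex set $A$, with bipartition $A_0\cup A_1$ where $A_0=F_1$ is the set of even-even lattice points in range and $A_1$ is the set of odd-odd vertices of $A$, and with one edge per unit square joining its two $A$-corners (which differ by $(\pm 1,\pm 1)$, one in each part). A transversal of the unit squares lying in $A$ is then exactly a vertex cover of $D_A$. The matching sending $(2i,2j)\in A_0$ to $(2i-1,2j-1)\in A_1$ is well-defined because $s\ge 5$ and $t\ge 6$, and it saturates $A_0$; by K\"onig's theorem the minimum vertex-cover size of $D_A$ therefore equals $|A_0|$, and together with $|S|\le|A_0|$ this forces $S$ to be a minimum vertex cover of $D_A$. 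Uniqueness, i.e., $S=A_0=F_1$, reduces to checking the strict Hall condition $|N(T)|>|T|$ for every nonempty $T\subseteq A_0$ in $D_A$, which I would verify by splitting $T$ into its interior part (each vertex has four $A_1$-neighbors) and its top-boundary part (the vertices $(2i,t)$, each with only two $A_1$-neighbors), using that adjacent boundary vertices shift their two $A_1$-neighbors by one column so the union keeps growing with $|T|$. The case $S\subseteq B$ is handled identically after conjugating by $\sigma$, yielding $\bar h(F_1)=C$. The step I expect to be most delicate is the strict Hall condition along the top boundary, where the counting margin is tightest; this is precisely the place where the extra boundary row built into the frame $F$ (foreshadowed in Remark~\ref{rem:gridskeletion}) pays off, because without it the top-row transversal vertices would not be pinned down.
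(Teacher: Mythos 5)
Your proposal is correct and takes a genuinely different route from the paper. The paper restricts attention to the $k_1(k_2+1)$ pairwise disjoint $4$-cycles $\str Z_{i,j}$, whose count exactly matches $|C|$; after (C1) forces exactly one preimage of $C$ in each $\str Z_{i,j}$, it uses one extra ``corner'' $4$-cycle to break the $h$/$\bar h$ symmetry and then an inductive propagation argument based on (C2) to pin down that all preimages have even-even coordinates. You instead use \emph{all} unit squares, observe that $S:=h^{-1}(C)$ is a transversal of them lying entirely in one colour class of the grid (by pulling the bipartition back through $\pi_1$), and then reduce to a uniqueness statement about minimum vertex covers of the auxiliary diagonal graph $D_A$, which you obtain from K\"onig's theorem plus a strict Hall condition. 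Your argument is cleaner in that it replaces the ad hoc corner-cycle and induction with a single classical combinatorial lemma, at the cost of having to verify the strict Hall condition for $D_A$.

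Two remarks on details. First, in step 3 you invoke (C2), but what your argument actually uses is that $C\subseteq\pi_1^{-1}(F_1)$ lies in one part of the $\pi_1$-pullback bipartition of $P$ and that $\str G_{s,t}$ is connected, so $h$ preserves or swaps colour classes; (C2) is morally the same fact but the $\pi_1$-pullback phrasing is the one you need. Second, your sketch of the strict Hall condition via an interior/top-boundary split is the weakest point: the two neighbourhoods overlap (a vertex $(2i\pm1,t-1)\in A_1$ is adjacent both to top-row and to row-$(t-2)$ vertices of $A_0$), so the counts do not simply add. A cleaner verification: fix the saturating matching $\phi(2i,2j)=(2i-1,2j-1)$, take $(2i_0,2j_0)\in T$ lexicographically maximal; then $(2i_0+1,2j_0-1)$ is a neighbour of $(2i_0,2j_0)$ and is well-defined since $2i_0\le s-1$, but it is not $\phi(v)$ for any $v\in T$ because $(2i_0+2,2j_0)\notin T$ by maximality, so $|N(T)|\ge|\phi(T)|+1=|T|+1$. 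With this in place your proof is complete. (Your closing remark about the frame is slightly off target: the strict Hall condition for $D_A$ depends only on the grid and on the set $F_1$, not on the extra rows/columns of $F$; those matter earlier, in making Lemma~\ref{lem:C} true.)
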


\begin{figure}%[h!]
\centering
\begin{tikzpicture}[scale=.8]
 {\scriptsize

% the grid product

 \draw (1,1)--(7,1);
 \draw (1,2)--(7,2);
 \draw (1,3)--(7,3);
 \draw (1,4)--(7,4);
 \draw (1,5)--(7,5);
 \draw (1,6)--(7,6);
 \draw (1,7)--(7,7);
 \draw (1,8)--(7,8);

 \draw (1,1)--(1,8);
 \draw (2,1)--(2,8);
 \draw (3,1)--(3,8);
 \draw (4,1)--(4,8);
 \draw (5,1)--(5,8);
 \draw (6,1)--(6,8);
 \draw (7,1)--(7,8);

% \draw (1,1)--(2,1)--(2,2)--(1,2)--(1,1) node[circle,draw,fill=lightgray]{$d_2$};

  \draw (1.5,1.5) node [minimum size=.8cm,fill=lightgray,draw]{};
  \draw (1.5,3.5) node [minimum size=.8cm,fill=lightgray,draw]{};
  \draw (1.5,5.5) node [minimum size=.8cm,fill=lightgray,draw]{};
  \draw (1.5,7.5) node [minimum size=.8cm,fill=lightgray,draw]{};

  \draw (3.5,1.5) node [minimum size=.8cm,fill=lightgray,draw]{};
  \draw (3.5,3.5) node [minimum size=.8cm,fill=lightgray,draw]{};
  \draw (3.5,5.5) node [minimum size=.8cm,fill=lightgray,draw]{};
  \draw (3.5,7.5) node [minimum size=.8cm,fill=lightgray,draw]{};

  \draw (5.5,1.5) node [minimum size=.8cm,fill=lightgray,draw]{};
  \draw (5.5,3.5) node [minimum size=.8cm,fill=lightgray,draw]{};
  \draw (5.5,5.5) node [minimum size=.8cm,fill=lightgray,draw]{};
  \draw (5.5,7.5) node [minimum size=.8cm,fill=lightgray,draw]{$\str Z_{3,4}$};

  \draw (6.5,7.5) node [minimum size=.8cm,fill=gray,draw]{};

  \draw (6,8) node [circle,fill=white,draw]{};

  \draw (6,7) node [circle,fill=black,draw]{};

 }

\end{tikzpicture}
\caption{$\str G_{7,8}$.}\label{fig:cyclecounting}
\end{figure}

\begin{proof}
Since $h$ is injective and $|C|= |F_1|$, it suffices to prove $C\subseteq
h(F_1)$. We consider all the 4-cycles
\[
\str Z_{i,j}: (2i-1, 2j-1) \to (2i-1, 2j) \to (2i, 2j) \to (2i, 2j-1) \to (2i-1, 2j-1),
\]
with $i\in [k_1]$ and $j \in [k_2+1]$. For the grid $\str G_{7,8}$ these are
the light gray cycles in Figure~\ref{fig:cyclecounting}. It is clear that
there are $k_1\cdot (k_2+1)$ of them, and none of them share common
vertices. Note $|C|= k_1\cdot (k_2+1)$. Thus Lemma~\ref{lem:C}~(C1) implies
that the $h$-image of each such cycle must contain exactly one vertex which
is mapped to $C$, and every vertex in $C$ is the image of one vertex in one
of those cycles.

Consider the $4$-cycle
\begin{align*}
(2k_1, 2k_2+1) \to (2k_1, 2k_2+2) \to & (2k_1+1, 2k_2+2) \\
 & \to (2k_1+1, 2k_2+1) \to (2k_1, 2k_2+1),
\end{align*}
which is the gray cycle in Figure~\ref{fig:cyclecounting} for our example
$\str G_{7,8}$. By~Lemma~\ref{lem:C}~(C1) it must contain a vertex $u$
mapped to $C$. As we have already argued that $u$ is on one of the cycles
$\str Z_{i,j}$, in particular $Z_{k_1,k_2+1}$. In addition, $Z_{k_1, k_2+1}$
can have only one such vertex $u$. We conclude that either $u= (2k_1,
2k_2+2)$ or $u=(2k_1, 2k_2+1)$.

Assume $u=(2k_1, 2k_2+2)$, i.e., $h(2k_1, 2k_2+2)\in C$, as shown by the
white node in Figure~\ref{fig:cyclecounting}. Observe that the 4-cycle (the
cycle left to $\str Z_{3,4}$)
\begin{align*}
(2k_1-2, 2k_2+1) \to (2k_1-2, & 2k_2+2) \to (2k_1-1, 2k_2+2) \\
 & \to (2k_1-1, 2k_2+1) \to (2k_1-2, 2k_2+1)
\end{align*}
also must contain at least one vertex mapped to $C$. Using
Lemma~\ref{lem:C}~(C2), we can conclude that it can only be $(2k_1-2,
2k_2+2)$. By repeating the argument inductively, we conclude that if
$h(i,j)\in C$, then both $i$ and $j$ have to be even.

In the second case, we have $u=(2k_1, 2k_2+1)$, i.e., $h(2k_1, 2k_2+1)\in C$
as shown by the black node in Figure~\ref{fig:cyclecounting}. Then $\bar
h(2k_1, 2) \in C$ by $h(2k_1, 2k_2+1)\in C$. Thus the same argument as the
above case shows that if $\bar h(i,j)\in C$, then both $i$ and $j$ have to
be even.
\end{proof}

Recall that our goal is to show that $\mathcal S_{s,t}=(F,D)$ is rigid for
$\str G_{s,t}$. In particular, $\big\{(u,u) \bigmid u\in F\big\} \subseteq
h(V(\str G_{s,t}))$. Since $h(V(\str G_{s,t}))= \bar h(V(\str G_{s,t}))$,
this is equivalent to
\[
\big\{(u,u) \bigmid u\in F\big\} \subseteq \bar h(V(\str G_{s,t})).
\]
So without loss of generality, in the following we assume that $h(F_1)=
C_1$. That is,
\begin{eqnarray}\label{eq:hC}
h(i,j) \in C & \iff & (i,j)\in F_1.
\end{eqnarray}

\begin{lemma}\label{lem:4corners}
\begin{align*}
 & \big\{h(2,2),  h(2, t), h(s-1, 2), h(s-1, t)\big\} \\
 = & \Big\{\big((2,2),(2,2)\big), \big((2, t),(2, t)\big), %\\ & \hspace{1cm}
 \big((s-1, 2),(s-1, 2)\big), \big((s-1, t),(s-1, t)\big)\Big\}.
\end{align*}
\end{lemma}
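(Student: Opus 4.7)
The plan is to exploit the already-established assumption $h(F_1) = C$ together with the endomorphism property of $\pi_1 \circ h$. Since every element of $C$ has the form $(u,u)$ with $u \in F_1$, the projection $\pi_1$ bijects $C$ onto $F_1$. Combined with the injectivity of $h$ (Lemma~\ref{lem:injective}), the composition $\sigma := \pi_1 \circ h$ restricts to a bijection $\sigma|_{F_1}: F_1 \to F_1$. By Lemma~\ref{lem:productframe}, $\pi_1 \circ h$ is an endomorphism of $\str G_{s,t}$, and is therefore distance non-increasing:
\[
\dist^{\str G_{s,t}}\bigl(\sigma(u), \sigma(v)\bigr) \le \dist^{\str G_{s,t}}(u,v) \qquad \text{for all } u,v \in F_1.
\]

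The crucial step is a symmetry trick. Summing the above inequality over all ordered pairs $(u,v) \in F_1 \times F_1$, the left-hand side and the right-hand side coincide, because $\sigma$ merely permutes the summation indices. Hence every individual inequality must be an equality, i.e., $\sigma$ preserves all pairwise $\str G_{s,t}$-distances among the vertices of $F_1$.

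It remains to pin down the four corner vertices of $F_1$ by a distance-invariant property. A direct computation shows that the $\str G_{s,t}$-diameter of $F_1 = \{(2i,2j) \bigmid i \in [k_1],\; j \in [k_2+1]\}$ equals $s+t-5$, attained precisely by the two ``diagonal'' pairs $\{(2,2),(s-1,t)\}$ and $\{(2,t),(s-1,2)\}$. Equivalently, a vertex of $F_1$ belongs to some diameter-realizing pair if and only if it is one of the four corners $(2,2), (2,t), (s-1,2), (s-1,t)$. Since $\sigma$ preserves $\str G_{s,t}$-distances on $F_1$, it must permute these four corner vertices. Because $h$ sends each such corner $u$ into $C$ and $\pi_1(h(u))$ is again a corner of $F_1$, we conclude that $h(u) = (v,v)$ for some corner $v$, and the stated set equality then follows from the injectivity of $h$. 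The only real obstacle is the diameter characterization, which is an arithmetic verification; everything else is a clean application of the ``bijection plus monotonicity yields invariance'' principle combined with Lemma~\ref{lem:correctcenter}.
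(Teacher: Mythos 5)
Your proof is correct, and it genuinely differs from the paper's. The paper identifies the corners of $F_1$ by a local degree invariant: each corner is adjacent, via a path of length two, to exactly two other $F_1$-vertices, while all other $F_1$-vertices reach three or four. Since the embedding $h$ maps $F_1$ bijectively onto $C$ and preserves paths (and $C$, like $F_1$, is independent), $h$ cannot decrease this count, so a degree-sum argument forces it to send corners to corners. Your route instead works through $\sigma=\pi_1\circ h$: you observe $\sigma$ restricts to a bijection of $F_1$, invoke the endomorphism property from Lemma~\ref{lem:productframe} to get $\dist(\sigma(u),\sigma(v))\le\dist(u,v)$, and then the summing/reindexing trick forces equality for every pair. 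The corners are then characterized globally as the vertices participating in a diameter-realizing pair of $F_1$ (diameter $s+t-5$). Both approaches exploit the bijectivity of $h|_{F_1}$ established in Lemma~\ref{lem:correctcenter}, but yours has the advantage that its conclusion is stronger: you obtain that $\sigma$ is a genuine isometry of $F_1$, which in fact is exactly the distance information the paper reaches for immediately afterward (in the sentence preceding~\eqref{eq:centers}) to fix all of $F_1$, not just the corners. The paper's degree-counting is arguably more self-contained, relying only on path preservation by the embedding; your argument imports the endomorphism lemma but gets more mileage from a single clean step.
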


\begin{proof}
In the grid $\str G_{s, t}$ the only vertices in $F_1$ that are connected to
exactly two other vertices in $F_1$ by paths of length 2 are $(2,2)$, $(2,
t)$, $(s-1, 2)$, and $(s-1, t)$ (all the others are connected to either 3 or
4 vertices in $C$). By Lemma~\ref{lem:correctcenter}, $h$ has to map them to
those vertices in $C$ with the same property in $H$, which are precisely the
ones in the righthand side of the above equation.
\end{proof}

By similar arguments as the proof of Lemma~\ref{lem:gridframe}, in
particular the proof of Claim~2 based on distances, and by taking
automorphism if necessary, we can assume that for every $i\in [k_1]$ and
$j\in [k_2+1]$
\begin{equation}\label{eq:centers}
h(2i, 2j)= \big((2i,2j), (2i,2j)\big).
\end{equation}

\begin{lemma}\label{lem:gridcorners}
For every $u\in \big\{(1,1), (s,1), (1,t), (s,t)\big\}$ we have $h(u)=
(u,u)$.
\end{lemma}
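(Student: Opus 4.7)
The plan is to pin down $h(1,1)$ by progressively determining $h$ on vertices near the corner, and then identifying $h(1,1)$ as a common $P$-neighbor of $h(2,1)$ and $h(1,2)$. The arguments for the three other corners are essentially the same (in fact simpler for $(1,t)$ and $(s,t)$, where the corresponding $F_1$-vertex has only three grid-neighbors, all of which lie in $F$). The key structural observation used throughout is that whenever $v \in F$, or $v \in D$ is not associated with any vertex or edge of $(G\setminus F)/D$, the only vertex of $P$ with first coordinate $v$ is $(v,v) \in V_2$. Near the bottom boundary of the grid, $(3,2)$ and $(5,2)$ fall into this latter category: each has only one $G \setminus F$-neighbor (lying in $V(G) \setminus (F \cup D)$), so it is not on any $D$-internal path connecting two such vertices, and vertex-association is precluded since every $D$-vertex has degree at least $2$ in $G$.

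First I would enumerate the $P$-neighbors of $h(2,2) = ((2,2),(2,2))$. The grid-neighbors of $(2,2)$ are $(1,2), (3,2), (2,1), (2,3)$; three of these lie in $F$, and $(3,2) \in D$ is not associated with any vertex or edge. By the observation above, each contributes exactly one $V_2$-vertex, so this neighborhood is precisely $\{((1,2),(1,2)), ((3,2),(3,2)), ((2,1),(2,1)), ((2,3),(2,3))\}$, and $h(1,2), h(3,2), h(2,1), h(2,3)$ must be a permutation of these. Intersecting with the $P$-neighborhood of $h(4,2) = ((4,2),(4,2))$ leaves only the vertex whose first coordinate is the unique common grid-neighbor $(3,2)$, so $h(3,2) = ((3,2),(3,2))$. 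An identical argument using $h(2,4) = ((2,4),(2,4))$ gives $h(2,3) = ((2,3),(2,3))$.

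By elimination, $\{h(1,2), h(2,1)\} = \{((1,2),(1,2)), ((2,1),(2,1))\}$. To break the tie, I would suppose for contradiction that $h(2,1) = ((1,2),(1,2))$. Then $h(3,1)$, being a $P$-neighbor of both $h(2,1)$ and $h(3,2) = ((3,2),(3,2))$, lies in the intersection of the $P$-neighborhoods of $((1,2),(1,2))$ and $((3,2),(3,2))$; listing these neighborhoods shows that the only common element is $((2,2),(2,2)) = h(2,2)$, contradicting the injectivity of $h$. Hence $h(2,1) = ((2,1),(2,1))$ and $h(1,2) = ((1,2),(1,2))$.

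Finally, $h(1,1)$ is a common $P$-neighbor of $((2,1),(2,1))$ and $((1,2),(1,2))$. The common grid-neighbors of $(2,1)$ and $(1,2)$ are exactly $(1,1)$ and $(2,2)$, both in $F$, so this common $P$-neighborhood is $\{((1,1),(1,1)), ((2,2),(2,2))\}$, and $h(1,1) = ((1,1),(1,1))$ follows by injectivity. The main obstacle is not conceptually deep but rather the combinatorial bookkeeping: one must correctly identify, for each grid-vertex $v$ appearing in these computations, whether $v \in F$, whether $v \in D$, and in the latter case whether $v$ is associated with a vertex or an edge of $(G\setminus F)/D$. Once that is done, the $P$-neighborhoods are completely determined and the argument is forced.
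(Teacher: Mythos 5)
There is a genuine gap. You claim that $(3,2)$ and $(5,2)$ are in $D$ \emph{without} being associated with any vertex or edge of $(G\setminus F)/D$, and hence that $\big((3,2),(3,2)\big)$ and $\big((5,2),(5,2)\big)$ are the unique $P$-vertices with these first coordinates. That is false: the degree condition in the vertex-association definition must be read with respect to the graph in which the contraction is performed, which here is $G\setminus F$, not $G$. (This is the graph that plays the role of ``$G$'' in the opening paragraph of Section~\ref{sec:frames}, and matches condition (S3), which stipulates $\deg^{G\setminus F}(v)\le 2$ for $v\in D$.) In $G\setminus F$ the vertex $(3,2)$ has degree $1$ — its only surviving neighbour is $(3,3)\in V(G)\setminus(F\cup D)$ — so $(3,2)$ \emph{is} associated with the vertex $(3,3)$, and likewise $(5,2)$ with $(5,3)$. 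Consequently $\big((3,2),(3,2)\big)\notin V(P)$; the $P$-vertices with first coordinate $(3,2)$ are exactly the $V_3$-vertices $\big((3,2),\vv_{(3,2),a}\big)$ with $\chi(a)=(3,3)$, and there may be several of them. (This is visible in Figure~\ref{fig:gridproduct}: the vertices over $(3,2)$, $(5,2)$, $(3,6)$, $(5,6)$ are the gray ones, which the text identifies as $V_3$.)

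This collapses the rest of the argument as written. Your computation of the $P$-neighbourhood of $h(2,2)=\big((2,2),(2,2)\big)$ is missing the $V_3$-vertices over $(3,2)$, so you cannot conclude $h(3,2)=\big((3,2),(3,2)\big)$ (that vertex does not exist), and the subsequent ``by elimination'' step identifying $\{h(1,2),h(2,1)\}$ no longer goes through, since one of $h(1,2)$, $h(2,1)$ could a priori be one of the extra $V_3$-vertices. A salvage is not impossible — knowing only that $h(3,2)$ has first coordinate $(3,2)$ still yields the common-neighbour contradiction for the tie-break — but ruling out that $h(1,2)$ or $h(2,1)$ lands in $V_3$ needs a new argument. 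The paper's proof avoids this issue entirely by starting at the top corner $(1,t)$, where every grid-neighbour of $(2,t)$ is in $F$ (the top two rows and leftmost two columns are fully contained in $F$), and then descending the first column; no $D$-vertex appears until the frame has already been pinned down. I would also note that your claim for $(1,t)$ and $(s,t)$ being ``simpler because the nearby $F_1$-vertex has only three grid-neighbours, all in $F$'' is precisely the observation the paper exploits, but your actual computation was performed at $(1,1)$ where this fails.
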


\begin{proof}
We first show
\[
h(1,t)= \big((1,t), (1,t)\big).
\]
Since $h$ is an embedding and $h(2,t)= \big((2,t), (2,t)\big)$, it holds
that $h(2,t)$ is in
\[
\Big\{\big((1,t), (1,t)\big), \big((3,t), (3,t)\big), \big((2,t-1), (2,t-1)\big)\Big\},
\]
i.e., the set of vertices adjacent to $\big((2,t), (2,t)\big)$ in $P$.
Assume that $h(1,t)= \big((3,t), (3,t)\big)$. We consider the path
\[
(1,t) \to (2,t) \to (3,t) \to (4,t).
\]
Under the embedding $h$ we should get a path in $P$ as
\begin{align*}
h(1,t)= \big((3,t), (3,t)\big) \to h(2,t)& = \big((2,t), (2,t)\big) \\
& \to h(3,t) \to h(4,t)= \big((4,t), (4,t)\big),
\end{align*}
where the second and third equalities are by~\eqref{eq:centers}. But this
clearly forces $h(3,t)= \big((3,t), (3,t)\big)= h(1,t)$, which contradicts
the injectivity of $h$. The case for $h(2,t)= \big((2,t-1), (2,t-1)\big)$
can be similarly ruled out.

\medskip
Now we proceed to show that for all $j< t$ we have $h(1,j)= \big((1,j),
(1,j)\big)$. Let $j= t-1$. Since $(1,t)$ and $(1,t-1)$ are adjacent in $G$,
hence $h(1,t)= \big((1,t), (1,t)\big)$ and $h(1,t-1)$ are adjacent in $P$
too. Thus, $h(1,t-1)= \big((1,t-1), (1,t-1)\big)$. Furthermore, using
\begin{eqnarray*}
h(2,t)= \big((2,t),(2,t)\big)
 & \text{and} &
h(2,t-2)= \big((2,t-2),(2,t-2)\big)
\end{eqnarray*}
we deduce that $h(2, t-1)= \big((2, t-1), (2, t-1)\big)$. Combined with
$h(1,t-1)= \big((1,t-1),(1,t-1)\big)$, we conclude that
\[
h(1,t-2)= \big((1,t-2), (1,t-2)\big).
\]
Repeating the above argument, it can be reached that $h(1,1)= h(1,1)$.
Similarly we can obtain that $h(s,t)= h(s,t)$, and finally $h(s,1)= h(s,1)$.
\end{proof}

Now Proposition~\ref{prop:gridrigid} follows easily from
Lemma~\ref{lem:correctcenter} and Lemma~\ref{lem:gridcorners}.

%\begin{corollary}\label{cor:gridrich}
%Let $\cls K$ be a class of graphs such that for every $k\in \mathbb N$ there
%exists a grid $\str G_{s,t}\in \cls K$ with $\min \{s,t\}\ge k$. Then $\cls
%K$ is rich.
%\end{corollary}

\begin{proof}[Proof of Proposition~\ref{prop:gridwallrich}~(i)] Let $k\in \mathbb N$. Our goal
is to find a graph $G$ in $\cls K$ with a rigid skeleton $(F,D)$ such that
$\tw\big((G\setminus F)/D\big)\ge k$. Define
\[
k^*:= 2k+4.
\]
Thus there is a grid $\str G_{s,t}\in \cls K$ with $s,t\ge k^*$. By
Proposition~\ref{prop:gridrigid} the skeleton $\mathcal S_{s,t}= (F,D)$ in
Definition~\ref{defn:gridskeleton} of $\str G_{s,t}$ is rigid. Moreover,
\[
\tw\big((\str G_{s, t}\setminus F)/D\big)=
 \min\left\{\floor{\frac{s-1}{2}}, \floor{\frac{t-2}{2}}\right\}-1\ge k,
\]
by Lemma~\ref{lem:gridskeletontw}.
\end{proof}

%
%\medskip
%\begin{proof}[of Theorem~\ref{thm:main}] (1) is a direct consequence of
%Theorem~\ref{thm:richW1} and Corollary~\ref{cor:gridrich}. \qed
%\end{proof}

\subsection{The richness of walls}\label{subsec:walls}
We fix some $s> 2$ and $t> 3$. Let $\mathcal S^{\rm wall}_{s,t}:= (F, D)$
with $F= F_1\cup F_2 $ and $D= \emptyset$, where
\begin{equation}\label{eq:wallF}
\begin{array}{rl}
F_1 & = \big\{v_{i,j} \bigmid \text{$i\in [s+1]$ and $j\in [2]$}\big\} \\
 & \hspace{1.2cm}
 \cup\; \big\{u_{i,2} \bigmid i\in [s+1]\big\}
 \cup \big\{v_{i,t},u_{i,t} \bigmid i\in [s+1]\big\}\\[1mm]
 & \hspace{1.2cm}\cup\; \big\{v_{i,t+1} \bigmid \text{$i\in [s+1]$ and odd $t$}\big\}
  \cup \big\{u_{i,t+1} \bigmid \text{$i\in [s+1]$ and even $t$}\big\}, \\[2mm]
 F_2 &= \big\{v_{1,j},u_{1,j}, v_{2,j}\bigmid 3\le j< t\big\}
  \cup \big\{u_{s,j}, v_{s+1, j}, u_{s+1, j}\bigmid 3\le j <t\big\}.
\end{array}
\end{equation}
That is, we take as the set $F$ the bottom two rows, the top two rows, and
the leftmost three columns, and the rightmost three columns of the vertices
in $\str W_{s,t}$. Figure~\ref{fig:wallskeletion} shows the case for $\str
W_{6,7}$.
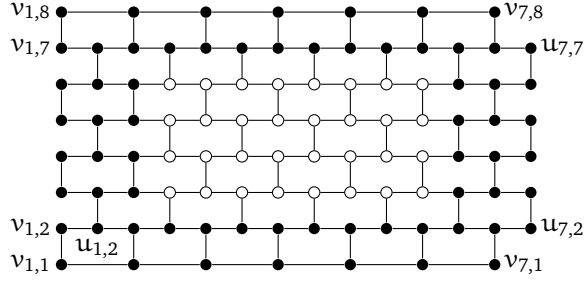
\begin{figure}%[h!]
\centering
\begin{tikzpicture}[
  scale=0.60,
  vertexD/.style={circle,inner sep=0pt,minimum size=1.5mm,fill=black},
  vertex/.style={circle,inner sep=0pt,minimum size=1.5mm,fill=white,draw}
  ]

\begin{scope}%[xshift=10cm,yshift=1cm]
  \foreach \x in {0,1,...,6}
      \foreach \y in {0,1,2,3}
      {
        \ifthenelse{\NOT \y=3}
        {
         \ifthenelse{\y=0 \OR \x=0 \OR \x=1 \OR \x=6}
          {\node[vertexD] (v\x\y) at ($(1.6*\x,1.6*\y)$) {};
           \node[vertexD] (vv\x\y) at ($(1.6*\x,1.6*\y)+(0,0.8)$) {};}
          {\node[vertex] (v\x\y) at ($(1.6*\x,1.6*\y)$) {};
           \node[vertex] (vv\x\y) at ($(1.6*\x,1.6*\y)+(0,0.8)$) {};};

         \ifthenelse{\y=0 \OR \x=0 \OR \x=5 \OR \x=6}
          {\node[vertexD] (u\x\y) at ($(1.6*\x,1.6*\y)+(0.8,0.8)$) {};}
          {\node[vertex] (u\x\y) at ($(1.6*\x,1.6*\y)+(0.8,0.8)$) {};}

         \ifthenelse{\y=2 \OR \x=0 \OR \x=5 \OR \x=6}
          {\node[vertexD] (uu\x\y) at ($(1.6*\x,1.6*\y)+(0.8,1.6)$) {};}
          {\node[vertex] (uu\x\y) at ($(1.6*\x,1.6*\y)+(0.8,1.6)$) {};}

         \draw (v\x\y) to (vv\x\y);
         \draw (u\x\y) to (uu\x\y);}
        {
         \node[vertexD] (v\x\y) at ($(1.6*\x,1.6*3)$) {};
         \node[vertexD] (vv\x\y) at ($(1.6*\x,1.6*\y)+(0,0.8)$) {};
         \draw (v\x\y) to (vv\x\y);
        }

        \ifthenelse{\y=0 \AND \NOT \x=0}
        {
        \pgfmathtruncatemacro{\xminusone}{\x - 1};
        \draw (v\x\y) to (v\xminusone\y);
        }{};

        \ifthenelse{\y<3}
        {
        \draw (vv\x\y) to (u\x\y);
        \ifthenelse{\x>0}
         {
         \pgfmathtruncatemacro{\xminusone}{\x - 1};
         \draw (vv\x\y) to (u\xminusone\y);
         }{};
        }{};

        \ifthenelse{\y>0}
        {
        \pgfmathtruncatemacro{\yminusone}{\y - 1};
        \draw (v\x\y) to (uu\x\yminusone);
        \ifthenelse{\x>0}
         {
         \pgfmathtruncatemacro{\xminusone}{\x - 1};
         \draw (uu\xminusone\yminusone) to (v\x\y);
         }{};
        }{};

        \ifthenelse{\y=3 \AND \NOT \x=0}
        {
        \pgfmathtruncatemacro{\xminusone}{\x - 1};
        \draw (vv\x\y) to (vv\xminusone\y);
        }{};

      }

      \path (v00) node[left] {\small$v_{1,1}$};
      \path (v60) node[right] {\small$v_{7,1}$};
      \path (vv00) node[left] {\small$v_{1,2}$};
      \path (u00) node[below] {\small$u_{1,2}$};
      \path (u60) node[right] {\small$u_{7,2}$};
      \path (v03) node[left] {\small$v_{1,7}$};
      \path (vv03) node[left] {\small$v_{1,8}$};
      \path (vv63) node[right] {\small$v_{7,8}$};
      \path (uu62) node[right] {\small$u_{7,7}$};

%     \path (uu02) node[left] {\small$u_{1,7}$};
%     \path (uu61) node[right] {\small$u_{7,5}$};

%     \path (5.2,-1) node{(b)}; % $(7\times 4)$-grid}};
\end{scope}

% the wall

%  \draw (11,1)--(17,1);

%  \draw (11,1)--(11,1.5);
%  \draw (12,1)--(12,1.5);
%  \draw (13,1)--(13,1.5);
%  \draw (14,1)--(14,1.5);
%  \draw (15,1)--(15,1.5);
%  \draw (16,1)--(16,1.5);
%  \draw (17,1)--(17,1.5);
% %%

%  \draw (11,1.5)--(17.5,1.5);

%  \draw (11.5,1.5)--(11.5,2);
%  \draw (12.5,1.5)--(12.5,2);
%  \draw (13.5,1.5)--(13.5,2);
%  \draw (14.5,1.5)--(14.5,2);
%  \draw (15.5,1.5)--(15.5,2);
%  \draw (16.5,1.5)--(16.5,2);
%  \draw (17.5,1.5)--(17.5,2);
% %%

%  \draw (11,2)--(17.5,2);

%  \draw (11,2)--(11,2.5);
%  \draw (12,2)--(12,2.5);
%  \draw (13,2)--(13,2.5);
%  \draw (14,2)--(14,2.5);
%  \draw (15,2)--(15,2.5);
%  \draw (16,2)--(16,2.5);
%  \draw (17,2)--(17,2.5);
% %%

%  \draw (11,2.5)--(17.5,2.5);

%  \draw (11.5,2.5)--(11.5,3);
%  \draw (12.5,2.5)--(12.5,3);
%  \draw (13.5,2.5)--(13.5,3);
%  \draw (14.5,2.5)--(14.5,3);
%  \draw (15.5,2.5)--(15.5,3);
%  \draw (16.5,2.5)--(16.5,3);
%  \draw (17.5,2.5)--(17.5,3);
% %%

%  \draw (11,3)--(17.5,3);

%  \draw (11,3)--(11,3.5);
%  \draw (12,3)--(12,3.5);
%  \draw (13,3)--(13,3.5);
%  \draw (14,3)--(14,3.5);
%  \draw (15,3)--(15,3.5);
%  \draw (16,3)--(16,3.5);
%  \draw (17,3)--(17,3.5);
% %%

%  \draw (11,3.5)--(17.5,3.5);

%  \draw (11.5,3.5)--(11.5,4);
%  \draw (12.5,3.5)--(12.5,4);
%  \draw (13.5,3.5)--(13.5,4);
%  \draw (14.5,3.5)--(14.5,4);
%  \draw (15.5,3.5)--(15.5,4);
%  \draw (16.5,3.5)--(16.5,4);
%  \draw (17.5,3.5)--(17.5,4);
% %%

%  \draw (11.5,4)--(17.5,4);

%  \draw (14,0) node{{\normalsize (b)}}; % $(6\times 6)$-wall.}};

\end{tikzpicture}
\caption{A skeleton $(F,D)$ for $\str W_{6,7}$, where $F$ is the set of black vertices and $D= \emptyset$.}\label{fig:wallskeletion}
\end{figure}

\medskip
The following observation is straightforward.
\begin{lemma}\label{lem:5cycle}
Every $5$-cycle in $\str W_{s,t}$ is contained in $F_1\subseteq F$.
\end{lemma}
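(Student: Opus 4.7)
The plan is to prove the lemma by a parity-plus-distance argument that reduces 5-cycles to local configurations near the top or bottom row. The key observation is that $\str W_{s,t}$ becomes bipartite once the horizontal edges at the bottom row (the edges $v_{i,1}v_{i+1,1}$) and at the top row (the edges in $\{v_{i,t+1}v_{i+1,t+1}\mid \text{odd } t\}\cup \{u_{i,t+1}u_{i+1,t+1}\mid \text{even } t\}$) are removed. Explicitly, the coloring $c(v_{i,j}):=j\bmod 2$ and $c(u_{i,j}):=(j+1)\bmod 2$ is proper on every edge except those horizontal ones: one checks case-by-case that the zigzag edges $v_{i,j}u_{i,j}$, $u_{i,j}v_{i+1,j}$ and the vertical edges $v_{i,j}v_{i,j+1}$ (odd $j$) and $u_{i,j}u_{i,j+1}$ (even $j$) all connect differently colored vertices. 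Consequently any odd cycle, in particular any 5-cycle, must use an odd number (i.e.\ $1$, $3$, or $5$) of horizontal edges.

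Next I would bound where those horizontal edges can lie. Since the bottom horizontal edges sit at ``row $1$'' and the top ones at ``row $t+1$'', every path in $\str W_{s,t}$ between a bottom-row vertex and a top-row vertex has length at least $t$. Hence a cycle containing both a top and a bottom horizontal edge has length $\ge 2t\ge 6$, so a 5-cycle (with $t>3$) uses horizontal edges exclusively from the top or exclusively from the bottom. By the top/bottom symmetry of the argument, I would focus on the bottom case; the top case is analogous, distinguishing $t$ odd (top row of $v$'s) from $t$ even (top row of $u$'s). Using $5$ horizontal edges is impossible because the bottom row is a path, not a cycle. The case of $3$ horizontal edges is ruled out by enumerating the three cyclic patterns HHHNN, HHNHN, HNHNH and invoking the crucial fact that a bottom-row vertex $v_{i,1}$ has exactly one non-horizontal neighbor, namely $v_{i,2}$; this forces the endpoints of any non-horizontal subpath to be of the form $v_{i,2}$, and no two distinct such vertices are adjacent or share a neighbor that could close the cycle with so few non-horizontal edges.

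The substantive case is $1$ horizontal edge, say $v_{a,1}v_{a+1,1}$. The remaining $4$ edges form a non-horizontal path of length $4$ from $v_{a+1,1}$ back to $v_{a,1}$. Since the unique non-horizontal neighbor of $v_{a,1}$ (resp.\ $v_{a+1,1}$) is $v_{a,2}$ (resp.\ $v_{a+1,2}$), this path must be $v_{a+1,1}\to v_{a+1,2}\to x\to v_{a,2}\to v_{a,1}$, and the only common neighbor of $v_{a+1,2}$ and $v_{a,2}$ (other than bottom-row vertices) is $u_{a,2}$. This identifies the 5-cycle uniquely as $v_{a,1}\, v_{a+1,1}\, v_{a+1,2}\, u_{a,2}\, v_{a,2}$, whose vertices all lie in $F_1$ by the definition in~\eqref{eq:wallF}. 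The analogous argument at the top produces the 5-cycle $v_{a,t+1}\, v_{a+1,t+1}\, v_{a+1,t}\, u_{a,t}\, v_{a,t}$ (for $t$ odd) or $u_{a,t+1}\, u_{a+1,t+1}\, u_{a+1,t}\, v_{a+1,t}\, u_{a,t}$ (for $t$ even), both of which again lie in $F_1$. The main obstacle is the bookkeeping in the ``3 horizontal edges'' case, since one has to eliminate each of the three cyclic HN-patterns; however, the degree-1 restriction on the non-horizontal neighborhood of a top/bottom-row vertex makes each subcase collapse quickly.
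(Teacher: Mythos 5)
The paper leaves Lemma~\ref{lem:5cycle} unproved (``The following observation is straightforward''), so there is no paper argument to compare against. Your proof is correct and supplies the verification the paper omits: the coloring $c(v_{i,j})=j\bmod 2$, $c(u_{i,j})=(j+1)\bmod 2$ is indeed proper on all edges except the boundary horizontal ones, so any 5-cycle uses 1, 3, or 5 of them; the height-function bound rules out mixing top and bottom; and the unique-non-horizontal-neighbor fact for boundary-row vertices forces the 5-cycle into $F_1$ exactly as you compute. Two small inaccuracies are worth smoothing. First, up to cyclic rotation there are only two arrangements of three H's and two N's --- HHNHN and HNHNH are rotations of one another --- so your enumeration of ``three'' patterns lists one twice. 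Second, the phrase about ``endpoints of any non-horizontal subpath being of the form $v_{i,2}$'' directly disposes of HHHNN, but for the HHNHN pattern the cleaner observation is that the three horizontal edges already force all five cycle vertices into the bottom row, whence the two remaining edges would also have to be bottom-row edges and hence horizontal, a contradiction. Neither point affects the correctness of your argument.
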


Clearly those $5$-cycles are shortest odd cycles in $\str W_{s,t}$. The next
lemma explains their importance. Recall that a \emph{closed walk} in a graph
$G$ is a sequence
\[
v_1 \to v_2 \to \cdots v_k \to v_{k+1}=v_1
\]
of vertices such that $v_iv_{i+1}\in E(G)$ for every $i\in [k]$. Its length
is $k$. Thus, a $k$-cycle is a closed walk of length $k$ with $k\ge 3$ and
$v_i\ne v_j$ for every $1\le i< j\le k$.

\begin{lemma}\label{lem:oddclosedwalk}
Let $G$ be a graph and $k$ the length of a shortest odd cycle in $G$. Then
every closed walk in $G$ of length $k$ has to be a $k$-cycle.
\end{lemma}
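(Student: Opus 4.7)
The plan is to argue by contradiction: suppose a closed walk $W: v_1 \to v_2 \to \cdots \to v_k \to v_{k+1} = v_1$ of length $k$ is not a $k$-cycle. Then there exist indices $1 \le i < j \le k$ with $v_i = v_j$ (and not both endpoints of the walk). Split $W$ at this repeated vertex into two closed walks:
\begin{align*}
W_1 &: v_i \to v_{i+1} \to \cdots \to v_j, \\
W_2 &: v_j \to v_{j+1} \to \cdots \to v_k \to v_1 \to \cdots \to v_i,
\end{align*}
of lengths $\ell_1 = j - i$ and $\ell_2 = k - \ell_1$ respectively. Since $G$ has no loops, each $\ell_m \ge 2$; and since $\ell_1 + \ell_2 = k$ is odd, exactly one of $\ell_1, \ell_2$ is odd. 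Call the odd one $W^*$; it is a closed walk of odd length $\ell^*$ with $3 \le \ell^* < k$.

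Next I would invoke the standard fact that every closed walk of odd length in a graph contains an odd cycle whose length is at most that of the walk. Given the shape of the argument, it is cleanest to fold this fact into the same induction: prove by induction on $\ell$ that any closed walk of odd length $\ell$ in $G$ contains an odd cycle of length $\le \ell$. The base case $\ell = 3$ is immediate (a triangle-walk must be a triangle, since $G$ has no loops or multi-edges). For the inductive step, if the walk is not already a cycle, one splits as above and applies the induction hypothesis to the odd shorter sub-walk. Applied to $W^*$, this yields an odd cycle of length $\le \ell^* < k$, contradicting the assumption that $k$ is the length of a shortest odd cycle of $G$.

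The only mild subtlety is ensuring that the chosen repeated pair $v_i = v_j$ really produces two \emph{nontrivial} closed sub-walks; this is exactly where one uses the absence of loops (ruling out $\ell_m = 1$) and the fact that the repetition $v_1 = v_{k+1}$ closing the original walk is not counted (we require $j \le k$). No other step is truly delicate; the argument is essentially the classical observation that a shortest odd closed walk must be a cycle, specialized to length equal to the odd girth.
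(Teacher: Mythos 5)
Your proof is correct and takes essentially the same approach as the paper's: split the closed walk at a repeated vertex, observe that the odd-length piece is a shorter odd closed walk, and conclude that a shorter odd cycle exists, contradicting the minimality of $k$. The only cosmetic difference is that you package the iteration as an explicit induction (``every closed walk of odd length $\ell$ contains an odd cycle of length $\le \ell$''), whereas the paper argues by repeatedly excising even sub-cycles chosen so that the in-between vertices are pairwise distinct; both variants are sound.
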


\begin{proof}
Let
\begin{equation}\label{eq:closedwalk}
Z: v_1 \to v_2\to \cdots \to v_k \to v_1
\end{equation}
be a closed walk in $G$. We need to show that $v_i\ne v_j$ for every $1\le
i< j\le k$. Assume otherwise, choose such $i, j$ with $v_i=v_j$ and all
$v_k$'s in between pairwise distinct. Then either $j=i+2$ or
\[
Z_1: v_i \to v_{i+1} \to v_{i+2}\to \cdots \to v_j
\]
is a cycle. Note that $Z_1$ must be an even cycle, as no odd cycle in $G$
has length smaller than $k$. Thus in both cases, we can shorten the closed
walk~\eqref{eq:closedwalk} to
\[
v_1\to \cdots \to v_i \to v_{j+1} \to \cdots \to v_1
\]
which is still a closed walk of odd length. By repeating this procedure,
eventually we obtain a cycle of odd length in $G$, which contradicts the
minimality of $k$.
\end{proof}

\begin{corollary}\label{cor:oddcycle}
Let $G$ be a graph and $k$ the length of a shortest odd cycle in $G$. Then
every endomorphism of $G$ maps every $k$-cycle to a $k$-cycle.
\end{corollary}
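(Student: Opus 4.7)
The plan is to derive this corollary as a direct application of Lemma~\ref{lem:oddclosedwalk}. Let $h$ be an endomorphism of $G$ and let
\[
C: v_1 \to v_2 \to \cdots \to v_k \to v_1
\]
be any $k$-cycle in $G$. Because $h$ preserves edges, the image
\[
h(C): h(v_1) \to h(v_2) \to \cdots \to h(v_k) \to h(v_1)
\]
is a closed walk in $G$ of length exactly $k$.

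Since $k$ is the length of a shortest odd cycle in $G$, in particular $k$ is odd, so Lemma~\ref{lem:oddclosedwalk} applies to $h(C)$. It follows that $h(C)$ is itself a $k$-cycle, which is precisely what we need.

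There is no real obstacle here: the only thing to verify is that the hypothesis of Lemma~\ref{lem:oddclosedwalk} is met, i.e., that $k$ is odd (which is immediate from the statement) and that the image is indeed a closed walk of length $k$ (which follows at once from $h$ being a homomorphism). Thus the proof amounts to a single invocation of the previous lemma.
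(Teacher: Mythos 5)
Your proof is correct and matches the paper's argument exactly: both observe that the endomorphism sends a $k$-cycle to a closed walk of length $k$ and then invoke Lemma~\ref{lem:oddclosedwalk} to conclude that this walk must be a $k$-cycle.
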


\begin{proof} This is immediate by observing that every homomorphism maps a
cycle to a walk and by Lemma~\ref{lem:oddclosedwalk}.
\end{proof}

\begin{lemma}\label{lem:wallframe}
The set $F$ defined in~\eqref{eq:wallF} is a frame for the wall $\str
W_{s,t}$.
\end{lemma}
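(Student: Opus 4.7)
The plan is to show that any endomorphism $h$ of $\str W_{s,t}$ with $F\subseteq h(V(\str W_{s,t}))$ is surjective; by finiteness of the wall (and Remark \ref{rem:frame}(2)) this is equivalent to $h$ being an automorphism. I will combine the rigidity of shortest odd cycles given by Corollary \ref{cor:oddcycle} with a distance argument in the spirit of Lemma \ref{lem:gridframe}.

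The five-cycles of $\str W_{s,t}$ organize into two disjoint \emph{strips}: a bottom strip $B$ consisting of the cycles $C_i=(v_{i,1},v_{i+1,1},v_{i+1,2},u_{i,2},v_{i,2})$ for $i\in[s]$, and an analogous top strip $T$ (whose exact shape depends on the parity of $t$). Consecutive cycles in a strip share exactly one edge, non-consecutive cycles in the same strip share nothing, and $V(B),V(T)$ are vertex-disjoint with no edge between them because $t>3$. By Corollary \ref{cor:oddcycle} and Lemma \ref{lem:5cycle}, $h$ maps every 5-cycle to a 5-cycle, and so sends $V(B)\cup V(T)$ into itself. Since $C_i$ and $C_{i+1}$ share an edge, the 5-cycles $h(C_i)$ and $h(C_{i+1})$ either coincide or share the image edge; but two distinct 5-cycles in $\str W_{s,t}$ share an edge only when they are consecutive in a common strip. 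Hence the sequence $h(C_1),\ldots,h(C_s)$ is a walk on one of the two strips (with possible collapses), and analogously for the top chain.

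Now I exploit $F\subseteq h(V(\str W_{s,t}))$, in particular $F_1\subseteq h(V(\str W_{s,t}))$, to rule out the degenerate walks. If the bottom chain collapsed (i.e., some $h(C_i)=h(C_{i+1})$) or if both chains were routed into the same strip, then the total image of $V(B)\cup V(T)$ under $h$ would consist of strictly fewer than $|V(B)\cup V(T)|$ vertices, leaving some vertex of $F_1\cap(V(B)\cup V(T))$ outside $h(V(\str W_{s,t}))$; the pendant vertices of $F_1\setminus(V(B)\cup V(T))$, such as $u_{s+1,2}$, must be treated separately using that they are degree-one leaves of $\str W_{s,t}$ and so force their own preimages. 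The only remaining possibilities are that $h$ permutes the two strips (identically or swapping them) and either preserves or reverses the linear order of cycles within each strip. Up to composing with one of the standard reflection automorphisms of $\str W_{s,t}$, I may therefore assume that $h$ is the identity on $V(B)\cup V(T)$. From here the distance argument of Claim~2 in the proof of Lemma \ref{lem:gridframe} takes over: for every $u\in V(\str W_{s,t})$ and every $w\in V(B)\cup V(T)$ one has $\dist^{\str W_{s,t}}(u,w)\ge \dist^{\str W_{s,t}}(h(u),h(w))=\dist^{\str W_{s,t}}(h(u),w)$, and because the fixed set $V(B)\cup V(T)$ is spread along two opposite boundaries of the wall, this system of inequalities pins each $u$ down uniquely, yielding $h(u)=u$.

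The main obstacle is the middle step: a clean enumeration of all possible ways the two 5-cycle chains can be routed and the verification that every degenerate routing leaves some vertex of $F_1$ uncovered. Careful bookkeeping is required for the handful of vertices of $F_1$ that do \emph{not} lie on any 5-cycle (the pendant vertices at the right end of the horizontal strips), because Corollary \ref{cor:oddcycle} tells us nothing directly about them and the argument that $F\subseteq h(V(\str W_{s,t}))$ covers them must be made by a separate degree/leaf analysis. Once this combinatorial case analysis is completed, the final distance step is essentially routine and parallels the grid case.
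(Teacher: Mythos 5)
Your proof shares the two key tools with the paper (Corollary~\ref{cor:oddcycle} for preservation of $5$-cycles, and the distance-nondecreasing argument of Claim~2 in Lemma~\ref{lem:gridframe}) but applies them in the opposite order. The paper first uses the distance argument on the preimages of the six ``diameter-realizing'' vertices $v_{1,2},v_{s+1,t},u_{s+1,t},v_{1,t},v_{s+1,2},u_{s+1,2}$ (which exist in $h^{-1}$ because $F\subseteq h(V)$), showing these preimages must themselves be among those six vertices, then uses the $5$-cycle criterion only to disambiguate which corner goes where (since $v_{1,2},v_{s+1,2},v_{1,t},v_{s+1,t}$ lie on $5$-cycles while $u_{s+1,2},u_{s+1,t}$ do not), and finishes by an inductive extension along $F_1$, $F_2$, and then the rest. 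You instead lead with the $5$-cycle chain-tracking along the two strips $B$ and $T$, intending to pin down $h$ on $V(B)\cup V(T)$ before any distance argument.

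The chain-tracking setup is sound, but the degeneracy-elimination step has a genuine gap. You argue that a collapsed or coalescing routing makes the image $h(V(B)\cup V(T))$ smaller than $|V(B)\cup V(T)|$ and conclude that some vertex of $F_1\cap(V(B)\cup V(T))$ is outside $h(V(\str W_{s,t}))$. That conclusion does not follow from the premise: Corollary~\ref{cor:oddcycle} gives you $h(V(B)\cup V(T))\subseteq V(B)\cup V(T)$, not that $h^{-1}(V(B)\cup V(T))\subseteq V(B)\cup V(T)$, and nothing you prove rules out a vertex of $V(\str W_{s,t})\setminus(V(B)\cup V(T))$ being mapped by $h$ onto the vertex of $F_1$ you claim is uncovered — in which case $F\subseteq h(V(\str W_{s,t}))$ remains satisfied and no contradiction results. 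You yourself flag this as ``the main obstacle'' but do not resolve it. The paper's order of argument sidesteps exactly this difficulty: because the first step works with $h^{-1}$ of the extremal corners, the distance inequality forces where those preimages live without needing any control over which vertices of the wall are allowed to land in the strips. (Your later step, reducing a non-degenerate routing to the identity on $V(B)\cup V(T)$ ``up to reflections,'' also needs an argument that the within-cycle flip of $C_1$ about its edge shared with $C_2$ is excluded; this one can be patched by noting that such a flip forces $h(C_3)$ to share the edge $u_{2,2}v_{3,2}$ with $h(C_2)=C_2$, which lies on no other $5$-cycle, hence a collapse — but that is additional work not present in the proposal.)
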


%\begin{proof}[Proof Sketch] Let $h$ be an endomorphism of $\str W_{s,t}$ with
%$F\subseteq h(V(\str W_{s,t}))$. Corollary~\ref{cor:oddcycle} implies that
%$h$ maps every 5-cycle in $\str W_{s,t}$ to a 5-cycle. By an easy induction,
%together with Lemma~\ref{lem:5cycle}, we can conclude that $h(F_1)= F_1$.
%Then a routine argument based on distance gives us the surjectivity of $h$.
%\end{proof}

\begin{proof}
We discuss the cases as exemplified in Figure~\ref{fig:wallskeletion} where
$s\le t$ and $t\ge 5$ is odd. The others can be argued in a similar fashion.

Let $h$ be an endomorphism of $\str W_{s,t}$ with $F\subseteq h(V(\str
W_{s,t}))$. We consider four pair of vertices
\begin{eqnarray*}
(v_{1,2}, v_{s+1,t}), (v_{1,2}, u_{s+1,t})
 & \text{and} &
(v_{1,t}, v_{s+1,2}), (v_{1,t}, u_{s+1,2}).
\end{eqnarray*}
It is not hard to see that they have the largest distance in $\str W_{s,t}$,
and the distance between any other pair of vertices is strictly smaller.
Figure~\ref{fig:wallcorners} illustrates the situation for $\str W_{6,7}$.
We proceed similarly as Claim~1 in the proof of Lemma~\ref{lem:gridframe} to
obtain
\begin{align*}
h\big(\{v_{1,2}, v_{s+1,t}, u_{s+1,t}\}\big)& = \{v_{1,2}, v_{s+1,t}, u_{s+1,t}\} \\
 & \text{and}\
h\big(\{v_{1,t}, v_{s+1,2}, u_{s+1,2}\}\big)= \{v_{1,t}, v_{s+1,2}, u_{s+1,2}\},
\end{align*}
or
\begin{align*}
h\big(\{v_{1,2}, v_{s+1,t}, u_{s+1,t}\}\big)& = \{v_{1,t}, v_{s+1,2}, u_{s+1,2}\} \\
 & \text{and}\
h\big(\{v_{1,t}, v_{s+1,2}, u_{s+1,2}\}\big)= \{v_{1,2}, v_{s+1,t}, u_{s+1,t}\}.
\end{align*}
Note that $v_{1,2}$, $v_{s+1,2}$, $v_{1, t}$ and $v_{s+1,t}$ are on
5-cycles, while $u_{s+1,2}$ and $u_{s+1,t}$ are not.
Corollary~\ref{cor:oddcycle} implies that $h$ maps every 5-cycle in $\str
W_{s,t}$ to a 5-cycle. Thus we can assume without loss of generality that
\begin{eqnarray*}
h(v_{1,2})= v_{1,2}
 & \text{and} &
h(v_{1,t})= v_{1,t},
\end{eqnarray*}
and hence $h(w)= w$ for any $w\in \big\{v_{s+1,2}, u_{s+1,2},v_{s+1,t},
u_{s+1,t}\big\}$. Once these six vertices are in the right place, by an easy
induction we can conclude that $h(w)= w$ for every $w\in F_1$.

Then we proceed to show $h(w)= w$ for every $w\in F_2$. As a first step,
consider the possibility of $h(u_{1,3})$. Note that $h(u_{1,2})= u_{1,2}$
(for $u_{1,2}\in F_1$) implies
\begin{equation}\label{eq:hu13}
h(u_{1,3})\in \big\{v_{1,2}, v_{2,2}, u_{1,3}\big\}.
\end{equation}
On the other hand,
\[
\dist^{\str W_{s,t}}(u_{1,3}, v_{1,t})
 \ge \dist^{\str W_{s,t}}\big(h(u_{1,3}), h(v_{1,t})\big)
 = \dist^{\str W_{s,t}}\big(h(u_{1,3}), v_{1,t}\big).
\]
With~\eqref{eq:hu13} we can conclude $h(u_{1,3})= u_{1,3}$.

Recall $v_{1,3}\in F_2\subseteq h\big(V(\str W_{s,t})\big)$. Choose an
arbitrary $w\in V(\str W_{s,t})$ with $h(w)= v_{1,3}$. We observe that for
every $z\in \big\{v_{1,2}, u_{s+1,2}, v_{1,t}, u_{s+1, t}\big\}\subseteq
F_1$
\[
\dist^{\str W_{s,t}}(v_{1,3},z) = \dist^{\str W_{s,t}}\big(h(w),h(z)\big)
 \le \dist^{\str W_{s,t}}(w,z).
\]
But this implies $w= v_{1,3}$.

For the remaining vertices in $F_2$, we can argue similarly. And finally by
observing the distance between any vertex in $V(\str W_{s,t})\setminus
(F_1\cup F_2)$ and $F_1\cup F_2$, we can establish the surjectivity of $h$.
\end{proof}

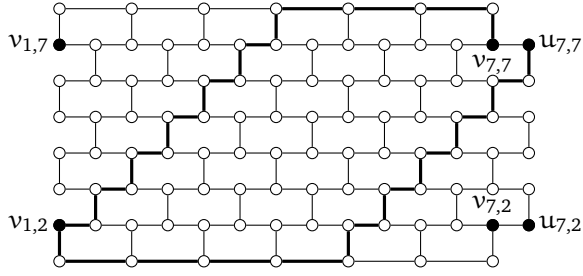
\begin{figure}%[h!]
\centering
\begin{tikzpicture}[
  scale=0.60,
  vertex/.style={circle,inner sep=0pt,minimum size=1.5mm,draw,fill=white},
  vertexD/.style={circle,inner sep=0pt,minimum size=1.5mm,draw,fill=black},
  ]

\begin{scope}[xshift=0cm,yshift=0cm]
  \foreach \x in {0,1,...,6}
      \foreach \y in {0,1,2,3}
      {
        \ifthenelse{\NOT \y=3}
        {
          \node[vertex] (v\x\y) at ($(1.6*\x,1.6*\y)$) {};
          \node[vertex] (vv\x\y) at ($(1.6*\x,1.6*\y)+(0,0.8)$) {};

          \node[vertex] (u\x\y) at ($(1.6*\x,1.6*\y)+(0.8,0.8)$) {};

          \node[vertex] (uu\x\y) at ($(1.6*\x,1.6*\y)+(0.8,1.6)$) {};
         \draw (v\x\y) to (vv\x\y);
         \draw (u\x\y) to (uu\x\y);}
        {
         \node[vertex] (v\x\y) at ($(1.6*\x,1.6*3)$) {};
         \node[vertex] (vv\x\y) at ($(1.6*\x,1.6*\y)+(0,0.8)$) {};
         \draw (v\x\y) to (vv\x\y);
        }

        \ifthenelse{\y=0 \AND \NOT \x=0}
        {
        \pgfmathtruncatemacro{\xminusone}{\x - 1};
        \draw (v\x\y) to (v\xminusone\y);
        }{};

        \ifthenelse{\y<3}
        {
        \draw (vv\x\y) to (u\x\y);
        \ifthenelse{\x>0}
         {
         \pgfmathtruncatemacro{\xminusone}{\x - 1};
         \draw (vv\x\y) to (u\xminusone\y);
         }{};
        }{};

        \ifthenelse{\y>0}
        {
        \pgfmathtruncatemacro{\yminusone}{\y - 1};
        \draw (v\x\y) to (uu\x\yminusone);
        \ifthenelse{\x>0}
         {
         \pgfmathtruncatemacro{\xminusone}{\x - 1};
         \draw (uu\xminusone\yminusone) to (v\x\y);
         }{};
        }{};

        \ifthenelse{\y=3 \AND \NOT \x=0}
        {
        \pgfmathtruncatemacro{\xminusone}{\x - 1};
        \draw (vv\x\y) to (vv\xminusone\y);
        }{};

      }

      %\node[vertex] (vv\x\y) at ($(1.6*\x,1.6*\y)+(0,0.8)$) {};
      \draw (0,0.8) node[vertexD]{};
      \path (vv00) node[left] {\small$v_{1,2}$};

      %\node[vertex] (vv\x\y) at ($(1.6*\x,1.6*\y)+(0,0.8)$) {};
      \draw ($(1.6*6,1.6*0)+(0,0.8)$) node[vertexD]{};
      \path (vv60) node[above] {\small$v_{7,2}$};

      \draw ($(1.6*6,1.6*0)+(0.8,0.8)$) node[vertexD] {};
      \path (u60) node[right] {\small$u_{7,2}$};

      \draw ($(1.6*0,1.6*3)$) node[vertexD]{};
      \path (v03) node[left] {\small$v_{1,7}$};

      \draw ($(1.6*6,1.6*3)$) node[vertexD]{};
      \path (v63) node[below] {\small$v_{7,7}$};

      \draw ($(1.6*6,1.6*2)+(0.8,1.6)$) node[vertexD] {};
      \path (uu62) node[right] {\small$u_{7,7}$};

      \draw[very thick] (vv00) -- (v00) -- (v10) -- (v20)
       -- (v30) -- (v40) -- (vv40) -- (u40) -- (uu40)
       -- (v51) -- (vv51) -- (u51) -- (uu51)
       -- (v62) -- (vv62) -- (u62) -- (uu62);

      \draw[very thick] (vv00) -- (u00) -- (uu00)
        -- (v11) -- (vv11) -- (u11) -- (uu11)
        -- (v22) -- (vv22) -- (u22) -- (uu22)
        -- (v33) -- (vv33) -- (vv43) -- (vv53) -- (vv63) -- (v63);

\end{scope}

\end{tikzpicture}
\caption{Two shortest paths between
$v_{1,2}$ and $v_{7,7}$ and between $v_{1,2}$ and $u_{7,7}$ are shown in black.}\label{fig:wallcorners}
\end{figure}

\begin{lemma}\label{lem:oddcycle}
Assume:
\begin{itemize}
\item $G$ is a graph with a skeleton $\mathcal S= (F,D)$.

\item $H$ is a graph with a coloring $\chi: V(H)\to V(G)\setminus (F\cup
    D)$.

\item $P= P(G, \mathcal S, H, \chi)$ is the product graph as defined in
    Definition~\ref{def:product}.

\item $k$ is the length of a shortest odd cycle in $G$.
\end{itemize}
Then for every cycle $Z$ in $P$ of length $k$, the set
\[
\big\{v \in V(G) \bigmid \text{$(v,a)$ occurs in $Z$ for some $a$}\big\}
\]
induces a cycle in $G$ of length $k$.
\end{lemma}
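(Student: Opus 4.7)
The plan is to exploit the projection $\pi_1:P\to G$, which by Lemma~\ref{lem:pi1hom} is a homomorphism, in combination with Lemma~\ref{lem:oddclosedwalk}. First I would write the cycle $Z$ as
\[
Z:\; (v_1,a_1)\to (v_2,a_2)\to\cdots\to (v_k,a_k)\to (v_1,a_1),
\]
where the pairs $(v_i,a_i)\in V(P)$ are pairwise distinct. Applying $\pi_1$ coordinatewise, each edge $(v_i,a_i)(v_{i+1},a_{i+1})\in E(P)$ maps to an edge $v_iv_{i+1}\in E(G)$ (indices mod $k$), so we obtain a closed walk
\[
W:\; v_1\to v_2\to\cdots\to v_k\to v_1
\]
in $G$ of length exactly $k$.

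Since $k$ is by hypothesis the length of a shortest odd cycle in $G$, the walk $W$ has odd length, and Lemma~\ref{lem:oddclosedwalk} applies: $W$ must in fact be a $k$-cycle of $G$. In particular, the vertices $v_1,\ldots,v_k$ are pairwise distinct. Therefore the set
\[
S:=\bigl\{v\in V(G)\bigmid (v,a)\text{ occurs in }Z\text{ for some }a\bigr\}=\{v_1,\ldots,v_k\}
\]
has exactly $k$ elements and the edges $v_1v_2,v_2v_3,\ldots,v_kv_1$ of $G$ form a cycle of length $k$ on $S$.

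It remains to rule out chords in $G[S]$. Suppose some edge $v_iv_j\in E(G)$ with $j-i\not\equiv\pm 1\pmod k$ existed; then this chord would split the $k$-cycle into two shorter cycles of lengths $a$ and $b$ with $a+b=k+2$, and since $k$ is odd exactly one of $a,b$ is odd and strictly less than $k$. This contradicts the minimality of $k$ as the length of a shortest odd cycle in $G$. Hence $G[S]$ is precisely a $k$-cycle, which is what the lemma asserts. I do not expect any real obstacle here: the only conceptual step is the correct use of $\pi_1$ to reduce to Lemma~\ref{lem:oddclosedwalk}, and the chord exclusion is a standard parity observation.
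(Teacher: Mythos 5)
Your proof is correct and takes essentially the same route as the paper: project $Z$ via $\pi_1$ (Lemma~\ref{lem:pi1hom}) to a closed walk of length $k$ in $G$, then apply Lemma~\ref{lem:oddclosedwalk} to conclude it is a $k$-cycle. The paper treats the remaining chord-exclusion as immediate, whereas you spell out the parity argument explicitly; that extra step is sound and, if anything, makes your write-up slightly more complete.
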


\begin{proof}
Assume $Z: (v_1,a_1) \to (v_2, a_2)\to \cdots \to (v_k, a_k) \to
(v_1, a_1)$. By Lemma~\ref{lem:pi1hom} we conclude that
\begin{equation}\label{eq:walk}
\pi_1(Z): v_1\to v_2 \to \cdots \to v_k \to v_1
\end{equation}
is a closed walk in $G$ of length $k$. Then the result follows immediately
from Lemma~\ref{lem:oddclosedwalk}.
\end{proof}

\begin{proposition}\label{prop:wallrigid}
$\mathcal S^{\rm wall}_{s,t}$ is a rigid skeleton of $\str W_{s,t}$.
\end{proposition}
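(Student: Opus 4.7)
The plan is to adapt the rigidity proof for grids (Proposition~\ref{prop:gridrigid}) to the wall setting, using 5-cycles of $\str W_{s,t}$ in place of 4-cycles as the key rigid substructure, together with Lemma~\ref{lem:oddcycle}. Fix $H$, a colouring $\chi:V(H)\to V(\str W_{s,t})\setminus F$, and an embedding $h$ from $\str W_{s,t}$ to $P=P(\str W_{s,t},\mathcal S^{\rm wall}_{s,t},H,\chi)$. Since $D=\emptyset$ in $\mathcal S^{\rm wall}_{s,t}$, the parts $V_3,V_4$ of $P$ are empty and $V(P)=V_1\cup V_2$ with $V_2=\{(u,u)\mid u\in F\}$; in particular, for every $u\in F$ the only vertex of $P$ with first coordinate $u$ is $(u,u)$. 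It therefore suffices to show that the endomorphism $\pi_1\circ h$ of $\str W_{s,t}$ (cf.~Lemma~\ref{lem:pi1hom}) is the identity on $F$, since then $h(u)=(u,u)$ for every $u\in F$.

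The first half of the plan handles $F_1$ via 5-cycles. Because $5$ is the length of a shortest odd cycle in $\str W_{s,t}$ and, by Lemma~\ref{lem:5cycle}, every such 5-cycle lies inside $F_1\subseteq F$, Lemma~\ref{lem:oddcycle} forces every 5-cycle of $P$ to project under $\pi_1$ onto a 5-cycle contained in $F_1$, so all its vertices are diagonal vertices $(u,u)$ with $u\in F_1$. Since $h$ is an embedding, it sends every 5-cycle of $\str W_{s,t}$ into the diagonal set $\{(u,u)\mid u\in F_1\}$. The 5-cycles in the bottom two rows and the top two rows cover essentially all of $F_1$, and by the injectivity of $h$ and a counting argument we deduce $h(F_1)=\{(u,u)\mid u\in F_1\}$; the few low-degree boundary vertices of $F_1$ that lie on no 5-cycle, such as $u_{s+1,2}$ (whose only neighbour $u_{s+1,3}$ lies in $F_2$), are pinned down afterwards by adjacency arguments. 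Consequently, $\pi_1\circ h$ restricts to an automorphism of the induced subgraph $\str W_{s,t}[F_1]$, and the generators of its automorphism group (horizontal reflection and top/bottom swap) extend to automorphisms of $\str W_{s,t}$; by composing $h$ on the source with such an automorphism we may assume $(\pi_1\circ h)|_{F_1}=\id$.

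The second half of the plan extends this to $F_2$ by replaying the distance analysis of Lemma~\ref{lem:wallframe}. Since $\pi_1\circ h$ is a homomorphism of $\str W_{s,t}$ we have $\dist^{\str W_{s,t}}(\pi_1\circ h(u),w)\le\dist^{\str W_{s,t}}(u,w)$ for every $w\in F_1$, and each vertex of $F_2$ is pinned down by a short list of such constraints. For instance, $u_{1,3}$ is forced first by its adjacency to the already-fixed $u_{1,2}$, which leaves only the three neighbours $\{v_{1,2},v_{2,2},u_{1,3}\}$ of $u_{1,2}$, and then by the distance condition $\dist(\cdot,v_{1,t})\le t-3$, which eliminates $v_{1,2}$ and $v_{2,2}$. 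Peeling off the six column-wise families constituting $F_2$ in turn, exactly as in the corresponding part of the proof of Lemma~\ref{lem:wallframe}, gives $(\pi_1\circ h)|_F=\id$ and hence $h(u)=(u,u)$ for every $u\in F$, as required for rigidity.

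The main obstacle is this last, column-by-column distance analysis for $F_2$: unlike in the grid case the vertices of $F_2$ sit in narrow strips along the two vertical borders, so their mutual distances and their distances to $F_1$ are small and coarse, and pinning each vertex down requires carefully chosen distance witnesses drawn from several different corners of $F_1$. The cleanest organisation is probably to reproduce, almost verbatim, the corresponding portion of the proof of Lemma~\ref{lem:wallframe}, which already carries out exactly this distance book-keeping in the analogous endomorphism setting.
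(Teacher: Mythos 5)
Your proof follows essentially the same route as the paper: exploit that every $5$-cycle of $\str W_{s,t}$ lies in $F_1$ (Lemma~\ref{lem:5cycle}), use Lemma~\ref{lem:oddcycle} to force $h$ to send each $5$-cycle into the diagonal set $\{(u,u)\bigmid u\in F_1\}$, deduce $h(F_1)=\{(u,u)\bigmid u\in F_1\}$ by counting together with the observation $\pi_1^{-1}(F_1)\subseteq V_2$ (possible because $D=\emptyset$), and finally extend to $F_2$ by the distance analysis from Lemma~\ref{lem:wallframe}. Your handling of the degree-one vertices such as $u_{s+1,2}$, which lie on no $5$-cycle, is a worthwhile detail that the paper's ``easy counting'' glosses over.

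However, one intermediate claim is incorrect as stated: you assert that $\mathrm{Aut}\big(\str W_{s,t}[F_1]\big)$ is generated by the horizontal reflection and the top/bottom swap, and that both extend to $\str W_{s,t}$, so that composing $h$ with an automorphism of the wall lets you assume $(\pi_1\circ h)|_{F_1}=\id$. For $t>3$ the induced subgraph $\str W_{s,t}[F_1]$ is disconnected into a bottom strip ($j\le2$) and a top strip ($j\ge t$), so its automorphism group contains maps that act independently on the two components (for instance, reflecting the top strip horizontally while fixing the bottom strip). Such automorphisms do not extend to $\str W_{s,t}$, and hence cannot be composed away. To repair the step you must rule these out, which can be done by noting that $\pi_1\circ h$ is an endomorphism of the entire wall and therefore does not increase $\dist^{\str W_{s,t}}$; an independent reflection of one strip would strictly increase the distance between, say, $v_{1,1}$ (fixed) and the image of $v_{1,t}$, a contradiction. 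With that patch in place the argument is sound, and it matches the paper's strategy.
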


\begin{proof}%[Proof Sketch]
Let $H$ be a graph and $\chi: V(H)\to V(\str W_{s,t})\setminus F$ where
$\mathcal S^{\rm wall}_{s,t}= (F,\emptyset)$. Moreover let $P= P(\str
W_{s,t}, \mathcal S^{\rm wall}_{s,t}, H, \chi)$ and $h$ be an embedding from
$\str W_{s,t}$ to $P$. We need to show that
\begin{equation}\label{eq:wallrigid}
\big\{(u,u) \bigmid u\in F\big\}\subseteq h(V(\str W_{s,t})).
\end{equation}
Again those 5-cycles $Z$ in $\str W_{s,t}$ plays a vital role. Since $h$ is
an embedding, the image $h(Z)$ has to be a 5-cycle in $P$. Hence
Lemma~\ref{lem:oddcycle} implies that $\pi_1(h(Z))$ remains a 5-cycle in
$\str W_{s,t}$. Then $h(F_1)= \big\{(u,u) \bigmid u\in F_1\big\}$ by an easy
counting, and~\eqref{eq:wallrigid} follows again by observing the distance
between $F_1$ and $F_2$.
\end{proof}

Now the remaining part of the proof of the richness of walls \big(i.e.,
Proposition~\ref{prop:gridwallrich}~(ii)\big) is easy, and thus left to the
reader.
%
%\begin{cor}\label{cor:wallrich}
%Let $\cls K$ be a class of graphs such that for every $k\in \mathbb N$ there
%exists a grid $\str W_{s,t}\in \cls K$ with $\min \{s,t\}\ge k$. Then $\cls
%K$ is rich.
%\end{cor}

\section{Conclusions}\label{sec:con}

We have shown that the parameterized embedding problem on the classes of all
grids and all walls is hard for \W 1. Our proof exploits some general
structures in those graphs, i.e., frames and skeletons, thus is more generic
than other known \W 1-hard cases. We expect that our machinery can be used
to solve some other cases. However, it could be seen that the class of
complete bipartite graphs is not rich. Hence the result of~\cite{lin15} is
not a special case of our Theorem~\ref{thm:richW1}. Resolving the
\textbf{Dichotomy Conjecture} for the embedding problem might require a
unified understanding of the cases of biclique and grids.

A remarkable phenomenon of the homomorphism problem is that the polynomial
time decidability of $\Hom(\cls K)$ coincides with the fixed-parameter
tractability of $\pHom(\cls K)$ for any class $\cls K$ of
graphs~\cite{gro07}, assuming $\FPT\ne \W 1$. For the embedding problem this
is certainly not true, as for the class $\cls K$ of all paths $\Emb(\cls K)$
is \NP-hard, yet $\pEmb(\cls K)\in \FPT$. Thus, in the \textbf{Dichotomy
Conjecture}, the tractable side is really in terms of fixed-parameter
tractability. But it is still interesting and important to give a precise
characterization of those $\cls K$ whose $\Emb(\cls K)$ are solvable in
polynomial time. At the moment, we don't even have a good conjecture.

%
% ---- Bibliography ----
%

\bibliographystyle{plain}
\bibliography{grid}

\begin{thebibliography}{10}

\bibitem{aloyuszwi95}
N.~Alon, R.~Yuster, and U.~Zwick.
\newblock Color-coding.
\newblock {\em Journal of the {ACM}}, 42(4):844--856, 1995.

\bibitem{bodgro08}
M.~Bodirsky and M.~Grohe.
\newblock Non-dichotomies in constraint satisfaction complexity.
\newblock In {\em Automata, Languages and Programming, 35th International
  Colloquium, {ICALP} 2008, Reykjavik, Iceland, July 7-11, 2008, Proceedings,
  Part {II} - Track {B:} Logic, Semantics, and Theory of Programming {\&} Track
  {C:} Security and Cryptography Foundations}, pages 184--196, 2008.

\bibitem{chemul14}
H.~Chen and M.~M\"{u}ller.
\newblock One hierarchy spawns another: Graph deconstructions and the
  complexity classification of conjunctive queries.
\newblock In {\em Proceedings of the Joint Meeting of the 23rd EACSL Annual
  Conference on Computer Science Logic and the 29th Annual ACM/IEEE Symposium
  on Logic in Computer Science}, pages 32:1--32:10, 2014.

\bibitem{chethuwey08}
Y.~Chen, M.~Thurley, and M.~Weyer.
\newblock Understanding the complexity of induced subgraph isomorphisms.
\newblock In {\em Automata, Languages and Programming, 35th International
  Colloquium, {ICALP} 2008, Reykjavik, Iceland, July 7-11, 2008, Proceedings,
  Part {I:} Tack {A:} Algorithms, Automata, Complexity, and Games}, pages
  587--596, 2008.

\bibitem{cygfom15}
M.~Cygan, F.~V. Fomin, L.~Kowalik, D.~Lokshtanov, D.~Marx, M.~Pilipczuk,
  M.~Pilipczuk, and S.~Saurabh.
\newblock {\em Parameterized Algorithms}.
\newblock Springer, 2015.

\bibitem{dalkolvar02}
V.~Dalmau, P.~G. Kolaitis, and M.~Y. Vardi.
\newblock Constraint satisfaction, bounded treewidth, and finite-variable
  logics.
\newblock In {\em Principles and Practice of Constraint Programming - {CP}
  2002, 8th International Conference, {CP} 2002, Ithaca, NY, USA, September
  9-13, 2002, Proceedings}, pages 310--326, 2002.

\bibitem{die12}
R.~Diestel.
\newblock {\em Graph Theory, 4th Edition}, volume 173 of {\em Graduate texts in
  mathematics}.
\newblock Springer, 2012.

\bibitem{dowfel99}
R.~G. Downey and M.~R. Fellows.
\newblock {\em Parameterized Complexity}.
\newblock Springer-Verlag, 1999.

\bibitem{dowfel95a}
R.G. Downey and M.R. Fellows.
\newblock Fixed-parameter tractability and completeness {II}: On completeness
  for {W}[1].
\newblock {\em Theoretical Computer Science}, 141:109--131, 1995.

\bibitem{epp99a}
D.~Eppstein.
\newblock Subgraph isomorphism in planar graphs and related problems.
\newblock {\em Journal of Graph Algorithms and Applications}, 3:1--27, 1999.

\bibitem{flugro06}
J.~Flum and M.~Grohe.
\newblock {\em Parameterized Complexity Theory}.
\newblock Springer, 2006.

\bibitem{garjoh79}
M.R. Garey and D.S. Johnson.
\newblock {\em Computers and Intractability: A Guide to the Theory of
  NP-Completeness}.
\newblock Freeman, 1979.

\bibitem{gro07}
M.~Grohe.
\newblock The complexity of homomorphism and constraint satisfaction problems
  seen from the other side.
\newblock {\em Journal of the {ACM}}, 54(1):1:1--1:24, 2007.

\bibitem{groschweseg01}
M.~Grohe, T.~Schwentick, and L.~Segoufin.
\newblock When is the evaluation of conjunctive queries tractable.
\newblock In {\em Proceedings of the 33rd ACM Symposium on Theory of
  Computing}, pages 657--666, 2001.

\bibitem{lin15}
B.~Lin.
\newblock The parameterized complexity of \emph{k}-biclique.
\newblock In {\em Proceedings of the Twenty-Sixth Annual {ACM-SIAM} Symposium
  on Discrete Algorithms, {SODA} 2015, San Diego, CA, USA, January 4-6, 2015},
  pages 605--615, 2015.

\bibitem{marpil13}
D.~Marx and M.~Pilipczuk.
\newblock Everything you always wanted to know about the parameterized
  complexity of subgraph isomorphism (but were afraid to ask).
\newblock {\em ArXiv (CoRR)}, abs/1307.2187, 2013.

\bibitem{mat78}
D.W. Matula.
\newblock Subtree isomorphism in $o(n^{5/2})$.
\newblock In P.~H.~B. Alspach and D.~Miller, editors, {\em Algorithmic Aspects
  of Combinatorics}, volume~2 of {\em Annals of Discrete Mathematics}, pages
  91--106. Elsevier, 1978.

\bibitem{mon85}
B.~Monien.
\newblock How to find longest paths efficiently.
\newblock In {\em Analysis and design of algorithms for combinatorial
  problems}, volume 109 of {\em North Holland Mathematics Studies}, pages
  239--254. North Holland, 1985.

\bibitem{plevoi90}
J.~Plehn and B.~Voigt.
\newblock Finding minimally weighted subgraphs.
\newblock In R.~M{\"o}hring, editor, {\em Graph-Theoretic Concepts in Computer
  Science, WG '90}, volume 484 of {\em Lecture Notes in Compter Science}, pages
  18--29. Springer Verlag, 1990.

\bibitem{gm05}
N.~Robertson and P.D. Seymour.
\newblock Graph minors~{V}. {E}xcluding a planar graph.
\newblock {\em Journal of Combinatorial Theory, Series B}, 41:92--114, 1986.

\bibitem{ull76}
J.R. Ullman.
\newblock An algorithm for subgraph isomorphism.
\newblock {\em Journal of the {ACM}}, 23(1):31--42, 1976.

\end{thebibliography}

%%%%%%%%%%%%%%%%%%%%%%%%%%%%%%%%%%%%%

%\newpage

\end{document}